\newtheorem{theorem}{Theorem}[section]
\newtheorem{lemma}[theorem]{Lemma}
\newtheorem{proposition}[theorem]{Proposition}
\newtheorem{definition}[theorem]{Definition}
\newtheorem{conjecture}{Conjecture}
\renewcommand{\backref}[1]{}
\renewcommand{\backrefalt}[4]{%
\ifcase #1 %
\or
[p.\ #2]%
\else
[pp.\ #2]%
\fi}
\Crefname{conjecture}{Conjecture}{Conjectures}
\DeclareMathOperator{\prob}{\mathrm{Pr}}
\DeclareMathOperator{\expect}{\mathrm{E}}
\DeclareMathOperator{\variance}{\mathrm{Var}}
\DeclareMathOperator{\negl}{\mathrm{negl}(\lambda)}
\DeclareMathOperator{\poly}{\mathrm{poly}(\lambda)}
\DeclarePairedDelimiter\ceil{\lceil}{\rceil}
\DeclarePairedDelimiter\floor{\lfloor}{\rfloor}
\newcommand*\diff{\mathop{}\!\mathrm{d}}
\def\Gen{\textnormal{Gen}}
\def\GenTrap{\textnormal{GenTrap}}
\def\Invert{\textnormal{Invert}}
\def\LWE{\textnormal{LWE}}
\def\MAJ{\textnormal{MAJ}}
\def\Encrypt{\textnormal{Encrypt}}
\def\Decrypt{\textnormal{Decrypt}}
\def\FirstResponse{\textnormal{FirstResponse}}
\def\SecondResponse{\textnormal{SecondResponse}}
\def\Guess{\textnormal{Guess}}
\begin{document}

\title{Lattice-Based Quantum Advantage from Rotated Measurements}

\author{Yusuf Alnawakhtha}
\affiliation{Joint Center for Quantum Information and Computer Science,
University of Maryland, College Park, MD 20742, USA}
\author{Atul Mantri}
\affiliation{Joint Center for Quantum Information and Computer Science,
University of Maryland, College Park, MD 20742, USA}
\affiliation{Department of Computer Science, Virginia Tech, Blacksburg, VA 24060, USA}
\author{Carl A.~Miller}
\affiliation{Joint Center for Quantum Information and Computer Science,
University of Maryland, College Park, MD 20742, USA}
\affiliation{Computer Security Division, National Institute of Standards and Technology, Gaithersburg, MD 20899, USA}
\author{Daochen Wang}
\affiliation{Joint Center for Quantum Information and Computer Science,
University of Maryland, College Park, MD 20742, USA}
\affiliation{Department of Computer Science, University of British Columbia, Vancouver, V6T 1Z4, Canada}

\maketitle

\begin{abstract}
Trapdoor claw-free functions (TCFs) are immensely valuable in cryptographic interactions between a classical client and a quantum server.  Typically, a protocol has the quantum server prepare a superposition of two bit strings of a claw and then measure it using Pauli-$X$ or $Z$ measurements.  In this paper, we demonstrate a new technique that uses the entire range of qubit measurements from the $XY$-plane.  We show the advantage of this approach in two applications. First, building on (Brakerski et al.~2018, Kalai et al.~2022), we show an optimized two-round proof of quantumness whose security can be expressed directly in terms of the hardness of the LWE (learning with errors) problem. Second, we construct a one-round protocol for blind remote preparation of an arbitrary state on the $XY$-plane up to a Pauli-$Z$ correction.
\end{abstract} 

\section{Introduction}

The field of quantum cryptography has its origins \cite{bennett1984,wiesner1983} in the idea that quantum states can be transmitted between two parties (e.g., through free space or through an optical fiber) to perform cryptographic tasks.
Properties of the transmitted states, including no-cloning and entanglement, are the basis for interactive protocols that enable a new and qualitatively different type of security.
However, a recent trend in the field has shown that quantum cryptography can be done even when quantum communication is not available. If one or more parties involved in a protocol possess a sufficiently powerful quantum computer, then certain cryptographic tasks can be performed --- while still taking advantage of uniquely quantum properties --- using strictly classical communication.  This approach relieves the users of the difficulties associated with reliable quantum communication and puts the focus instead on the problem of building a more powerful quantum computer, a goal that has seen tremendous investments during the past several years \cite{national2019quantum}.

At the center of this new type of quantum cryptography are cryptographic hardness assumptions.  Certain problems, such as factoring numbers, are believed to be difficult for classical computers but not for quantum computers.  Other problems, such as finding the shortest vector in a lattice, are believed to be hard for both types of computers.  These hardness assumptions are used to prove soundness claims for quantum interactive protocols.

Two of the seminal papers in quantum cryptography with classical communication \cite{mahadev2018classical,brakerski2018cryptographic} used \textit{trapdoor claw-free functions} \cite{goldwasser1984paradoxical} as the basis for their protocol designs, and created a model that has been followed by many other authors.  A trapdoor claw-free function (TCF), roughly speaking, is a family of triples $(f_0, f_1, t)$, where $f_0$ and $f_1$ are injective functions with the same domain and same range, and $t$ is a trapdoor that allows efficient inversion of either function.  To say that this family is \emph{claw-free} means that without the trapdoor $t$, it is believed to be hard for any (quantum or classical) adversary to find values $x_0$ and $x_1$ such that $f_0 ( x_0 ) = f_1 ( x_1)$.  

The TCF construction illustrates how a cryptographic hardness assumption that is made for both quantum and classical computers can nonetheless permit a quantum computer to show its unique capabilities.  A quantum computer can perform an efficient process that will output a random element $y$ in the range of $f_0, f_1$ together with a \emph{claw state} of the form
\begin{equation}
\label{eq:origpsi}
\ket{ \psi }  =  \frac{1}{\sqrt{2}} \left( \left| x_0 \right> \left| 0 \right> + \left| x_1 \right> \left| 1 \right> \right),
\end{equation}
where $f_0 ( x_0 ) = f_1 ( x_1 ) = y$ (see section~2 of \cite{mahadev2018classical}).  If this state is measured in the $Z$-basis, one obtains a pair $(x, c)$ such that $f_c ( x ) = y$.  Alternatively, assuming that $x_0$ and $x_1$ are expressed as bit strings of length $\ell$, and thus $\ket{\psi}$ is an $(\ell+1)$-qubit state, one can measure in the $X$-basis to obtain a bit string $d$ that must satisfy
\begin{equation}
\label{dcond}
d \cdot (x_0 \oplus x_1 || 1 ) = 0.
\end{equation}
(Here, $||$ denotes string concatenation.)  This equation is significant because we have used a quantum process to obtain information about both $x_0$ and $x_1$, even though we have assumed that it would be impossible for any efficient computer to recover $x_0$ and $x_1$ entirely.  
This fact is the basis for using TCFs to verify that a server that one is interacting with is able to perform quantum computation \cite{brakerski2018cryptographic,kahanamoku2022classically}. The same concept was also used in cryptographic constructions that delegate the preparation of a quantum state to a server without revealing its classical description \cite{cojocaru2019qfactory,gheorghiu2019computationally} and in other cryptographic protocols \cite{metger2021selftest,mahadev2018classical}.

The majority of papers utilizing TCFs in their cryptographic constructions have applied only Pauli measurements and classical operations to the state $\left| \psi \right>$.\footnote{Two exceptions are as~\cite{gheorghiu2019computationally} and \cite{cojocaru2021possibility}. In~\cite{gheorghiu2019computationally}, the server applies Fourier transforms to the quantum state $\left| \psi \right>$. In~\cite{cojocaru2021possibility}, the server applies measurements from a small set in the $XY$-plane and the protocol only provides security against honest-but-curious adversaries. See \cref{subsec:related} for a comparison.}
What would happen if we considered the full range of single-qubit measurements on the state $\ket{\psi}$?  We note that since single-qubit rotation gates are physically native in some platforms (for example, ion traps \cite{nielsenchuang2010book,monroe2016rotation,maslov2017rotation}), realizing a continuous single-qubit rotation is not much more difficult than realizing a single-qubit Clifford gate, and so this direction is a natural one to study. 

In this work, we use an infinite family of qubit measurements to prove new performance and security claims for quantum server protocols.  We discuss two applications: proofs of quantumness and blind remote state preparation.

\subsection{Our Contribution}
\label{subsec:contribution}

\paragraph*{Proof of Quantumness.}
With increasing efforts in recent years towards building quantum computers, the task of verifying the quantum behavior of such computers is particularly important.
Integer factorization, one of the oldest theoretical examples of a quantum advantage~\cite{shor1994algorithms}, is one possible approach to this kind of verification.
However, building quantum computers that are able to surpass classical computers in factoring is experimentally difficult and a far-off  task. Hence, it is desirable to find alternative \emph{proofs of quantumness}\footnote{By proof of quantumness, we mean a specific test, administered by a classical verifier, which an efficient quantum device can pass at a noticeably higher rate than any efficient classical device.} that are easier for a quantum computer to perform. 

The authors of \cite{brakerski2018cryptographic} did groundbreaking work in this direction by offering an interactive proof of quantumness  based on the hardness of LWE (learning with errors). Follow-up work \cite{brakerski2020simpler,kahanamoku2022classically,kalai2022quantum} used their technical ideas to optimize some aspects or provide trade-offs under different assumptions. In this work we provide a proof of quantumness that utilizes rotated measurements on claw states (\cref{eq:origpsi}) to achieve some new tradeoffs. The advantage achieved in our protocol by a quantum device is described in the following theorem.

\begin{theorem}[Informal]\label{thm:proof of quantumness}
Let $\lambda$ denote the security parameter. Suppose that $n, m, q, \sigma, \tau$ are functions of $\lambda$ that satisfy the constraints given in
\cref{fig:parameters} from \cref{sec:prelim}, and suppose that the $\LWE_{n,q,G ( \sigma, \tau )}$ problem is hard.  Then, there exists a two round interactive protocol between a verifier and a prover such that the following holds:
    \begin{itemize}
        \item For any efficient classical prover, the verifier accepts with probability at most $\frac{3}{4} + \negl$.
        \item For any quantum prover that follows the protocol honestly, the verifier accepts with probability at least $\cos^2\left(\frac{\pi}{8}\right) - \frac{5m\sigma^2}{q^2} - \frac{m\sigma}{2\tau}$.
    \end{itemize}
\end{theorem}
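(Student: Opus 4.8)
The plan is to compile the CHSH game into a single-prover, two-round protocol by using an $\LWE$-based trapdoor claw-free family (as in \cite{mahadev2018classical,brakerski2018cryptographic}) in place of the missing second player, along the lines of the compiler of Kalai et al.~\cite{kalai2022quantum}. Since CHSH has classical value $\tfrac34$ and quantum value $\cos^2(\tfrac{\pi}{8})=\tfrac12+\tfrac1{2\sqrt2}$, which already match the two bounds, the task is to show that this cryptographic compilation preserves both values up to the stated errors, and that its security rests on $\LWE$ \emph{alone} (no adaptive hardcore bit needed). Concretely: in round one the verifier runs $\GenTrap$ to sample a function key $k$ with trapdoor $t$, where $k$ also secretly encodes a uniform bit $\alpha$ (Alice's CHSH question) through two key distributions that are computationally indistinguishable under $\LWE$; it sends $k$. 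An honest prover prepares the claw state $\ket{\psi}$ of \cref{eq:origpsi}, measures the image register to get $y$, measures the $\ell$-qubit preimage register in the Hadamard basis to get a string $d$, and returns $(y,d)$, retaining the residual single qubit, which for an exact claw is an equatorial state whose $XY$-angle encodes a linear function of $(d,x_0,x_1)$ (a $\sigma_X$-type answer in one key mode, a $\sigma_Y$-type answer in the other). In round two the verifier sends Bob's CHSH question $\beta\in\{0,1\}$, realized as an $XY$-plane measurement at angle $\theta_\beta\in\{-\tfrac{\pi}{4},\tfrac{\pi}{4}\}$; the prover measures the residual qubit in the basis $\{\tfrac1{\sqrt2}(\ket{0}\pm e^{i\theta_\beta}\ket{1})\}$ to get $v$ and returns it. The verifier inverts $y$ using $t$, computes Alice's answer $a$ as the appropriate linear function of $(d,x_0,x_1)$ for mode $\alpha$, and accepts iff $a\oplus v=\alpha\wedge\beta$.

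\textbf{Completeness.} First I would analyze the idealized case where the prover holds an exact claw state. Then the residual qubit is a single equatorial qubit perfectly correlated (in the $\sigma_X$- or $\sigma_Y$-basis, according to the mode) with the verifier's extracted answer $a$, and the two tilted measurements $\theta_\beta=\pm\tfrac{\pi}{4}$ are Bob's optimal CHSH observables; a direct single-qubit computation then gives acceptance probability exactly $\cos^2(\tfrac{\pi}{8})$ for all four $(\alpha,\beta)$. Passing to the real $\LWE$-based family, the prepared state deviates from an exact claw and the residual qubit is slightly perturbed; bounding this using the discrete-Gaussian flatness and noise-flooding estimates for the family (the lemmas in \cref{sec:prelim}, for parameters satisfying \cref{fig:parameters}) costs $O(m\sigma^2/q^2)$ from the non-uniformity of the claw amplitudes and $O(m\sigma/\tau)$ from the rounding used to make the equation bit extractable; collecting constants yields acceptance at least $\cos^2(\tfrac{\pi}{8})-\tfrac{5m\sigma^2}{q^2}-\tfrac{m\sigma}{2\tau}$.

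\textbf{Soundness.} Suppose a polynomial-time classical prover $(P_1,P_2)$ is accepted with probability $\tfrac34+\varepsilon$. Because the two key modes are computationally indistinguishable under $\LWE$, moving to a hybrid in which $\alpha$ is resampled uniformly and independently of $k$ changes the acceptance probability by at most $\negl$; in that hybrid the first message $(y,d)=P_1(k)$ and the second message $v=P_2(k,y,d,\beta)$ contain no information about $\alpha$ beyond what is fixed by $(k,y,d)$. Conditioning on $(k,y,d)$: the trapdoor fixes $x_0,x_1$ and hence the two candidate Alice answers $a_0,a_1$ (for $\alpha=0,1$), while $P_2$ fixes the two candidate answers $v_0,v_1$ (for $\beta=0,1$), and the elementary local-hidden-variable bound for CHSH shows $a_\alpha\oplus v_\beta=\alpha\wedge\beta$ for at most $3$ of the $4$ pairs $(\alpha,\beta)$. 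Averaging over uniform independent $(\alpha,\beta)$ and over the conditioning gives acceptance $\le\tfrac34+\negl$ in the hybrid, so $\varepsilon=\negl$; otherwise the discrepancy between the real protocol and the hybrid is an efficient $\LWE$ distinguisher.

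\textbf{Main obstacle.} The delicate point is the soundness reduction: the verifier must keep using the \emph{true} trapdoor $t$ while we argue that resampling the hidden mode bit is undetectable, so the hybrid has to be arranged so that acceptance is still evaluated correctly and any non-negligible change across it is converted into a genuine $\LWE$ distinguisher --- this is exactly where the joint constraints in \cref{fig:parameters} are needed, to guarantee correctness of inversion and mode-indistinguishability simultaneously. A secondary, more mechanical difficulty is the completeness bound, where the equatorial residual qubit must be tracked through the $\LWE$ noise precisely enough to produce the stated constants $\tfrac{5m\sigma^2}{q^2}$ and $\tfrac{m\sigma}{2\tau}$ rather than mere order-of-magnitude estimates.
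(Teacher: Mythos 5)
There is a genuine gap, and it sits at the heart of both bounds.

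\textbf{Completeness.} Your honest prover measures the preimage register of the claw state ``in the Hadamard basis'' and you then assert that the residual qubit is ``a $\sigma_X$-type answer in one key mode, a $\sigma_Y$-type answer in the other.'' No mechanism is given for this, and in fact none exists for the prover you describe: the claw state $\tfrac{1}{\sqrt{2}}(\ket{x_0}\ket{0}+\ket{x_1}\ket{1})$ has real amplitudes, so after $X$-basis measurements of the preimage qubits the residual qubit is $\tfrac{1}{\sqrt{2}}(\ket{0}+(-1)^{d\cdot(x_0\oplus x_1)}\ket{1})$ --- an $X$-eigenstate in \emph{both} key modes, since the prover's operations do not depend on the hidden bit in any useful way. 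With the residual qubit's $X$-eigenvalue computable by the verifier from the transcript and trapdoor, the compiled game collapses to an essentially classical correlation and its value is capped at $3/4$ (the strategy you describe actually scores about $\tfrac34\cos^2(\pi/8)+\tfrac14\sin^2(\pi/8)\approx 0.68$), so the claimed bound $\cos^2(\pi/8)-\tfrac{5m\sigma^2}{q^2}-\tfrac{m\sigma}{2\tau}$ cannot be reached. The paper's central device, which your proposal omits, is precisely the replacement of the Hadamard measurement by ciphertext-dependent $XY$-plane measurements: the hidden bit $b$ is Regev-encrypted as $(a,w)$ with offset $b\lfloor q/4\rfloor$, the prover measures the $k$th preimage qubit at angle $2^j\pi a_i/q$ and applies a $Z$-rotation by $2\pi w/q$ to the last qubit (\cref{fig:prover}, analyzed via \cref{prop:last_qubit}); the $a\cdot s$ terms cancel and the plaintext offset rotates the residual qubit by approximately $\pi b/2$, i.e.\ into the $X$- or $Y$-basis according to $b$, with eigenvalue bit $d=u\cdot([x_0]\oplus[x_1])$ recoverable by the verifier. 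This is what makes the honest value $\cos^2(\pi/8)$ attainable, and it is also where the specific error terms come from (the $\tfrac{m\sigma}{2\tau}$ term bounds the probability that $y$ fails to have two preimages, and the $\tfrac{5m\sigma^2}{q^2}$ term bounds the phase error $\pi f^{\top}e/q$ together with the $\lfloor q/4\rfloor$ rounding), not from ``claw-amplitude non-uniformity'' and ``equation-bit rounding'' as you suggest.

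\textbf{Soundness.} Your hybrid step --- resample the hidden question $\alpha$ independently of the key and charge the change to \LWE{} --- is exactly the step you yourself flag as the ``main obstacle,'' and it is not resolved by the parameter constraints in \cref{fig:parameters}. The acceptance predicate requires the trapdoor (to invert $y$ and extract Alice's answer) and the true hidden bit, so a reduction that is supposed to distinguish the two key/ciphertext modes cannot evaluate which hybrid it is in; as written, the argument is circular. The paper breaks this circularity with a rewinding/majority-vote argument adapted from Kalai et al.: one first recasts the classical adversary as two players (\cref{fig:chshmod}), then replaces the trapdoor-using player by one who \emph{estimates} the decrypted answer by running the adversary's second-round algorithm $\lambda$ times on fresh random challenges and taking a majority (\cref{fig:chshmod2}), losing only $\exp(-\Omega(\sqrt{\lambda}))$ by a Chernoff bound. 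Only after the trapdoor has been eliminated can the ciphertext be switched to an encryption of $0$ via IND-CPA security (\cref{prop:regev2}), at which point the players are genuinely playing CHSH without communication and the $3/4$ bound applies. Your conditional ``at most 3 of the 4 pairs'' count is the right endgame, but without the rewinding step (or some substitute that lets the reduction score the game without the secret key) the reduction to \LWE{} does not go through.
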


The protocol for this theorem is referred to as Protocol~$\mathbf{Q}$ and is given in \cref{fig:pfprot}.
Noting that $\cos^2 ( \pi / 8 ) > 0.85 > \frac{3}{4}$, we  deduce that as long as the error term $\frac{5m\sigma^2}{q^2} + \frac{m\sigma}{2\tau}$ vanishes as $\lambda \to \infty$, a constant gap is achieved between the best possible quantum winning probability and the best possible classical winning probability.  
In \cref{subsec:param}, we show that this vanishing condition can be achieved while taking the modulus $q$ to be only slightly asymptotically larger than $n^2 \sigma$ (where the parameter $n$ is the dimension of the LWE problem, and $\sigma$ is the noise parameter).
Our approach thus allows us to base the security of our protocol on the LWE problem for a broad range of parameters, including parameters that are comparable to those used in post-quantum lattice-based cryptography.
(For example, in the public-key cryptosystem in \cite{regev2009lwe}, the modulus is taken to be on the order of $n^2$.)
Previous works on interactive lattice-based proofs of quantumness have tended to use a modulus that is either very large or not explicitly given.

Our proof builds on recent work, and most directly builds on \cite{kalai2022quantum}. In comparison to the proof of quantumness described in \cite{kalai2022quantum}, our protocol involves preparing and measuring only one TCF claw state at each iteration, whereas the proof described in (\cite{kalai2022quantum}, Section~4.1) requires preparing and measuring three TCF claw states (while maintaining quantum memory throughout).  Additionally, whereas \cite{kalai2022quantum} requires the use of quantum homomorphic encryption schemes proved by other authors (and does not give explicit parameters) our proof is self-contained and directly relates the security of our protocol to the hardness of the underlying LWE problem.  At the same time, our approach inherits the following merits from \cite{kalai2022quantum}: our protocol involves only $2$ rounds of interaction, it requires only one qubit to be stored in memory in between rounds, and it does not require any cryptographic assumptions beyond LWE.

As far as we are aware, the combination of features in our work has not been achieved before (see \cref{subsec:related}), and our results thus bring the community closer to establishing the minimal requirements for a lattice-based proof of quantumness. As experimental progress continues~\cite{zhu2021interactive}, there is good reason to think that these proofs of quantumness may be realizable in the near future.

\paragraph*{Remote State Preparation.}
Remote state preparation (RSP) is a protocol where a computationally weak client delegates the preparation of a quantum state to a remote server. An RSP protocol is blind if the server does not learn a classical description of the state in the process of preparing it \cite{dunjko2012blind}. Recently, \cite{cojocaru2019qfactory} and \cite{gheorghiu2019computationally} introduced blind RSP with a completely classical client, based on the conjectured quantum hardness of the LWE problem. Blind RSP has become an essential subroutine to \emph{dequantize} the quantum channel in various quantum cryptographic applications including blind and verifiable quantum computations~\cite{broadbent2009universal,gheorghiu2019computationally,badertscher2020security}, quantum money~\cite{radian2022semi}, unclonable quantum encryption~\cite{gheorghiu2022quantum}, quantum copy protection~\cite{gheorghiu2022quantum}, and proofs of quantumness~\cite{morimae2022proofs}.

All previous RSP protocols prepare a single-qubit state $ \frac{1}{\sqrt{2}}(\ket{0} + e^{i \theta} \ket{1})$, where $\theta$ belongs to some fixed set $S \subseteq [0,2\pi)$. However, a common feature among all the schemes is that either the size of $S$ must be small, or the basis determined by $\theta$ is not fixed a priori.  Therefore, a natural question in this context is: \\

\emph{Can a completely classical client delegate the preparation of arbitrary single-qubit states to a quantum server while keeping the basis fixed?}  \\

\noindent Ideally, we would like to achieve this task in a single round of interaction. Note that the previous RSP protocols along with \emph{computing on encrypted data} protocols such as~\cite{broadbent2009universal,fisher2014quantum} can realize this task in two rounds of interaction. In this work, we provide a simple scheme for \emph{deterministic} blind RSP that achieves this task without incurring any additional cost compared to previous \emph{randomized} RSP schemes. Our protocol only requires one round of interaction to prepare any single qubit state in the $XY$-plane (modulo a phase flip). This is particularly helpful for applications which require a client to delegate an encrypted quantum input, as it gives the client more control over the state being prepared. The correctness and blindness of the protocol are summarized in the theorem below.

\begin{theorem}[Informal]\label{remote state preparation}
    Let $\lambda$ denote the security parameter. Suppose that $n, m, q, \sigma, \tau$ are functions of the security parameter $\lambda$ that satisfy the constraints given in
    \cref{fig:parameters} from \cref{sec:prelim} and $\tau \geq 2m\sigma$, and suppose that the $\LWE_{n,q,G ( \sigma, \tau )}$ problem is hard.  Then there exists a one-round client-server remote state preparation protocol such that for any $\alpha \in \mathbb{Z}_q$, the client can delegate the preparation of the state  $\ket{\alpha}=\frac{1}{\sqrt{2}}\left( \ket{0} + e^{i \, 2\pi \alpha / q} \ket{1} \right)$ with the following guarantees:
    \begin{itemize}
      \item If the server follows the protocol honestly, then they prepare a state $\ket{\beta}$ such that 
      $\bigl\|\ketbra{\alpha}- Z^b \ketbra{\beta}Z^b \bigr\|_1 \leq \frac{4\pi m \sigma}{q}$,
      where $\norm{\cdot}_1$ denotes the trace norm and $b\in \{0,1\}$ is a random bit that the client can compute after receiving the server's response.
      \item The server gains no knowledge of $\alpha$ from interacting with the client. 
    \end{itemize}
\end{theorem}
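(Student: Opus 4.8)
The plan is to have the client sample an LWE-based trapdoor claw-free function with a carefully chosen superposition-preparation step, and then have the server prepare a claw state and measure the $x$-register in a \emph{rotated} basis determined by the target angle $2\pi\alpha/q$. Concretely: the client runs $\GenTrap$ to obtain a function key $k$ together with its trapdoor $t$, and sends $k$ (with no dependence on $\alpha$ in the key itself—this is what will give blindness) together with a description of which $XY$-plane measurement to perform, where the measurement angle encodes $\alpha$. The server evaluates the TCF on a uniform superposition, measures the image register to collapse to a claw state $\frac{1}{\sqrt 2}(\ket{x_0}\ket 0 + \ket{x_1}\ket 1)$ of the form in \cref{eq:origpsi}, measures the $x$-register qubits in the Hadamard basis to extract a string $d$ satisfying \cref{dcond}, and measures the final control qubit in the basis $\{\frac{1}{\sqrt2}(\ket0 \pm e^{i\theta}\ket1)\}$ for the appropriate $\theta$, reporting $d$ and the outcome bit. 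The client uses $t$ to recover $x_0,x_1$, hence $x_0\oplus x_1$, hence $d\cdot(x_0\oplus x_1)$, which determines the Pauli-$Z$ correction bit $b$.

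The correctness argument proceeds in two stages. First, in the idealized setting where the LWE error is zero, I would show by direct computation that after the server measures all but the last qubit, the residual single-qubit state is exactly $\frac1{\sqrt2}(\ket0 + e^{i2\pi\alpha/q}(-1)^{d\cdot(x_0\oplus x_1)}\ket1)$ up to a known relative phase, so that the $Z^{d\cdot(x_0\oplus x_1)}$ correction yields precisely $\ket\alpha$; the key point is that the relative phase $e^{i\theta(x_0)}/e^{i\theta(x_1)}$ between the two branches, where $\theta(x)$ is the per-coordinate rotation applied to $\ket x$, telescopes to $e^{i2\pi\alpha/q}$ by the way the angles are assigned to the lattice points. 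Second, I would bound the effect of nonzero LWE error: the actual function has $f_b(x) = Ax + e_b$-type images, so the "claw" is only an approximate claw and the two preimages $x_0, x_1$ differ from the ideal relation by an error term of size $O(m\sigma)$ in each coordinate; propagating this through the rotation angles changes the accumulated phase by at most $O(2\pi m\sigma/q)$, and a standard trace-distance estimate for single-qubit states with a phase discrepancy $\delta$ gives $\|\ketbra\alpha - Z^b\ketbra\beta Z^b\|_1 = |\sin\delta| \le \delta$, yielding the claimed $4\pi m\sigma/q$ bound (the factor $4\pi$ absorbing the constant from the per-coordinate error tail and the $\tau\ge 2m\sigma$ hypothesis, which is what guarantees the error stays in the regime where preimage recovery and the phase estimate are valid).

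For blindness, the observation is that the client's message to the server consists only of the function key $k$ and a measurement-angle description, and $\alpha$ enters solely through the angle. I would argue that under $\LWE_{n,q,G(\sigma,\tau)}$ the key $k$ is computationally indistinguishable from one sampled without a trapdoor, and—more to the point—that the distribution of the entire client message is \emph{independent} of $\alpha$ up to this indistinguishability: the rotation angles are defined from $k$ and $\alpha$ in such a way that, for a random key, the induced distribution on angle-descriptions does not depend on $\alpha$ (intuitively because the secret LWE vector acts as a one-time pad on the encoding of $\alpha$). Hence any server's view can be simulated without $\alpha$, which is the definition of blindness claimed in the theorem.

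The main obstacle I expect is the correctness analysis in the presence of LWE noise—specifically, controlling the accumulated phase error across all $m$ (or $n$) coordinates while simultaneously ensuring the trapdoor still recovers a consistent pair $(x_0,x_1)$. This is where the precise parameter constraints from \cref{fig:parameters} and the hypothesis $\tau \ge 2m\sigma$ must be used carefully: one needs the error to be small enough that (i) the rounding in the image register is unambiguous, (ii) $x_0$ and $x_1$ are uniquely determined, and (iii) the per-coordinate phase deviations sum to something strictly below a constant, so that the final $|\sin\delta|\le\delta$ bound is meaningful. Establishing (i)–(iii) simultaneously, rather than each in isolation, is the delicate part; everything else is either a short unitary computation or a direct invocation of LWE hardness.
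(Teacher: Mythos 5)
There is a genuine gap: your sketch never pins down the mechanism by which $\alpha$ actually gets into the prepared qubit while staying hidden, and the two places where you gesture at it are mutually inconsistent. In the protocol description you have the server measure the $x$-register in the Hadamard basis and then \emph{measure} the final control qubit at an angle encoding $\alpha$ --- but if the last qubit is measured, no state is prepared at all, and if the client simply announces an angle that encodes $\alpha$, blindness is lost. In the correctness paragraph you instead invoke per-coordinate rotations on $\ket{x}$ whose relative phase ``telescopes to $e^{i2\pi\alpha/q}$''; but that telescoped phase is $\langle \theta,[x_0]\rangle-\langle\theta,[x_1]\rangle$, which is determined by $x_0-x_1=-s$ (the LWE secret), not by $\alpha$. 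To force it to equal $2\pi\alpha/q$ you would need the measurement angles to satisfy a linear relation with $s$ (e.g.\ $\langle t,s\rangle=-\alpha$), and then your blindness argument collapses into an unproven ``LWE with a hint'' claim rather than plain LWE. The paper's resolution (\cref{fig:rsp protocol}) is different and is the missing idea: the client Regev-encrypts $\alpha$ as $(a,w)=(f^{\top}A,\,f^{\top}v-\alpha)$; the server measures the $x$-register qubits in $XY$-plane bases at angles $2^{j}\pi a_i/q$ (so by \cref{prop:last_qubit} the retained qubit carries phase $-2\pi\langle a,s\rangle/q+b\pi$), and then applies a $Z$-rotation by $2\pi w/q$ to the \emph{unmeasured} last qubit, which cancels $\langle a,s\rangle$ and leaves $2\pi\alpha/q+2\pi f^{\top}e/q+b\pi$. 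Blindness is then exactly IND-CPA of Regev's scheme, proved via LWE plus the leftover hash lemma applied to $(f^{\top}A,f^{\top}v)$ (\cref{thm:rsp_blindness}); your ``one-time pad'' intuition is pointing at this, but without the leftover-hash step it is not a proof.

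Your error analysis is also aimed at the wrong source of error. When inversion succeeds the preimages satisfy $x_1=x_0+s$ \emph{exactly} (\cref{prop:trap,prop:veryinj}); the preimages are not ``off by $O(m\sigma)$ per coordinate.'' The $\tfrac{4\pi m\sigma}{q}$ bound in \cref{thm:rsp completeness} comes entirely from the ciphertext noise term $2\pi f^{\top}e/q$ in the final phase: the trace distance between the two pure qubit states is $2\abs{\sin(\pi f^{\top}e/q)}\le 2\pi\abs{f^{\top}e}/q\le 2\pi\norm{e}_1/q$ (note your identity $\norm{\ketbra{\alpha}-Z^{b}\ketbra{\beta}Z^{b}}_1=\abs{\sin\delta}$ is off; it is $2\abs{\sin(\delta/2)}$), and the hypothesis $\tau\ge 2m\sigma$ is used not to keep phases ``in a valid regime'' but to guarantee $\Pr[\text{no abort}]\ge 3/4$ so that $\expect[\norm{e}_1\mid\text{no abort}]\le 2m\sigma$ (\cref{lem:no_abort_rsp}). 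So while your overall architecture (claw state, rotated measurements, trapdoor-computed $Z^{b}$ correction) matches the paper's, the proposal as written omits the ciphertext-driven final rotation that makes correctness and blindness simultaneously achievable, and the quantitative error argument would not produce the stated bound.
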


\paragraph*{Additional Applications and Future Directions.} The techniques in this paper invite application to other cryptographic tasks, including verifiable random number generation, semi-quantum money, and encrypted Hamiltonian simulation. In~\cref{sec:future directions}, we discuss possible future research directions for these three topics, and we offer some preliminary results. 

There is ample room for further optimization of our results on lattice-based proofs of quantumness.  One of the advantages of our proof of soundness for Protocol~$\mathbf{Q}$ (\cref{thm:pfsound}) is that it is based directly on the security of a simple lattice-based encryption scheme (\cref{fig:kem}). (This is in contrast to the proof of quantumness in \cite{brakerski2018cryptographic}, in which the soundness proof is based on the ``adaptive hardcore bit property'' and its relationship to the security of LWE encryption schemes is more indirect.)  A natural next step to improve our results would be to try to replace the encryption scheme in \cref{fig:kem} with a more efficient single-bit encryption scheme. In particular, a Ring-LWE or Module-LWE based encryption scheme could allow for substantially less quantum computation time for an honest prover in Protocol~$\mathbf{Q}$.

\subsection{Related Works}
\label{subsec:related}

\paragraph{Proof of quantumness.} The study of proofs of quantumness based on LWE was initiated by Brakerski et al. \cite{brakerski2018cryptographic} who proposed a four-message (two-round) interactive protocol between a classical verifier and a prover. Their protocol
also involves constructing only a single TCF claw state (like in our protocol), although it requires holding the entire claw state in memory between rounds, and it uses an exponentially large modulus.\footnote{One effect of using an exponentially large modulus is on hardness assumptions.  If we phrase our hardness asssumptions in terms of the shortest vector problem in a lattice, then \cite{brakerski2018cryptographic} assumes the hardness of sub-exponential approximation of the shortest vector, while in the current work we only assume that polynomial approximation is hard.  See \cref{subsec:param}.} Later, \cite{brakerski2020simpler} gave a two-message (one-round) proof of quantumness with a simpler and more general security proof, but at the cost of requiring the random oracle assumption.
More recently, \cite{kahanamoku2022classically} introduced a proof of quantummness with some of the same features as \cite{brakerski2020simpler}  without the random oracle assumption, but they require up to $6$ messages in their protocol ($3$ rounds of interaction). 
Both \cite{amos2020one} and \cite{yamakawa2022verifiable} present constructions of \textit{publicly verifiable} proofs of quantumness, albeit with different assumptions or models.
Further works have based proofs of quantumness on different assumptions \cite{morimae2022proofs}, optimized the depth required for implementing TCFs~\cite{liu2021depth,hirahara2021test}, and even achieved prototype experimental realization~\cite{zhu2021interactive}.

The paper \cite{kalai2022quantum} (which is the 
main predecessor to this work) presented a generic compiler that turns any non-local game into a proof of quantumness and gave an explicit scheme that only requires $4$ messages ($2$ rounds of interaction).  These results assume the existence of a quantum fully homomorphic encryption scheme that satisfies certain conditions (see Definition 2.3 in \cite{kalai2022quantum}), and the authors sketch proofs that two known schemes \cite{mahadev2018classical, brakerski2018quantum} 
satisfy such conditions. 

An alternative route to proving a result similar to  \cref{thm:proof of quantumness} would be to expand the aforementioned proof sketches (see the appendix of the full version of \cite{kalai2022quantum}) and use them to create a proof of quantumness with explicit parameters.  However, we expect this approach to be less optimal both in mathematical complexity and computational complexity.  Combining \cite{kalai2022quantum} with \cite{mahadev2018classical} would involve an exponentially sized modulus $q$, and would require preparing and measuring three claw-states during a single iteration of the central protocol.  Combining \cite{kalai2022quantum} with \cite{brakerski2018quantum} would also require preparing three states of size similar to claw states, and would involve performing a $q$-ary Fourier transform on each state.  In contrast, our approach involves only preparing a single-claw state and then performing single qubit operations on it.

\paragraph{Blind RSP.}  Remote state preparation over a classical channel comes in two security flavors --- blind RSP and blind-verifiable RSP (in this work, we give a protocol for the former). Such a primitive was first introduced in~\cite{cojocaru2021possibility} for honest-but-curious adversaries. This was later extended to fully malicious adversaries in \cite{cojocaru2019qfactory}, where the authors present two blind RSP protocols, one of which allows the client to delegate the preparation of a BB84 state  $\left(\{\ket{0}, \ket{1},\ket{+},\ket{-}\}\right)$ and the other allows the client to delegate the preparation of one of $8$ states in the $XY$-plane. The former protocol has the advantage of allowing the client to choose if the server is preparing a computational $\left( \{\ket{0}, \ket{1} \} \right)$ or a Hadamard $\left( \{\ket{+}, \ket{-} \} \right)$ basis state. However, it is not clear how to generalize the scheme to prepare quantum states from a large set while maintaining control over the choice of basis. Independently,~\cite{gheorghiu2019computationally} gives a blind-verifiable RSP scheme that generalizes~\cite{brakerski2018cryptographic}, where the blindness is based on the adaptive hardcore bit property. The protocol in~\cite{gheorghiu2019computationally} can prepare one of $10$ states: $8$ in the $XY$-plane and the two computational basis states. There is a natural way to generalize \cite{gheorghiu2019computationally} to prepare states of the form $\frac{1}{\sqrt{2}}(\ket{0}+\exp(i \, 2\pi x/q)\ket{1})$ with $x,q \in \mathbb{N}$. However, the naturally generalized protocol requires an honest quantum server to apply a Fourier transform over $\mathbb{Z}_q$ on the claw state,
whereas we only require the server to perform single-qubit gates on the claw state for any $q$. Moreover, the quantity $x$ in the prepared state  is randomly chosen in~\cite{gheorghiu2019computationally} whereas our protocol allows the client to choose $x$. More recently, ~\cite{morimae2022proofs} constructs an RSP protocol from different cryptographic assumptions (full domain trapdoor permutations); however, the blindness is only shown against classical adversaries.

Previous RSP protocols have proven to be immensely useful in several cryptographic applications, ranging from proofs of quantumness~\cite{morimae2022proofs} and verification of quantum computation~\cite{gheorghiu2019computationally,zhang2022classical} to computing on encrypted data~\cite{badertscher2020security,gheorghiu2022quantum}, secure two-party quantum computation~\cite{ciampi2020secure} and more. Finally, RSP protocols have been extended to self-testing protocols ~\cite{metger2021selftest,mizutani2022selftest,fwz2022selftest}. A self-testing protocol characterizes not only the state prepared but also the measurements made upon them.

\section{Technical Overview}
\label{sec:technical overview}

Our protocol involves performing rotated measurements on the claw states put forward by \cite{brakerski2018cryptographic} to steer the final qubit of the claw state into a specific form, while keeping this form secret by hiding it using LWE. Suppose that $n, m, q$ are positive integers.  The \textit{Learning With Errors} (LWE) hardness assumption implies that if a classical client chooses a uniformly random vector $s \in \mathbb{Z}_q^n$ and a uniformly random matrix $A \in \mathbb{Z}_q^{m \times n}$, and computes $v \coloneqq As + e$ where $e \in \mathbb{Z}_q^m$ is a small noise vector, then a quantum server  cannot recover $s$ from $(A, v)$.  Following \cite{brakerski2018cryptographic} the server can (for certain parameters) nonetheless approximately prepare a superposed \emph{claw state} of the form $\ket{\gamma} = \frac{1}{\sqrt{2}} \bigl( \ket{x_1}\ket{1} + \ket{x_0} \ket{0} \bigr),$
where $x_0 \in \mathbb{Z}_q^n$ and $x_1 = x_0 + s$, along with a vector $y \in \mathbb{Z}_q^m$ which is close to both $Ax_0$ and $Ax_1$.  We will assume that $x_0$ and $x_1$ are written out in base-$2$ using little-endian order.

At this point, rather than having the server measure $|\gamma\rangle$ in the $X$-basis, we can go in a different direction: suppose  that the client instructs the server to measure the $k^{\text{th}}$ qubit of $|\gamma\rangle$ in the basis $ (\cos \theta_k ) X + (\sin \theta_k ) Y$ where $\theta_k$ are real numbers for $k = 1, 2, \ldots, n \lceil \log q \rceil$, and report the result as a binary vector $u = (u_1, \ldots, u_{n \lceil \log q \rceil}).$
Once this is done, the state of the final remaining qubit will be $\frac{1}{\sqrt{2}} (\left| 0 \right> + e^{i \phi} \left| 1 \right>)$,
where
\begin{equation*}
\phi \coloneqq \langle \theta, [x_0]  \rangle - \langle \theta, [x_1] \rangle + \langle u, [x_0] \oplus [x_1] \rangle \cdot \pi.
\end{equation*}
Here $\langle  \cdot,\cdot \rangle$ denotes the dot product, and $[x_0]$ and $[x_1]$ denote the base-$2$ representations of $x_0$ and $x_1$.  
Since the quantum server cannot know both $x_0$ and $x_1$, they cannot compute $\phi$ from this formula.  However, if the client possesses a trapdoor to the original matrix $A$, then they can recover $x_0$ and $x_1$ from $y$ and compute $\phi$.

We can go further: if the client chooses a vector $t = (t_1, \ldots, t_n ) \in \mathbb{Z}_q^n$ and sets $\theta$ by the formula $\theta_{(i-1)n+j} \coloneqq 2^j t_i  \pi/q$ for $i \in \{ 1, 2, \ldots, n\}$ and $j \in \{ 1, 2, \ldots, \lceil \log q \rceil \}$, then
\begin{eqnarray*}
\phi &  = & \langle t, x_0 - x_1  \rangle \cdot (2 \pi / q ) +  \langle u, [x_0] \oplus [x_1]  \rangle \cdot \pi \\
& = & - \langle t, s \rangle \cdot (2 \pi / q ) +  \langle u, [x_0] \oplus [x_1]  \rangle \cdot \pi.
\end{eqnarray*}
The server thus computes a qubit that encodes a prescribed linear function $\langle t , s \rangle$ of $s$, modulo a possible phase flip that is known to the client -- see \cref{sec:rotated measurements}. 
At the same time, LWE-hardness guarantees that the vector $s$ remains unknown to the server.
(This can be seen an enhancement of the approach described in subsection~1.4 in \cite{cojocaru2021possibility}.  We have used a different measurement strategy in order to gain more control over the prepared qubit.)

We summarize our two applications of this idea, starting with blind remote state preparation (blind RSP).

\paragraph*{Blind RSP.}
We will denote the state that the client wants the server to prepare (modulo a phase flip) as $\ket{\alpha} \coloneqq \frac{1}{\sqrt{2}} \bigl(\ket{0} + e^{i \, 2\pi \alpha /q} \ket{1} \bigr)$ for some $\alpha \in \mathbb{Z}_q$ of the client's choice. The way we will ensure blindness with respect to $\alpha$ is by encrypting it using Regev's encryption scheme, described in \cref{subsec:kem}, and having the server use the ciphertext to prepare the state. To encrypt $\alpha$ as such the client requires, in addition to an LWE instance, a uniformly sampled random vector $f \leftarrow \{0,1\}^m$. The client then computes $f^\top (As+e) + \alpha$ and sends $(A, As+e)$ and $(a,w) \coloneqq \left(f^\top A, f^\top(As+e) + \alpha\right)$ to the server.

The server uses $(A, As+e)$ to create a claw state, which also yields the image $y$ of the claw, and then measures the claw state using the vector $a \in \mathbb{Z}_q^n$ as described above. Then the server rotates the final qubit around the $Z$-axis of the Bloch sphere by $2\pi w /q$. From the discussion above we see that the resulting state will be $\frac{1}{\sqrt{2}}\bigl(\ket{0} + e^{i\beta} \ket{1} \bigr)$ where
\begin{align*}
    \beta \coloneqq& - \langle  a , s  \rangle \cdot (2 \pi / q ) +  \langle  u, [x_0] \oplus [x_1] \rangle \cdot \pi + 2\pi w/q 
    \\
    =& - 2 \pi f^\top A s/ q  + 2\pi \bigl(f^\top (A s+e) + \alpha \bigr)/ q+  \langle  u, [x_0] \oplus [x_1] \rangle \cdot \pi
    \\
    \approx & \;2\pi \alpha / q+  b \cdot \pi,
\end{align*}
by denoting $b\coloneqq \langle  u, [x] \oplus [x_1] \rangle$ and letting $e/q$ be small (a more explicit calculation is provided in \cref{subsec:rsp completeness}). The final state held by the server is $Z^b \ket{\alpha}$, as desired. Finally, the server sends the measurements $y$ and $u$ to the client who can use them along with the trapdoor of $A$ to learn $b$. Note that while the client has control over $\alpha$, they do not have control over the bit $b$ as it is a function of the server's measurements.

The blindness property of the protocol is derived from the security of Regev's encryption scheme, which is based on the hardness of LWE. The information that the server receives during the protocol is $\left( A, As+e \right)$ and $\bigl(f^{\top}A, f^{\top}(As+e) + \alpha \bigr)$. The first pair is the public key in Regev's encryption scheme and the second pair is the ciphertext encrypting the message $\alpha$. Hence, if an adversary can guess $\alpha$ in our RSP protocol then they can break Regev's encryption scheme.

\paragraph*{Proof of Quantumness.} Our proof of quantumness is based on the  CHSH game~\cite{clauser1969proposed}, which is a game played with a referee and two players (Charlie and David) who cannot communicate with each other. The referee sends bits $b$ and $b'$ chosen uniformly at random to Charlie and David respectively. Charlie and David are required to return bits $d$ and $d'$ to the referee, who decides that they win if and only if $d\oplus d' = b\wedge b'$.

Recent work \cite{kalai2022quantum} has proposed a version of the nonlocal CHSH game that a single server (Bob) can play with a classical client  (Alice) to prove the quantumness of the server.  However, the protocol in \cite{kalai2022quantum} requires the server to evaluate a controlled-Hadamard gate under a homomorphic encryption scheme, which requires preparing and measuring three claw states while maintaining qubits in quantum memory. By combining the ideas in \cite{kalai2022quantum} with our RSP protocol, we obtain the following proof of quantumness protocol that only requires preparing and measuring one claw state.

Our central protocol (\cref{fig:pfprot} and \cref{fig:prover}) is roughly the following.
Alice begins by choosing bits $b$ and $b'$ uniformly at random.  Then, Alice uses our RSP protocol to delegate to Bob the preparation of a state of the form $(I + (-1)^d X)/2$ (if $b = 0$) or of the form $(I + (-1)^d Y )/2$ (if $b=1$).  From the RSP process, Bob obtains an encryption of the bit $d$ which he sends back to Alice for her to decrypt.  Alice then sends $b'$ to Bob, and he measures his state in the eigenbasis of $\frac{1}{\sqrt{2}} ( X + (-1)^{b'} Y )$ and returns the outcome bit $d'$ to Alice. Alice considers Bob to have won the game if and only if $d\oplus d' = b\wedge b'$.

It can be seen that the distribution over $(b,b',d,d')$ in the procedure described above is approximately the same as the distribution in the nonlocal CHSH game when the two players implement the optimal quantum strategy.  Therefore, a quantum Bob can win with probability approximately  $\cos^2(\pi/8)\approx 0.85$ (\cref{thm:comp}).
On the other hand, by adapting reasoning from~\cite{kalai2022quantum}, we show that an efficient classical Bob cannot do much better than the optimal classical winning probability for the CHSH game, which is $0.75$ (\cref{thm:pfsound}).  Therefore, with appropriate parameter choices (see \cref{subsec:param}), a constant gap is achieved between the best possible classical and quantum winning probabilities.

\section{Preliminaries}
\label{sec:prelim}

Let $\mathbb{C}, \mathbb{Z}, \mathbb{N}$ denote, respectively, the field of complex numbers, the ring of integers, and the set of nonnegative integers. 
For any $c \in \mathbb{N}$, let $\mathbb{Z}_c$ denote the set $\{ 0, 1, \ldots, c-1 \}$ with multiplication and addition defined modulo $c$.  If $x \in \mathbb{Z}_c$, then $
| x | \coloneqq \min \{ x, c - x \}.$
If $v = (v_1, \ldots, v_s)$ is a vector with entries from any of $\mathbb{C}, \mathbb{Z}$, or $\mathbb{Z}_c$, we write $\left\| v \right\|_\infty$ for the infinity norm of $v$, defined by $\norm{ v }_\infty \coloneqq \max_i | v_i |.$

We denote by $\{ 0, 1 \}^*$ the set of all finite-length bit strings.  For $s \in \{ 0, 1 \}^*$ and $k \in \mathbb{N}$, let $s^k \in \{ 0, 1 \}^*$ denote $s$ repeated $k$ times. The symbol $||$ denotes string concatenation. We write MAJ for the function $\MAJ \colon \{0,1\}^* \to \{0,1\}$ defined by $\MAJ(s) = 1$ if and only if the Hamming weight of $s$ is at least half its length. If $a,b$ are vectors of the same length, we may write either $\left< a , b \right>$ or $a \cdot b$ for the dot product of $a$ and $b$. 

For a finite set $S$, we write $s\leftarrow S$ to mean $s$ is sampled uniformly at random from $S$. If $\chi$ is a distribution on $S$, we write $s\leftarrow \chi$ to mean $s$ is sampled from $S$ according to $\chi$.  The expression $\chi^n$ denotes the distribution of $n$-length sequences of independent samples of $\chi$. For a set $A$, we write $D(A)$ for the set of probability distributions on $A$.

The expression $\log$ always denotes the logarithm in base $2$. If $k \in \mathbb{Z}_c$ (viewed as $\{0,1,\ldots, c-1\}$), then $[k]\in \{ 0, 1 \}^{\lceil \log c \rceil}$ denotes the binary representation of $k$ in little-endian order (i.e., with the least significant bits first).  If $x \in \mathbb{Z}_c^d$, then $[x]$ denotes the concatenation of $[x_1], [x_2], \ldots, [x_d]$.

For any finite set $S$, the expression $\mathbb{C}^S$ denotes the Hilbert space of functions from $S$ to $\mathbb{C}$.  Let $L ( S )$ denote the set of linear maps from $\mathbb{C}^S$ to itself. A \textit{quantum state} on $S$ is an element of $L ( S )$ that is trace-$1$ and positive semidefinite (i.e., a density operator on $\mathbb{C}^S$). A \textit{pure quantum state} on $S$ is a rank-$1$ quantum state. Any pure quantum state on $S$ can be written as $\rho \coloneqq \ketbra{\alpha}$, where $\ket{\alpha}$ is a unit vector in $\mathbb{C}^S$, and we may also refer to $\ket{\alpha}$ as a pure quantum state. The \textit{trace distance} between two quantum states $\rho$ and $\sigma$ on $S$ is defined by $\norm{\rho-\sigma}_1$, where $\norm{\cdot}_1$ denotes the trace norm. Note that $\norm{\ketbra{\alpha}-\ketbra{\beta}}_1 = 2 (1 - \abs{\braket{\alpha}{\beta}}^2)^{1/2}$. If $T$ is another finite set, then a \textit{quantum operation} from $S$ to $T$ is a completely positive trace-preserving map from $L(S)$ to $L(T)$.

Throughout this work, $\lambda \in \mathbb{N}$ denotes the security parameter. A function $f \colon \mathbb{N} \to \mathbb{R}_{\geq 0}$ of $\lambda$ is \textit{negligible} if it is $O ( \lambda^{-C})$ for all $C > 0$. 

\subsection{Model of Quantum Computation}

We define terms related to quantum algorithms. A standard way to define a quantum circuit is as a composition of gates drawn from some specified finite set of primitive gates. Since we are concerned here with protocols that involve general single-qubit rotations, we will use a larger set of primitive gates. Let $X$, $Y$, and $Z$ denote the Pauli operators
\begin{equation}
X = \begin{bmatrix}
0 & 1 \\ 1 & 0 \end{bmatrix}, \quad Y = \begin{bmatrix}
0 & -i \\ i & 0
\end{bmatrix},
\quad \text{and} \quad
Z = \begin{bmatrix}
1 & 0 \\ 0 & -1 \end{bmatrix}.
\end{equation}
A quantum circuit is then a composition of the following primitive operations:
\begin{enumerate}
\item Any single qubit gate of the form $e^{i \pi (p/q)R}$, where $p, q \in \mathbb{Z}$, $q \neq 0$, and $R \in \{ X, Y, Z \}$.

\item The Toffoli gate $T \colon \left( \mathbb{C}^2 \right)^{\otimes 3} \to 
\left( \mathbb{C}^2 \right)^{\otimes 3}$, given by $T \left| x \right> \left| y \right> \left| z \right> =
\left| x \right> \left| y \right> \left| z + xy \right>$.

\item The gate which creates a single qubit in state $\left| 0 \right>$.

\item The gate which measures a single qubit in the computational basis.
\end{enumerate}
If a quantum circuit is composed only of unitary gates (i.e., it does not involve creating or measuring qubits) then we refer to it as a \textit{unitary} quantum circuit.
If a quantum circuit $Q$ has $m$ input qubits, and $n$ output qubits, and $\ell$ intermediate primitive operations, then the size of $Q$ is $n + m + \ell$.  Such a circuit determines a function from
$\{ 0, 1 \}^m$ to the set of probability distributions on $\{ 0, 1 \}^n$.

\subsection{Learning With Errors}

For any real number $s > 0$, let $G( s )$ denote the discrete Gaussian probability distribution on $\mathbb{Z}$, defined as follows:
if $X$ is a random variable distributed according to $G ( s)$, and $x \in \mathbb{Z}$, then
\begin{equation}
P ( X = x )  =  \frac{ e^{- x^2/(2s^2)}}{\sum\limits_{y \in \mathbb{Z}} e^{- y^2/(2s^2) }}.
\end{equation}
If $t > 0$, let $G( s, t )$ denote the  distribution obtained from $G ( s )$
by conditioning on the event $\left| X \right| \leq t$. We will need the following lemma about $G(s)$.

\begin{lemma}[Corollary 9 of \cite{canonne2020gaussian}]\label{lem:gaussian_properties}
Let $X$ be distributed according to $G(s)$.   Then $\expect[X]=0$ and $\variance[X]\leq s^2$.
If $C\geq 0$, then $\prob [ X \geq C] \leq e^{-C^2/(2 s^2 )}$.
\end{lemma}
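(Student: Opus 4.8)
The plan is to verify the three assertions directly from the definition of $G(s)$. Write $\beta \coloneqq 1/(2s^2)$ and $Z \coloneqq \sum_{y\in\mathbb{Z}} e^{-\beta y^2}$, so that $\prob[X=x] = e^{-\beta x^2}/Z$ and every sum below is absolutely convergent since $e^{-\beta y^2}$ decays superexponentially. The claim $\expect[X]=0$ is immediate from symmetry: $x\mapsto -x$ is a measure-preserving involution of $G(s)$, so $\expect[X] = \expect[-X] = -\expect[X]$. As the mean vanishes, $\variance[X] = \expect[X^2] = Z^{-1}\sum_{x\in\mathbb{Z}} x^2 e^{-\beta x^2}$, so it remains to establish $2\beta \sum_x x^2 e^{-\beta x^2} \le \sum_x e^{-\beta x^2}$ (the variance bound) and the stated tail estimate.

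For the variance bound I would invoke the Poisson summation formula for the Schwartz functions $g(x) = e^{-\beta x^2}$ and $h(x) = x^2 e^{-\beta x^2} = -\partial_\beta g(x)$, whose Fourier transforms are $\widehat g(\xi) = \sqrt{\pi/\beta}\, e^{-\pi^2\xi^2/\beta}$ and $\widehat h(\xi) = -\partial_\beta\widehat g(\xi) = \tfrac{1}{2\beta}\sqrt{\pi/\beta}\,(1-2\pi^2\xi^2/\beta)\, e^{-\pi^2\xi^2/\beta}$. By Poisson summation $\sum_x x^2 e^{-\beta x^2} = \sum_k \widehat h(k)$ and $\sum_x e^{-\beta x^2} = \sum_k \widehat g(k) > 0$, so it suffices to check $\sum_k \widehat h(k) \le \tfrac{1}{2\beta}\sum_k \widehat g(k)$; and indeed the combination $\sum_k \widehat h(k) - \tfrac{1}{2\beta}\sum_k \widehat g(k)$ collapses to $-\tfrac{1}{2\beta}\sqrt{\pi/\beta}\sum_{k\in\mathbb{Z}}\tfrac{2\pi^2 k^2}{\beta}e^{-\pi^2 k^2/\beta}$, which is $\le 0$ term by term. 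Hence $\expect[X^2] \le 1/(2\beta) = s^2$. (An equivalent route I would mention: Poisson summation also gives $\sqrt{\beta}\,Z = \sqrt{\pi}\sum_k e^{-\pi^2 k^2/\beta}$, which is visibly nondecreasing in $\beta$, so $\tfrac{d}{d\beta}(\sqrt{\beta}\,Z)\ge 0$, i.e.\ $Z+2\beta Z'\ge 0$, i.e.\ $\expect[X^2] = -Z'/Z \le 1/(2\beta)$.)

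For the tail bound, fix $C\ge 0$, put $n_0 \coloneqq \lceil C\rceil$ and $a \coloneqq n_0 - C\in[0,1)$, so that $\{x\in\mathbb{Z}: x\ge C\} = \{n_0, n_0+1, \dots\}$. From $x^2 = (x-C)^2 + 2C(x-C) + C^2 \ge (x-C)^2 + C^2$ for $x\ge C$ one obtains
\[
\sum_{x\in\mathbb{Z},\, x\ge C} e^{-\beta x^2} \;\le\; e^{-\beta C^2}\sum_{x\in\mathbb{Z},\, x\ge C} e^{-\beta (x-C)^2} \;=\; e^{-\beta C^2}\sum_{j\ge 0} e^{-\beta (a+j)^2},
\]
and then, using $e^{-\beta a^2}\le 1$ and $(a+j)^2\ge j^2$ for $j\ge 1$,
\[
\sum_{j\ge 0} e^{-\beta (a+j)^2} \;\le\; 1 + \sum_{j\ge 1} e^{-\beta j^2} \;\le\; 1 + 2\sum_{j\ge 1} e^{-\beta j^2} \;=\; Z ,
\]
so that $\prob[X\ge C] = Z^{-1}\sum_{x\ge C} e^{-\beta x^2} \le e^{-\beta C^2} = e^{-C^2/(2s^2)}$, as claimed.

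The hard part is the variance bound: the mean is a one-line symmetry argument and the tail bound is an elementary splitting of the sum, whereas the variance bound needs the Poisson summation identity (or the monotonicity of $\sqrt{\beta}\,Z$) together with a short term-by-term cancellation whose bookkeeping is easy to get wrong. In practice, since this statement is quoted as Corollary~9 of \cite{canonne2020gaussian}, the cleanest thing is to cite that reference directly; the sketch above is the self-contained argument I would reconstruct if one were wanted.
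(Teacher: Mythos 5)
Your proof is correct: the symmetry argument for the mean, the Poisson-summation comparison giving $2\beta\sum_x x^2 e^{-\beta x^2}\le\sum_x e^{-\beta x^2}$ (your Fourier transforms and the term-by-term cancellation check out, as does the equivalent monotonicity-of-$\sqrt{\beta}\,Z$ route), and the shift-and-compare tail estimate are all sound. Note, however, that the paper does not prove this lemma at all --- it is imported verbatim as Corollary~9 of the cited reference \cite{canonne2020gaussian} --- so there is no in-paper argument to match; your closing remark that citing is the cleanest course is exactly what the authors do. Compared with the cited reference, your route differs mainly in the tail bound: Canonne et al.\ obtain it by showing the discrete Gaussian's moment generating function is dominated by the continuous one's (again via Poisson summation) and then applying a Chernoff argument, which yields full subgaussianity and hence the variance bound and tail bound in one stroke; your argument instead uses Poisson summation only for the second moment and handles the tail by the elementary inequality $x^2\ge (x-C)^2+C^2$ together with $\sum_{j\ge 0}e^{-\beta(a+j)^2}\le Z$. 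Your version is more self-contained and elementary for the tail, at the cost of not delivering the stronger exponential-moment statement that the MGF approach provides; for the purposes of this paper either suffices.
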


In the following, $\chi$ denotes a probability distribution on $\mathbb{Z}$ (which in this paper will always be a Gaussian or truncated Gaussian distribution).

\vskip0.2in

\textbf{The $\LWE_{c, d, \chi}$ Problem:} Let $\mathcal{D}_0$ denote the probability distribution of $(a, a \cdot s + e ) \in \mathbb{Z}_d^c \times \mathbb{Z}_d$, where $a \leftarrow \mathbb{Z}_d^c$, $s \leftarrow \mathbb{Z}_d^c$, and $e \leftarrow \chi$. Let $\mathcal{D}_1$ denote the probability distribution of $(a, v)$, where $a \leftarrow \mathbb{Z}_d^c$ and $v \leftarrow \mathbb{Z}_d$.  Given oracle access to $\mathcal{D}_b$, where $b \leftarrow \{ 0, 1 \}$, determine the value of $b$.

\vskip0.2in

\subsection{Parameters and assumptions}
\label{subsec:param}

Throughout this paper, we will assume $m = m(\lambda),n = n ( \lambda ),q = q ( \lambda),Q = Q ( \lambda),\sigma = \sigma ( \lambda), \tau = \tau ( \lambda )$ are real-valued functions of the security parameter $\lambda$ which satisfy all of the conditions in \cref{fig:parameters}.

\begin{figure}
\centering
\fbox{\parbox{4.71in}{
\normalsize
\textit{Parameter list:}
\[
\begin{array}{rl}
q: & \textnormal{modulus} \\
Q: & \textnormal{binary length of modulus} \\
n,m: & \textnormal{matrix dimensions} \\
\sigma: & \textnormal{standard deviation of Gaussian noise} \\
\tau: & \textnormal{truncation factor for Gaussian noise} \\
\end{array}
\]
\smallskip
\textit{Assumptions:}
\begin{itemize}
\item $\log q$ and $\log \sigma$ are polynomially bounded
functions of $\lambda$

\item $\sigma \geq 1$

\item $q$ is always an odd prime

\item $Q = \lceil \log q \rceil$

\item $n = \lambda$

\item $m = n (2Q + 1 )$

\item $\tau = q/(4mQ)$
\end{itemize}
}}
\caption{ Parameters and assumptions.}
\label{fig:parameters}
\end{figure}

The rationale for the conditions in \cref{fig:parameters} is the following.
\begin{itemize}
    \item The quantity $m$, which specifies the number of rows in the LWE matrix $A$ that we will use, needs to be sufficiently large for the single-bit encryption algorithm in \cref{subsec:kem} to be secure and for $A$ to accommodate a trapdoor (\cref{subsec:trapdoor}).  The formula $m = n (2Q + 1)$ accomplishes both purposes.
    
    \item The truncation factor $\tau$ is chosen sufficiently small to allow LWE samples involving the matrix $A$ to be inverted, using a trapdoor, with probability $1$.
\end{itemize}
We discuss more specific parameter choices in \cref{subsec:parchoices}.

If we say that we assume that \textit{the $\LWE_{n, q, G ( \sigma )}$ problem is hard} for particular parameter functions $n = n ( \lambda ), q = q ( \lambda), \sigma = \sigma ( \lambda )$, we mean that we assume that any non-uniform quantum polynomial-time algorithm will solve the $\LWE_{n, q, G ( \sigma )}$ problem with probability at most $\frac{1}{2} + \negl$.  Note that if $\tau/\sigma = \omega ( \log \lambda )$, then the distribution of $G ( \sigma , \tau )$  is negligibly different from $G ( \sigma )$ by \cref{lem:gaussian_properties}, and 
so the hardness of $\LWE_{n, q, G ( \sigma, \tau )}$ is equivalent to the hardness of $\LWE_{n, q, G ( \sigma )}$.

As shown in \cite{regev2009lwe}\footnote{Strictly speaking, the next statement does not immediately follow from \cite{regev2009lwe} because the error distributions $\bar{\Psi}_\alpha$ and $\Psi_\alpha$ defined there do not exactly correspond to discrete Gaussians. However, it does follow after we first apply \cite[Theorem 1]{peikert2010reduction} to reduce $\LWE_{n,q,\Psi_{\sqrt{2\pi}\alpha}}$ to $\LWE_{n,q,G(\sigma)}$, where $\sigma = \alpha q$.}, we can assume that the $\LWE_{n, q, G (\sigma)}$ problem is hard for $n,q,\sigma = \alpha q$ with $\alpha \in (0,1)$ and  $\alpha q > \sqrt{2/\pi}\cdot \sqrt{n}$ if we assume that no non-uniform quantum polynomial-time algorithm can solve the Shortest Independent Vectors Problem (SIVP) in worst-case lattices of dimension $n$ to within an approximation factor of $\tilde{O}(n/\alpha)$.

\subsection{A Simple Encryption Algorithm}
\label{subsec:kem}

In \cref{sec:pf} we will make use of a single-bit encryption algorithm which is very similar to the original lattice-based encryption algorithm proposed by Regev in \cite{regev2009lwe}.   This algorithm is shown in \cref{fig:kem}, and consists of three algorithms: $\Gen$ (key generation), $\Encrypt$, and $\Decrypt$. Essentially, the public key is an LWE matrix $(A, v ) \coloneqq (A, As + e )$, where $s$ is a secret vector and $e$ is a Gaussian noise vector.  Given a random bit $b$, a ciphertext $ct$ is computed by summing up a random subset of the rows of $[A | v ]$ and then adding a quantity (dependent on $b$) to the last coordinate of the sum.  There is one important and unique feature of Protocol~$K$: rather than adding $b \lfloor q/2 \rfloor$ to the final coordinate of the ciphertext (which would optimize decoding) we add $b \lfloor q/4 \rfloor$ instead, which will aid us in \cref{sec:pf}.

\begin{figure}[ht]
    \centering
    \fbox{\parbox{4.71in}{
    \normalsize
    $K = (\Gen_K, \Encrypt_K, \Decrypt_K)$

    \begin{itemize}
    \item $\Gen_K()$ samples vectors $s \leftarrow \mathbb{Z}_q^n, e \leftarrow G( \sigma, \tau)^m$, samples a matrix $A \leftarrow \mathbb{Z}_q^{m \times n}$, computes $v \coloneqq As + e$, and returns $(pk, s)$, where
    \begin{equation*}
    pk \coloneqq (A, v).  \\
    \end{equation*}

    \item $\Encrypt_K(pk, b)$ samples $f \leftarrow \{ 0, 1 \}^m$, computes
    $a \coloneqq f^\top A$ and $w \coloneqq f^\top v + b \lfloor q/4 \rfloor$, and returns $ct$, where
    \begin{equation*}
    ct \coloneqq (a, w ).
    \end{equation*}
    
    \item $\Decrypt_K(s, ct)$ computes
    \begin{equation*}
    \ell \coloneqq  \langle a , s \rangle - w,
    \end{equation*}
    and returns $1$ if $|\ell + \floor{q/4}| \leq |\ell|$, and returns $0$ otherwise.
    \end{itemize}
}}
    \caption{The single-bit public-key encryption algorithm $K$.  $pk$ is the public key, $s$ is the secret key, $b$ is the message, and $ct$ is the ciphertext.}
    \label{fig:kem}
\end{figure}

The following result asserts the IND-CPA security (that is, security against chosen-plaintext attacks) for Protocol $K$. The proof is standard and is given in \cref{app:regev}.

\begin{proposition}
\label{prop:regev}
If the $\LWE_{n, q, G ( \sigma , \tau )}$ problem is hard, then for any non-uniform quantum polynomial-time
algorithm $\mathcal{B}$, we have
$\prob [ b' = b ] \leq \frac{1}{2} + \negl$, where the probability is over $pk \leftarrow \Gen_K (),
b \leftarrow \{ 0, 1 \}, ct \leftarrow \Encrypt_K ( pk, b),$ and $b' \leftarrow \mathcal{B} ( pk, ct )$.
\end{proposition}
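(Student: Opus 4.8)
The plan is to reduce IND-CPA security of Protocol $K$ to the hardness of $\LWE_{n,q,G(\sigma,\tau)}$ via a standard hybrid argument. I would set up the following sequence of games and bound the adversary's distinguishing advantage at each step. In Game $0$, the adversary $\mathcal{B}$ receives an honestly generated public key $pk = (A, As+e)$ with $e \leftarrow G(\sigma,\tau)^m$ together with a ciphertext $ct = (f^\top A, f^\top(As+e) + b\floor{q/4})$ for $b \leftarrow \{0,1\}$; we want to show $\prob[b'=b] \le 1/2 + \negl$. In Game $1$, replace the public key by a uniformly random pair $(A, v)$ with $A \leftarrow \mathbb{Z}_q^{m\times n}$ and $v \leftarrow \mathbb{Z}_q^m$, leaving the ciphertext computed as $(f^\top A, f^\top v + b\floor{q/4})$. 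The only difference between Games $0$ and $1$ is whether the pair handed to $\mathcal{B}$ is an LWE sample or uniform; here one invokes the multi-sample (matrix) form of the $\LWE_{n,q,G(\sigma,\tau)}$ assumption — which follows from the single-sample form stated in the excerpt by a routine hybrid over the $m$ rows, together with a reduction that pads the challenge with fresh secret-independent randomness — so that any polynomial-time $\mathcal{B}$ distinguishing Games $0$ and $1$ yields a polynomial-time $\LWE$ solver, and the gap is $\negl$.

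Next, in Game $2$, observe that once $(A,v)$ is uniform, the leftover hash lemma (or a direct calculation on $\mathbb{Z}_q^{n+1}$) shows that $(A, v, f^\top A, f^\top v)$ is statistically close to $(A, v, a', w')$ with $(a',w') \leftarrow \mathbb{Z}_q^{n+1}$ uniform and independent, because $f \leftarrow \{0,1\}^m$ has min-entropy $m$, the map $f \mapsto (f^\top A, f^\top v)$ is a universal-ish hash into a space of size $q^{n+1}$, and $m = n(2Q+1)$ is large enough (indeed $m \ge (n+1)\log q + \omega(\log\lambda)$, using $Q = \ceil{\log q}$) that the statistical distance is $2^{-\Omega(\lambda)} = \negl$. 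Thus in Game $2$ the adversary sees $(A, v, a', w' + b\floor{q/4})$ with $(a',w')$ uniform; since $w'$ is uniform and independent of everything, $w' + b\floor{q/4}$ is uniform regardless of $b$, so the ciphertext carries no information about $b$ and $\prob[b'=b] = 1/2$ exactly. Summing the three bounds gives $\prob[b'=b] \le 1/2 + \negl$ in Game $0$, which is the claim.

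The main obstacle — and the only step requiring care rather than bookkeeping — is the Game $1$ to Game $2$ transition, i.e., arguing that $(f^\top A, f^\top v)$ is close to uniform. One has to be slightly careful because $A$ is not itself guaranteed to have full-rank columns or a well-conditioned structure in Game $2$ (it is just uniform), but the leftover hash lemma applied to the family $\{h_f : (A\|v) \mapsto f^\top (A\|v)\}_{f}$ handles this: the relevant quantity is the collision probability of $f^\top(A\|v)$ over random $f, f'$, which is bounded once $\mathbb{Z}_q^{(n+1)}$ has size at most roughly $2^{m - \omega(\log\lambda)}$, and this is exactly why the parameter choice $m = n(2Q+1)$ with $Q = \ceil{\log q}$ appears. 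I would state this as a lemma (leftover hash over $\mathbb{Z}_q$) and cite the standard reference, then plug in the parameter inequality from \cref{fig:parameters}. The rest — the hybrid over LWE rows and the final information-theoretic step — is entirely routine, so the write-up in \cref{app:regev} should be short.
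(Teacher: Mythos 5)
Your proposal is correct and follows essentially the same route as the paper's proof in \cref{app:regev}: first replace the public key $(A, As+e)$ by a uniform pair via the $\LWE_{n,q,G(\sigma,\tau)}$ assumption, then invoke the leftover hash lemma (Lemma 2.1 of \cite{alwen2011generating}) to conclude that $(A,v,f^\top A, f^\top v)$ is statistically close to uniform, so the ciphertext is independent of $b$ and the guessing probability is $\frac{1}{2}$ up to negligible terms. The only difference is that you spell out the row-by-row hybrid for the matrix form of LWE and the parameter check $m \geq (n+1)\log q + \omega(\log \lambda)$, which the paper leaves implicit.
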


\subsection{Trapdoors for LWE matrices}
\label{subsec:trapdoor}

Both of the applications in this paper will rely on trapdoors for LWE samples. The following is a slightly modified version of
Theorem 2 from \cite{micciancio-peikert2012trapdoors}.  (The main difference is that we bound the noise vector using the infinity-norm rather than the Euclidean norm.)

\begin{proposition}
\label{prop:trap}
There is a probabilistic polynomial-time algorithm $\GenTrap()$ and a deterministic polynomial-time algorithm $\Invert(A, v, t)$ satisfying the following.
\begin{enumerate}
\item $\GenTrap()$ accepts no input and returns a pair $(A, t)$, where $A$ is an $m \times n$ matrix with entries in $\mathbb{Z}_q$.  The matrix $A$ is within statistical distance $nQ2^{-n/2}$ from a uniformly random matrix.

\item Given a pair $(A, t)$ obtained from $\GenTrap()$ and vectors $s \in \mathbb{Z}_q^n$ and $e \in \mathbb{Z}_q^m$ satisfying $\left\| e \right\|_\infty \leq 2 \tau$, the algorithm $\Invert ( A, As + e , t )$ returns the value $s$.
\end{enumerate}
\end{proposition}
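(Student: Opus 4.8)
The plan is to realize $\GenTrap$ and $\Invert$ via the Micciancio--Peikert construction \cite{micciancio-peikert2012trapdoors}, instantiated with the binary gadget and with a trapdoor matrix having $0/1$ entries, so that every error estimate holds deterministically (this is what lets us assert recovery ``with probability $1$'', and it is the reason $\tau$ must be taken as small as $q/(4mQ)$, as flagged in the discussion of \cref{fig:parameters}). Concretely: set $\bar m \coloneqq n(Q+1)$, so that $\bar m + nQ = m$; let $\mathbf{g} \coloneqq (1,2,4,\ldots,2^{Q-1})^\top$ and $G \coloneqq I_n \otimes \mathbf{g} \in \mathbb{Z}_q^{nQ \times n}$; and have $\GenTrap()$ sample $\bar A \leftarrow \mathbb{Z}_q^{\bar m \times n}$ and $R \leftarrow \{0,1\}^{nQ \times \bar m}$ and return the $m \times n$ matrix $A$ whose top $\bar m$ rows are $\bar A$ and whose bottom $nQ$ rows are $R\bar A + G$, together with the trapdoor $t \coloneqq R$ (the block $\bar A$ is readable off $A$ itself).

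For the first claim I would establish near-uniformity of $A$ through the leftover hash lemma. Since $q$ is prime, for any two distinct $r,r' \in \{0,1\}^{\bar m}$ the difference $r-r'$ has a coordinate that is a unit in $\mathbb{Z}_q$, so $r \mapsto r^\top \bar A$ is a universal hash family over a uniform $\bar A$; hence for uniform $r$, the pair $(\bar A, r^\top \bar A)$ is within statistical distance $\tfrac12 \sqrt{q^n/2^{\bar m}}$ of $(\bar A,\textrm{uniform})$. As $q \leq 2^Q$ and $\bar m = n(Q+1)$, this quantity is at most $\tfrac12 2^{-n/2}$, and a hybrid argument over the $nQ$ rows of $R$ shows $(\bar A, R\bar A)$ is within $nQ 2^{-n/2}$ of $(\bar A,\textrm{uniform})$; translating the bottom block by the fixed matrix $G$ does not change this, which gives the stated bound.

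For the second claim, $\Invert(A,v,t)$ parses $v = As + e$ as $(v_1,v_2)$ with $v_1 = \bar A s + e_1$ and $v_2 = R\bar A s + Gs + e_2$, computes $\hat v \coloneqq v_2 - R v_1 = Gs + \hat e$ where $\hat e \coloneqq e_2 - R e_1$, and then applies the deterministic gadget decoder of \cite{micciancio-peikert2012trapdoors} (Babai decoding with the basis $S_q$ of $\Lambda^\perp(\mathbf{g}^\top)$) to $\hat v$ to output $s$. The crux is an infinity-norm bound on the residual error: because $R$ has $0/1$ entries and $\bar m$ columns, $\|R e_1\|_\infty \leq \bar m\, \|e_1\|_\infty \leq 2\bar m \tau$, hence $\|\hat e\|_\infty \leq \|e_2\|_\infty + \|R e_1\|_\infty \leq 2(\bar m+1)\tau$. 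The gadget decoder recovers $s$ exactly whenever $\|\hat e\|_\infty$ lies below an explicit threshold $\rho_q = \Theta(q)$ (depending mildly on $q$ and $Q$ through the binary expansion of $q$); substituting $\tau = q/(4mQ)$, $\bar m = n(Q+1)$, and $m = n(2Q+1)$, and using $Q \geq 2$ (which holds since $q$ is an odd prime), one checks that $2(\bar m+1)\tau$ is comfortably below $\rho_q$, so the recovery never fails.

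I expect the main obstacle to be exactly this last verification. Micciancio and Peikert phrase their inversion guarantee in the Euclidean norm, so I would need to re-derive the gadget lattice's decoding radius directly in the infinity norm (from the Gram--Schmidt norms of $S_q$, which takes some care in the non-power-of-two case) and then confirm that the slack built into \cref{fig:parameters} absorbs the extra $\sqrt{m}$-type factors one incurs by bounding $\|R e_1\|_\infty$ in the worst case over the $0/1$ trapdoor rather than using the subgaussian trapdoor of \cite{micciancio-peikert2012trapdoors}. The remaining pieces --- the leftover-hash-lemma estimate and the bookkeeping for $\hat e$ --- are routine.
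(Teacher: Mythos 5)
Your construction is the one the paper itself uses (a Micciancio--Peikert style trapdoor with a $0/1$ trapdoor matrix; the paper merely puts the gadget block on top rather than on the bottom), and both your leftover-hash-lemma argument for part 1 and your reduction of part 2 to decoding $\hat v = Gs + \hat e$ with $\|\hat e\|_\infty \leq 2\bigl(n(Q+1)+1\bigr)\tau$ track the paper's proof step for step. The genuine gap is exactly the step you defer: you assert that the gadget decoder of Micciancio--Peikert (Babai with the basis $S_q$) succeeds whenever $\|\hat e\|_\infty$ is below a threshold $\rho_q = \Theta(q)$, but for a general odd prime $q$ that decoder's guaranteed worst-case infinity-norm radius is only $\Theta(q/Q)$, not $\Theta(q)$: the basis vector carrying the binary expansion of $q$ has $\ell_1$-norm as large as $Q$, so its inner product with an error of infinity-norm $cq$ can be about $cQq$ and wrap around modulo $q$ long before $c$ is a constant. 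Since you also explicitly leave this verification as ``to be re-derived,'' part 2 is not proved as written, and the quantitative claim you lean on is off by a factor of $Q$.

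The paper closes this hole with a short explicit computation rather than an appeal to the Euclidean-norm statement of \cite{micciancio-peikert2012trapdoors}: let $S$ be the $Q\times Q$ matrix whose first $Q-1$ rows are shifts of $(2,-1,0,\ldots,0)$ and whose last row is the binary expansion $([q]_1,\ldots,[q]_Q)$ of $q$, and let $Y$ be block-diagonal with copies of $S$, so that $YG \equiv 0 \pmod q$ and hence $Y\hat v \equiv Y\hat e \pmod q$. Every row of $Y$ has $\ell_1$-norm at most $Q$, and the parameter check $2\bigl(1+n(Q+1)\bigr)\tau < q/(2Q)$ (using $\tau = q/(4mQ)$ and $m = n(2Q+1)$ from \cref{fig:parameters}; note this holds only with a constant-factor margin, not the large slack your $\Theta(q)$ estimate suggests) gives $\|Y\hat e\|_\infty < q/2$, so the integer vector $Y\hat e$ is read off exactly from $Y\hat v \bmod q$ with no wraparound; inverting $Y$ over the rationals recovers $\hat e$, hence $Gs = \hat v - \hat e$ and then $s$, deterministically. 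If you replace your citation of the Euclidean-norm guarantee by this argument (or equivalently re-derive the infinity-norm radius $q/(2Q)$ for the $S_q$ decoder and rerun your final inequality against that radius instead of $\Theta(q)$), your proposal becomes a complete proof and coincides with the paper's.
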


\begin{proof}
See \cref{app:trap}.
\end{proof}

We make note of the following, which is an easy consequence of \cref{prop:trap}.

\begin{proposition}
\label{prop:veryinj}
If $(A, t)$ is a sample obtained from $\GenTrap()$, then for any nonzero vector $v \in \mathbb{Z}_q^n$, we must have $\left\| Av \right\|_\infty > 4 \tau$. 
\end{proposition}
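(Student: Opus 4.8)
The statement to prove: if $(A,t)$ comes from $\GenTrap()$, then for any nonzero $v \in \mathbb{Z}_q^n$ we have $\|Av\|_\infty > 4\tau$.

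The plan is to argue by contradiction using the inversion guarantee of \cref{prop:trap}. Suppose there were a nonzero $v$ with $\|Av\|_\infty \le 4\tau$. The key idea is that $Av$ can then be viewed as a "noise vector" of the appropriate size, and we play off two different LWE instances with the same right-hand side against the correctness of $\Invert$.

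Here is the argument in steps. First, write $e_0 := Av$. Since $\|e_0\|_\infty \le 4\tau$... wait, \cref{prop:trap} part 2 requires $\|e\|_\infty \le 2\tau$, not $4\tau$. So I would instead split: write $v = v' + v''$? No — cleaner: observe $Av$ has each coordinate bounded by $4\tau$ in absolute value (in the $\mathbb{Z}_q$ sense), so I can write $Av = e_0 - e_1$ where $e_0, e_1 \in \mathbb{Z}_q^m$ each satisfy $\|e_i\|_\infty \le 2\tau$ (just split each coordinate of $Av$, which lies in $[-4\tau, 4\tau]$, as a difference of two values each of magnitude at most $2\tau$). Then $A\cdot 0 + e_0 = A\cdot 0 + e_0$ and $A v + e_1 = e_0$, i.e., the vector $w := e_0 \in \mathbb{Z}_q^m$ can be written both as $A s_0 + e_0$ with $s_0 = 0$ and as $A s_1 + e_1$ with $s_1 = v$, where both error vectors have infinity norm at most $2\tau$. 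Now run $\Invert(A, w, t)$: by part 2 of \cref{prop:trap} applied to the first decomposition it must return $0$, and applied to the second decomposition it must return $v$. Since $\Invert$ is deterministic and $v \ne 0$, this is a contradiction. Hence no such $v$ exists, and $\|Av\|_\infty > 4\tau$ for all nonzero $v$. This also requires noting that $Av \ne 0$: if $Av = 0$ then take $w = 0$ and the same two decompositions (with $e_0 = e_1 = 0$) give $\Invert(A,0,t) = 0$ and $= v$, again a contradiction; so in fact $\|Av\|_\infty \ge 1$ to begin with, and the argument above upgrades this to $> 4\tau$.

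I do not expect any serious obstacle here — this is a short deduction. The one point requiring a little care is the splitting step: given $c \in \mathbb{Z}_q$ with $|c| \le 4\tau$ (where $|\cdot|$ is the centered absolute value on $\mathbb{Z}_q$), I need $c = a - b$ with $|a|, |b| \le 2\tau$ in $\mathbb{Z}_q$; since $q$ is an odd prime much larger than $\tau$ (so that the interval $[-4\tau, 4\tau]$ embeds in $\mathbb{Z}_q$ without wraparound, which holds because $\tau = q/(4mQ)$ and $mQ \ge 1$), this is just the elementary fact that any integer in $[-4\tau, 4\tau]$ is a difference of two integers in $[-2\tau, 2\tau]$. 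Everything else is a direct invocation of the deterministic inversion property.
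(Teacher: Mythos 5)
Your proposal is correct and matches the paper's own argument: the paper likewise splits $Av$ into two vectors of infinity norm at most $2\tau$ (writing $Av = e + e'$ rather than your difference $e_0 - e_1$) and derives the contradiction $0 = \Invert(A, 0 + e, t) = \Invert(A, Av - e', t) = v$ from the deterministic inversion guarantee of \cref{prop:trap}. Your separate treatment of the case $Av = 0$ and the remark about no wraparound are just explicit versions of steps the paper leaves implicit.
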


\begin{proof}
Suppose that $v \neq 0$ were such that $\left\| Av \right\|_\infty \leq 4 \tau$.  Then, we can find vectors $e, e'$ of infinity norm less than or equal to $2 \tau$ such that $Av = e + e'$.  We have
\begin{equation}
0 = \Invert ( A, 0 + e , t ) 
= \Invert ( A, Av - e' , t )  = v,
\end{equation}
which is a contradiction.
\end{proof}

\begin{figure}
    \centering
    \fbox{\parbox{4.71in}{
    \normalsize
    $J = (\Gen_J, \Encrypt_J, \Decrypt_J)$

    \begin{itemize}
    \item $\Gen_J()$ samples $s \leftarrow \mathbb{Z}_q^n, e \leftarrow G( \sigma, \tau)^m$, $(A,t) \leftarrow \GenTrap()$, computes $v \coloneqq As + e$, and returns $(pk, s, t)$, where
    \begin{equation*}
    pk \coloneqq (A, v).  \\
    \end{equation*}
    \end{itemize}

The algorithms $\Encrypt_J$ and $\Decrypt_J$ are the same as the algorithms $\Encrypt_K$ and $\Decrypt_K$ in \cref{fig:kem}. 
}}
    \caption{The single-bit public-key encryption algorithm $J$.}
    \label{fig:trapkem}
\end{figure}

For the results in \cref{sec:pf}, it will be important to have a version of the encryption algorithm from \cref{fig:kem} that has a trapdoor for the encoding matrix $A$.  See \cref{fig:trapkem}. (Only the $\Gen()$ algorithm is different.  The trapdoor is not used for encryption or decryption.)
The following proposition follows directly from \cref{prop:regev,prop:trap}.

\begin{proposition}
\label{prop:regev2}
If the $\LWE_{n, q, G ( \sigma, \tau )}$ problem is hard, then 
for any non-uniform quantum polynomial-time
algorithm $\mathcal{B}$, we have $\prob [ b' = b ]
\leq  \frac{1}{2} + \negl$, where the probability is over $pk \leftarrow \Gen_J (), 
b \leftarrow \{ 0, 1 \}, ct \leftarrow \Encrypt_J ( pk, b ),$ and $ b' \leftarrow \mathcal{B}(pk,ct)$.
\end{proposition}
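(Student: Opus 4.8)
The plan is to deduce \cref{prop:regev2} directly from \cref{prop:regev} and \cref{prop:trap}, treating it as essentially a ``hybrid swap'' of the key-generation procedure. The only difference between $J$ and $K$ is that $\Gen_J$ samples the matrix $A$ via $\GenTrap()$ (discarding the trapdoor $t$ from the adversary's view) instead of sampling $A$ uniformly from $\mathbb{Z}_q^{m\times n}$; everything downstream — the secret $s$, the noise $e \leftarrow G(\sigma,\tau)^m$, the public value $v = As+e$, and the algorithms $\Encrypt_J = \Encrypt_K$, $\Decrypt_J = \Decrypt_K$ — is identical. So the adversary $\mathcal{B}$ in the $J$-experiment receives exactly the same type of input $(pk,ct)$ as in the $K$-experiment, the sole distinction being the distribution of the first component $A$ of $pk$.

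First I would fix an arbitrary non-uniform quantum polynomial-time $\mathcal{B}$ and consider the quantity $\prob[b'=b]$ in the $J$-experiment. I would define a map that, given a matrix $A$, runs the rest of the $\Gen$, $\Encrypt$, $\Decrypt_K/\mathcal{B}$ pipeline and outputs the indicator of the event $b'=b$; call its expectation $p(A)$ when $A$ is fed in and the remaining randomness (that is, $s$, $e$, $b$, $f$, and $\mathcal{B}$'s internal coins) is averaged out. Then the $K$-experiment success probability is $\expect_{A \leftarrow \mathbb{Z}_q^{m\times n}}[p(A)]$ and the $J$-experiment success probability is $\expect_{(A,t)\leftarrow \GenTrap()}[p(A)]$. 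By part~1 of \cref{prop:trap}, the marginal distribution of $A$ from $\GenTrap()$ is within statistical distance $nQ2^{-n/2}$ of uniform, and since $n = \lambda$ and $Q$ is polynomially bounded, this distance is negligible. Because $p(A) \in [0,1]$, the difference between the two expectations is bounded by this statistical distance, hence negligible.

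Combining, the $J$-experiment success probability is at most (the $K$-experiment success probability) $+\;\negl$, which by \cref{prop:regev} is at most $\tfrac{1}{2} + \negl + \negl = \tfrac{1}{2} + \negl$. One subtlety to be careful about: \cref{prop:regev} is stated for algorithms that receive a uniformly generated $pk$, so I must make sure the reduction genuinely only changes the generation of $A$ and nothing else — in particular that the noise distribution $G(\sigma,\tau)^m$ and the encryption subroutine are bit-for-bit the same in $J$ and $K$, which they are by construction in \cref{fig:trapkem}. A second point worth a sentence is non-uniformity and efficiency: the statistical-distance argument does not even require $\GenTrap$ to be run by the reduction, so there is no efficiency loss; and since $\mathcal{B}$'s advice string is fixed independently of the experiment, the same advice works in both games.

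I do not expect any real obstacle here; the proposition is a routine corollary. If anything, the ``main obstacle'' is purely bookkeeping — making the hybrid precise enough that the reader sees the statistical-distance bound is applied to a $[0,1]$-valued functional of $A$ and therefore transfers cleanly, and checking that all the parameter conditions from \cref{fig:parameters} ($n=\lambda$, $Q = \lceil \log q\rceil$ polynomially bounded) are what make $nQ2^{-n/2}$ negligible. Since the excerpt defers the proof to \cref{app:regev}/an appendix anyway, a two-line argument of the above form suffices in the main text.
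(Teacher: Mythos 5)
Your proposal is correct and matches the paper's intent exactly: the paper simply states that \cref{prop:regev2} ``follows directly from \cref{prop:regev,prop:trap}'', and your hybrid argument — swapping the $\GenTrap()$-generated $A$ for a uniform one at the cost of the statistical distance $nQ2^{-n/2}$ (negligible since $n=\lambda$ and $Q$ is polynomially bounded), then invoking \cref{prop:regev} — is precisely the reasoning being left implicit. No gaps; your bookkeeping about the $[0,1]$-valued functional of $A$ and the unchanged encryption subroutine is the right way to make it rigorous.
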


\section{Rotated Measurements on Generalized GHZ States}
\label{sec:rotated measurements}

The purpose of this section is to prove \cref{prop:last_qubit}, which shows how rotated measurements behave when applied to states that generalize GHZ states~\cite{ghz2007state}, defined below.
\begin{definition}[Generalized GHZ state]
Let $d$ be a positive integer. A \textit{generalized GHZ state} on $d+1$ qubits is a state of the form
\begin{equation}
\frac{1}{\sqrt{2}}(\ket{x}\ket{1} + \ket{y} \ket{0}),
\end{equation}
where $x,y\in \{0,1\}^d$.
\end{definition}

Given any sequence of real numbers $\theta_1, \ldots, \theta_n$, we make use of an associated sequence of real numbers $r_1, \ldots, r_{nQ}$ defined by 
\begin{equation}\label{eq:reqs}
\begin{array}{lllllllll}
r_1 = \theta_1 & &
r_2 = 2 \theta_1 & &
r_3 = 4 \theta_1 & &
\ldots &&  r_{Q}
= 2^{Q - 1} \theta_1 \\
r_{Q+1} = \theta_2 & &
r_{Q+2} = 2 \theta_2 & &
r_{Q+3} = 4 \theta_2 & &
\ldots &&  r_{2Q}
= 2^{Q - 1} \theta_2 \\
\vdots && \vdots && \vdots && \ddots && \vdots \\
r_{nQ-Q+1} = \theta_n & &
r_{nQ-Q+2} = 2 \theta_n & &
r_{nQ-Q+3} = 4 \theta_n & &
\ldots &&  r_{nQ}
= 2^{Q - 1} \theta_n.
\end{array}
\end{equation}

\begin{proposition}\label{prop:last_qubit}
Let $x,y \in \mathbb{Z}_q^n$, 
let $\ket{\psi}$ denote the $(nQ + 1 )$-qubit generalized GHZ state
\begin{equation}
\ket{ \psi } \coloneqq \frac{1}{\sqrt{2}} \bigl( \ket{ [x] } \otimes \ket{ 1 }
+ \ket{ [y] } \otimes \ket{ 0 } \bigr),
\end{equation}
and let $\theta_1, \ldots, \theta_n \in \mathbb{Z} \cdot (2\pi / q )$.  Let $r_1, \ldots, r_{nQ}$ be the sequence defined in \cref{eq:reqs}.  Suppose that for
each $i \in \{ 1, 2, \ldots, nQ \}$, the $i^\text{th}$ qubit of $\ket{\psi}$ is measured in the eigenbasis of
\begin{equation} \label{eq:measurement basis}
(\cos r_i ) X + (\sin r_i) Y 
\end{equation}
and that the outcome is $(-1)^{u_i} \in \{ -1, 1 \}$. 
Then the state of the remaining qubit is $\frac{1}{\sqrt{2}}\bigl(\ket{0} +  e^{i\theta} \ket{1}\bigr),
$ where
\begin{equation}\label{eq:phase of remaining qubit}
    \theta \coloneqq \sum_{i=1}^n (y_i - x_i)\theta_i + \pi \sum_{i=1}^n  \sum_{j=1}^Q ([y_i]_j - [x_i]_j) u_{(i-1)Q+j}.
\end{equation}
\end{proposition}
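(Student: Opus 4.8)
The plan is to compute the post-measurement state directly, qubit by qubit, tracking how each single-qubit measurement in the basis \eqref{eq:measurement basis} contributes a phase to the surviving final qubit. First I would record the basic fact that measuring a qubit in the eigenbasis of $(\cos r)X + (\sin r)Y$ and obtaining outcome $(-1)^u$ is equivalent to: applying the projector $\frac{1}{2}\bigl(I + (-1)^u (\cos r\, X + \sin r\, Y)\bigr)$ and renormalizing. Acting on a computational-basis state $\ket{c}$ for $c \in \{0,1\}$, this projector sends $\ket{c} \mapsto \frac{1}{2}\bigl(\ket{c} + (-1)^u e^{i(2c-1)r}\ket{c\oplus 1}\bigr)$ — equivalently, $\ket{c} \mapsto \frac{1}{2}e^{i(2c-1)r\cdot[\text{something}]}\dots$; the clean way to say it is that the (unnormalized) measurement outcome on $\ket{c}$ is $\tfrac12\ket{c} + \tfrac12 (-1)^u e^{-ir}(e^{2irc})\ket{\overline c}$, so that on the two branches $c=0$ and $c=1$ the relative phase picked up between "keep the bit" and "flip the bit" is controlled by $\pm r$. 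The key bookkeeping observation is that in the GHZ state the two branches have fixed bit strings $[x]\|1$ and $[y]\|0$, so after measuring qubit $i$ with outcome $u_i$, the branch with bit $b$ in position $i$ acquires a phase $e^{i(2b-1) r_i}\cdot(\text{sign from }u_i)$; more precisely the branch-dependent part that survives is $e^{i\,(\,\pm r_i\,)}$ where the sign is $(2[x]_{\cdot}-1)$ on the $x$-branch and $(2[y]_{\cdot}-1)$ on the $y$-branch, and there is an additional global-on-that-branch factor $(-1)^{u_i}$ whenever the measurement outcome "disagrees" with that branch's bit.

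Next I would carry out the induction over $i = 1, \dots, nQ$. After all $nQ$ measurements, the surviving last qubit is in a state proportional to $e^{i\Phi_x}\ket{1} + e^{i\Phi_y}\ket{0}$, where $\Phi_x$ collects the phases from the $x$-branch and $\Phi_y$ from the $y$-branch. Normalizing so that the $\ket{0}$ coefficient is real and positive turns the state into $\frac{1}{\sqrt2}(\ket{0} + e^{i(\Phi_x - \Phi_y)}\ket{1})$, so $\theta = \Phi_x - \Phi_y$. The contribution to $\Phi_x - \Phi_y$ from qubit $i = (k-1)Q + j$ is then, using $r_{(k-1)Q+j} = 2^{j-1}\theta_k$ and tracking the $\pm$ signs, equal to $\bigl([y]\text{-bit vs }[x]\text{-bit difference}\bigr)$ terms: the non-$\pi$ part sums to $\sum_{i,j}([y_i]_j - [x_i]_j)\,2^{j-1}\theta_i$, and since $\theta_i \in \mathbb{Z}\cdot(2\pi/q)$ and $\sum_j 2^{j-1}[x_i]_j = x_i$, $\sum_j 2^{j-1}[y_i]_j \equiv y_i \pmod{q}$ (using $Q = \lceil\log q\rceil$), this collapses modulo $2\pi$ to $\sum_i (y_i - x_i)\theta_i$, matching the first term of \eqref{eq:phase of remaining qubit}. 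The $(-1)^{u_i}$ factors, which contribute $0$ or $\pi$ to the phase depending on whether $u_i$ disagrees with the branch bit, assemble into the term $\pi\sum_{i,j}([y_i]_j - [x_i]_j)\,u_{(i-1)Q+j}$, exactly the second term.

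I expect the main obstacle to be purely notational rather than conceptual: getting the signs and the "disagreement" convention exactly right so that the $(-1)^{u_i}$ factors that are common to both branches cancel in $\Phi_x - \Phi_y$, leaving only the terms proportional to the bit difference $[y_i]_j - [x_i]_j$. Concretely, one must verify that when $[x]_m = [y]_m$ in some coordinate $m$, qubit $m$'s measurement contributes nothing to $\theta$ (both branches get the same phase, including the same $(-1)^{u_m}$), and when they differ, the contribution is $\pm(r_m + u_m\pi)$ with the correct overall sign $[y]_m - [x]_m$. A secondary point worth stating carefully is the reduction modulo $2\pi$: the bare sum $\sum_j 2^{j-1}[y_i]_j$ equals $y_i$ as an integer only because $y_i < q \le 2^Q$, so no information is lost, and the phase $e^{i\theta}$ is well-defined; I would remark that $\theta$ in \eqref{eq:phase of remaining qubit} is to be read modulo $2\pi$. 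With those conventions pinned down, the rest is a routine telescoping product of single-qubit updates.
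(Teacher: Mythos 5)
Your proposal is correct and takes essentially the same route as the paper's proof: a telescoping, qubit-by-qubit phase computation in which the branch carrying bit $c$ at position $i$ picks up the factor $\braket{\pm_{r_i}}{c}=\tfrac{1}{\sqrt{2}}\,e^{-ic(r_i+\pi u_i)}$, so that $\theta=\sum_i([y]_i-[x]_i)(r_i+\pi u_i)$, which then collapses to the stated formula for $\theta$ exactly as you describe, using $r_{(i-1)Q+j}=2^{j-1}\theta_i$, $\theta_i\in\mathbb{Z}\cdot(2\pi/q)$, and $x_i,y_i<q\leq 2^Q$ (the paper organizes this as an induction on intermediate states $\ket{\psi_t}$ with accumulated phase $\phi_t$, which is just your $\Phi_x-\Phi_y$ tracked incrementally). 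One minor slip worth fixing: your displayed projector action has the sign of $r$ reversed (the exponent should be $e^{i(1-2c)r}$, not $e^{i(2c-1)r}$), but your final per-qubit bookkeeping, contribution $([y]_m-[x]_m)(r_m+\pi u_m)$, is the correct one and matches the paper.
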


\begin{proof}
The eigenvectors of the matrix in \cref{eq:measurement basis} are given by
\begin{equation}
    \ket{+_{r_i}} \coloneqq \frac{1}{\sqrt{2}}(\ket{0} + e^{i r_i} \ket{1}) \quad \text{and} \quad 
    \ket{-_{r_i}} \coloneqq \frac{1}{\sqrt{2}}(\ket{0} - e^{i r_i} \ket{1}),
\end{equation}
where $\ket{+_{r_i}}$ is the $+1$ eigenvector and $\ket{-_{r_i}}$ is the $-1$ eigenvector. For a string $s\in \{0,1\}^{nQ}$ and $t\in \{1, 2, \ldots, nQ \}$, let $s_{\geq t}$ denote the suffix of $s$ starting at index $t$ (when $t> \abs{s}$,  $s_{\geq t}$ denotes the empty string).  Define $\phi_t$ by
\begin{equation}
    \phi_t \coloneqq
    \begin{cases}
         \sum_{j=1}^{t-1} ([y]_j-[x]_j)(r_j + \pi u_j) &\text{if $t >1$},
         \\
         0 &\text{if $t=1$}.
    \end{cases}
\end{equation}
We want to show that measuring the first qubit of the $nQ-t+2$ qubit state 
\begin{equation*}
\ket{\psi_t} \coloneqq \frac{1}{\sqrt{2}} \bigl( e^{i\phi_t}\ket{[x]_{\geq t}}\ket{1}+\ket{[y]_{\geq t}}\ket{0} \bigr)
\end{equation*}
in the eigenbasis of $\cos(\theta_t) X + \sin(\theta_t) Y$ yields the state $\ket{\psi_{t+1}}$. The description of the post-measurement state is given by
\begin{align}
    \nonumber&\frac{1}{\sqrt{2}}\bra{\pm_{r_t}}_1(e^{i\phi_{t}}\ket{[x]_{\geq t}}\ket{1}+\ket{[y]_{\geq t}}\ket{0}) \\
    =& \frac{1}{\sqrt{2}}(e^{i\phi_{t}}\braket{\pm_{r_t}}{[x]_t}\ket{[x]_{\geq{t+1}}}\ket{1}+\braket{\pm_{r_t}}{[y]_t}\ket{[y]_{\geq t+1}}\ket{0}) \\
    =& \frac{1}{\sqrt{2}} \left(  e^{-i [y]_t \left( r_t + \pi u_t \right)} \ket{[y]_{\geq t+1}}\ket{0} + e^{i \phi_t} e^{-i [x]_t \left( r_t + \pi u_t \right)} \ket{[x]_{\geq{t+1}}}\ket{1} \right)
    \\
    =& \frac{1}{\sqrt{2}} \left(  \ket{[y]_{\geq t+1}}\ket{0} + e^{i \left( \phi_t +  \left([y]_t - [x]_t \right) \left( r_t + \pi u_t \right)\right)} \ket{[x]_{\geq{t+1}}}\ket{1} \right)
    \\
    =& \frac{1}{\sqrt{2}}\left( \ket{[y]_{\geq t+1}}\ket{0} + e^{i \phi_{t+1}} \ket{[x]_{\geq{t+1}}}\ket{1} \right).
\end{align}

Since $\ket{\psi_1} = \ket{\psi}$, then after performing all of the $nQ$ measurements described in the proposition statement we are left with the state $\ket{\psi_{nQ+1}}$. It remains to show that $\phi_{nQ+1} = \theta$:
\begin{align}
    \phi_{nQ+1}
    =& \sum\limits_{j=1}^{nQ} ([y]_j-[x]_j)(r_j + \pi u_j)
    \\
    =& \sum\limits_{i=1}^{n}\sum\limits_{j=1}^{Q} ([y_i]_j-[x_i]_j)(r_{(i-1)Q+j} + \pi u_{(i-1)Q+j})
    \\
    =& \Bigl( \sum\limits_{i=1}^{n}\sum\limits_{j=1}^{Q} ([y_i]_j-[x_i]_j)2^{j-1}\theta_i \Bigr) + \Bigl(\pi \sum\limits_{i=1}^{n}\sum\limits_{j=1}^{Q}([y_i]_j-[x_i]_j) u_{(i-1)Q+j} \Bigr)
    \\
    =& \Bigl( \sum\limits_{i=1}^{n} (y_i-x_i)\theta_i \Bigr) +  \Bigl(\pi \sum\limits_{i=1}^{n}\sum\limits_{j=1}^{Q}([y_i]_j-[x_i]_j) u_{(i-1)Q+j} \Bigr) = \theta.
\end{align}
This completes the proof.
\end{proof}

\section{An Optimized Proof of Quantumness}
\label{sec:pf}

This section constructs a proof of quantumness based on the assumed hardness of the LWE problem. Our central protocol in this section, Protocol $\mathbf{Q}$ in \cref{fig:pfprot}, follows the form of the CHSH protocol from \cite{kalai2022quantum}, 
although \cite{kalai2022quantum} uses a quantum homomorphic encryption scheme and we instead use the encryption scheme $J$ from \cref{fig:trapkem}.
First the single prover is given an encrypted version of the first input bit $b$, and returns an encrypted version of the first output bit $d$ (steps 1--2).  Then, the prover is given the second input bit $b'$ as plaintext and returns $d'$ (steps 3--4).  Finally, the verifier decrypts $d$ (step 5) and scores the result (step 6).  

\begin{figure}
    \centering
    \fbox{\parbox{4.71in}{
    \normalsize
\textbf{Protocol~\textbf{Q}:}

\begin{enumerate}
    \item Alice samples $(pk = (A, v), s, t) \leftarrow \Gen_J()$, $b \leftarrow \{ 0, 1 \}$, and computes \[ ct = (a,w) \leftarrow \Encrypt_J( pk , b). \]  Alice broadcasts $(pk, ct)$ to Bob.
    
    \item \label{bobresponse} Bob returns a pair $(y, u)$ where $y \in \mathbb{Z}_q^m$ and $u \in \{ 0, 1 \}^{nQ}$.
    
    \item Alice samples $b' \leftarrow \{ 0, 1 \}$ and broadcasts $b'$ to Bob.
    
    \item Bob returns a bit $d'$.
    
    \item Alice computes $x_0 \coloneqq \Invert ( A, y, t)$
    and $x_1 \coloneqq \Invert ( A, y + v , t )$ and
    \begin{equation}
    \label{eq:hadamard string}
        d \coloneqq u \cdot ([x_0] \oplus [x_1]).
    \end{equation}

    \item If $d \oplus d' = b \wedge b'$, then Bob succeeds.  If $d \oplus d' \neq b \wedge b'$, then Bob fails.
\end{enumerate}
    }}
    \caption{The proof of quantumness protocol, including the behavior of the verifier (Alice).}
    \label{fig:pfprot}
\end{figure}

\subsection{Completeness}

The ideal behavior for the prover (Bob) is given in \cref{fig:prover}. The primary difference between Bob's strategy in \cref{fig:prover} in comparison to \cite{brakerski2018cryptographic,kahanamoku2022classically,kalai2022quantum} is the use of rotated measurements to compute $u$.

\begin{figure}[ht]
    \centering
    \fbox{\parbox{4.71in}{
    \normalsize
\textit{Step 2.}  Bob prepares the state
\begin{equation}
\ket{\phi} \coloneqq 
\frac{1}{\sqrt{2 q^n (2 \tau + 1 )^m}}
    \sum_{x \in \mathbb{Z}_q^n } \hskip0.1in \sum_{c \in \{ 0, 1 \}} \hskip0.1in
    \sum_{\substack{g \in \mathbb{Z}_q^m \\ \left\| g \right\|_\infty 
    \leq \tau }} \ket{x} \ket{c}
    \ket{Ax - cv + g}
\end{equation}
Bob measures the third register of this state to obtain a state of the form $\left| \psi \right> \left| y \right>$.  \\

Bob computes $r_1, \ldots, r_{nQ}$ via the formulas
\begin{equation}
r_{(i-1)Q + j} \coloneqq 2^j \pi a_i / q \quad \text{ for } 
i \in \{ 1, \ldots, n \}, j \in \{ 1, \ldots, Q \},
\end{equation}
and measures the $k$th qubit of $\left| \psi \right>$ in the eigenbasis of 
\begin{equation}
( \cos r_k ) X + ( \sin r_k ) Y
\end{equation}
to obtain
outcomes $(-1)^{u_1} , (-1)^{u_2}, \ldots, (-1)^{u_{nQ}}$.  Bob rotates the remaining qubit (which we denote by $L$) by the unitary operator $\ket{0} \mapsto \ket{0}, \ket{1} \mapsto e^{2 \pi i w / q} \ket{1}$.
 \\

Bob broadcasts $(y, u_1, \ldots, u_{nQ})$ to Alice. \\

\textit{Step 4.}  
Bob sets $\xi = (-1)^{b'}(\pi / 4 )$, measures $L$ in the eigenbasis of
\begin{equation}
(\cos \xi) X + (\sin \xi) Y,
\end{equation}
obtains outcome $(-1)^{d'}$, and broadcasts $d'$ to Alice.
}}
    \caption{The behavior of an honest quantum prover in Protocol~\textbf{Q} in \cref{fig:pfprot}.}
    \label{fig:prover}
\end{figure}

The optimal quantum score for the ordinary CHSH game is $\cos^2 (\pi/8) = (1/2 + \sqrt{2}/4)$.  The next theorem implies that an honest quantum prover will approach that score, provided that certain ratios between the parameters $q,m, \sigma$ and $\tau$ vanish as $\lambda$ tends to infinity.
\begin{theorem}
\label{thm:comp}
If Alice and Bob follow the process given in \cref{fig:pfprot,fig:prover}, 
then
\begin{equation}
\label{compexp}
\prob [ \textnormal{success} ] \geq \cos^2 \left( \frac{\pi}{8} \right) - \frac{5m \sigma^2}{q^2}  -
\frac{m \sigma}{ 2 \tau }.
\end{equation}
\end{theorem}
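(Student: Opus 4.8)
The plan is to condition on the event that the state Bob holds after measuring the third register of $\ket\phi$ is an \emph{exact} generalized GHZ state, track the remaining qubit through \cref{prop:last_qubit} and the $w$-rotation, and reduce $\prob[\textnormal{success}]$ to an average of four single-qubit measurement statistics in which the leading-order phase error cancels.

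\emph{Step 1 (collapse to an exact GHZ state).} Writing $\ket\phi$ as a uniform superposition over triples $(x,c,g)$ with $\|g\|_\infty\le\tau$ and third register $Ax-cv+g$, for a fixed outcome $y$ the post-measurement state of the first two registers is the uniform superposition over pairs $(x,c)$ with $\|y-Ax+cv\|_\infty\le\tau$; by \cref{prop:veryinj} there is at most one such $x$ for $c=0$ and at most one for $c=1$, and --- using $v=As+e$, $\|e\|_\infty\le\tau$, together with \cref{prop:veryinj} again --- whenever both survive, the two preimages $x_0,x_1$ satisfy $x_1=x_0+s$ and equal $\Invert(A,y,t)$ and $\Invert(A,y+v,t)$ respectively, i.e.\ exactly what Alice computes. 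Counting the $y$ for which both branches survive gives $q^n\prod_{k=1}^m(2\tau+1-|e_k|)$ lattice points (disjointness across base points by \cref{prop:veryinj}), so the good event has probability $\prod_k\bigl(1-\tfrac{|e_k|}{2\tau+1}\bigr)\ge 1-\tfrac{\|e\|_1}{2\tau}$ over the measurement, and on it Bob holds $\tfrac1{\sqrt2}(\ket{[x_0]}\ket0+\ket{[x_1]}\ket1)$. If only one branch survives, Bob's last qubit is a computational basis state, $d'$ is then an unbiased coin independent of everything, and $\prob[\textnormal{success}]=\tfrac12\ge 0$; I discard this contribution.

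\emph{Step 2 (the four CHSH statistics).} On the good event, \cref{prop:last_qubit} with $\theta_i=2\pi a_i/q$ (matching Bob's $r_{(i-1)Q+j}=2^j\pi a_i/q$) followed by the $w$-rotation produces the qubit $\tfrac1{\sqrt2}(\ket0+e^{i\mu}\ket1)$; using $x_1-x_0=s$, $a=f^\top A$, and $w=f^\top(As+e)+b\lfloor q/4\rfloor$, the phase reduces modulo $2\pi$ to $\mu=\pi d+\eta+b\zeta$, where $d=u\cdot([x_0]\oplus[x_1])$ is precisely the decrypted bit, $\eta=\tfrac{2\pi}{q}\langle f,e\rangle$, and $\zeta=\tfrac{2\pi}{q}\lfloor q/4\rfloor$. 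Measuring in the eigenbasis of $(\cos\xi)X+(\sin\xi)Y$ with $\xi=(-1)^{b'}\tfrac\pi4$ gives $\prob[d'=0]=\cos^2(\tfrac{\mu-\xi}{2})$; running through the four equally likely $(b,b')$ one checks that $\prob[d\oplus d'=b\wedge b']$ is the same for $d=0$ and $d=1$, and that averaging over $(b,b')$ yields
\[
W\ :=\ \frac12+\frac{\sqrt2}{8}\bigl(\cos\eta+\sin(\eta+\zeta)\bigr),
\]
which equals $\cos^2(\pi/8)$ when $(\eta,\zeta)=(0,\tfrac\pi2)$. The essential point is that each individual $(b,b')$-term deviates from $\cos^2(\pi/8)$ by a quantity \emph{linear} in $\eta$, but these linear terms cancel in $W$.

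\emph{Step 3 (error accounting).} From $1-\cos x\le x^2/2$ and $\expect[\eta]=0$ we get $\expect[W]\ge\cos^2(\pi/8)-\tfrac{\sqrt2}{8}\expect[\eta^2]-\tfrac{\sqrt2}{16}(\zeta-\tfrac\pi2)^2$. I bound $\expect[\eta^2]=\tfrac{4\pi^2}{q^2}\cdot\tfrac12\sum_k\variance[e_k]\le\tfrac{2\pi^2 m\sigma^2}{q^2}$ (using $\expect[f_k^2]=\tfrac12$ and $\variance[e_k]\le\sigma^2$, the latter holding for the truncated Gaussian since symmetric truncation does not increase variance; cf.\ \cref{lem:gaussian_properties}), and $|\zeta-\tfrac\pi2|\le\tfrac{3\pi}{2q}$ since $q$ is an odd prime. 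Since $W\in[0,1]$ and the good-event probability is $\ge 1-\tfrac{\|e\|_1}{2\tau}$, with $\expect[\|e\|_1]\le m\sigma$,
\[
\prob[\textnormal{success}]\ \ge\ \expect[W]-\frac{m\sigma}{2\tau}\ \ge\ \cos^2\!\left(\frac\pi8\right)-\frac{\sqrt2\pi^2 m\sigma^2}{4q^2}-\frac{9\sqrt2\pi^2}{64q^2}-\frac{m\sigma}{2\tau};
\]
and since $m=n(2Q+1)\ge 3$ and $\sigma\ge1$ we have $\tfrac{9\sqrt2\pi^2}{64q^2}\le\tfrac{m\sigma^2}{q^2}$, so the two $q^{-2}$ terms together are at most $\bigl(\tfrac{\sqrt2\pi^2}{4}+1\bigr)\tfrac{m\sigma^2}{q^2}<\tfrac{5m\sigma^2}{q^2}$, which gives the claim.

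I expect Step~1 to be the main obstacle: one has to argue rigorously that the \emph{approximate} claw state prepared in \cref{fig:prover} becomes an \emph{exact} generalized GHZ state after the $y$-measurement, with the stated probability, combining the truncated support of $\ket\phi$, the very-injectivity of $A$ (\cref{prop:veryinj}), and the lattice-point count, and checking that $\Invert$ reproduces Alice's $x_0,x_1$. Everything after that is routine trigonometry and probability; the one genuinely load-bearing observation downstream is the first-order cancellation in $W$, which is exactly what turns the $O(\sigma/q)$ phase error into an $O(\sigma^2/q^2)$ loss in the success probability.
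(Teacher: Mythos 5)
Your proposal is correct and follows essentially the same route as the paper's proof: condition on both preimages surviving (via \cref{prop:veryinj}, \cref{prop:trap}, and the lattice-point count, giving the good event probability $\geq 1-\tfrac{m\sigma}{2\tau}$ in expectation), apply \cref{prop:last_qubit} to track the phase $\pi d+\tfrac{2\pi}{q}f^\top e+b\cdot\tfrac{2\pi}{q}\lfloor q/4\rfloor$, and bound the CHSH-style statistics using $\expect[f^\top e]=0$ and $\expect[(f^\top e)^2]\leq m\sigma^2/2$. The only differences are bookkeeping — you handle $\lfloor q/4\rfloor$ exactly via $|\zeta-\tfrac{\pi}{2}|\leq\tfrac{3\pi}{2q}$ (a $O(1/q^2)$ loss absorbed using $m\sigma^2\geq 3$) and discard the bad-event contribution, whereas the paper replaces $\lfloor q/4\rfloor$ by $q/4$ at cost $1/q$ and credits the bad event $1/2$; both choices land on the same final bound.
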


\begin{proof}
Let $T \coloneqq \{ e \in \mathbb{Z}_q \mid | e | \leq \tau \}$.  
Consider the map $S \colon \mathbb{Z}_q^n \times \{ 0, 1 \}
\times T^m \to \mathbb{Z}_q^m$ defined by $S ( x, c, g ) = Ax - cv + g$.

\cref{prop:veryinj} implies that any $y$ can have at most one pre-image in the set 
\begin{equation}
\label{eq:zerodomain}
\mathbb{Z}_q^n \times \{ 0 \}
\times T^m 
\end{equation}
and at most one pre-image in the set
\begin{equation}\label{eq:onedomain}
 \mathbb{Z}_q^n \times \{ 1 \}
\times T^m.
\end{equation}
Moreover, when $y$ has two pre-images $(x_0', 0, g_0)$ and $(x_1', 1, g_1)$ under $S$, we have
\begin{enumerate}
    \item $x_0' = x_0$  by \cref{prop:trap} since $\norm{Ax_0' - y}_\infty = \norm{g_0}_{\infty} \leq \tau$.
    \item $x_1' = x_1$ by \cref{prop:trap} since $\norm{Ax_1' - (y+v)}_\infty = \norm{g_1}_{\infty} \leq \tau$.
    \item  $x_1' = x_0' + s$ by \cref{prop:veryinj} since $\norm{A(x_1'- (x_0'+s))}_\infty \leq 2\tau + \norm{v-As}_\infty\leq 3\tau$.
\end{enumerate}

Let $U_0$ denote the set of all values of $y\in \mathbb{Z}_q^m$ that have a pre-image under $S$ in (\ref{eq:zerodomain}) and let $U_1$ denote the set of all values of $y\in \mathbb{Z}_q^m$ that have a pre-image under $S$ in set (\ref{eq:onedomain}).  A simple counting argument shows that, conditioned on the value of $e \coloneqq v - As \in \mathbb{Z}_q^m$, we have
\begin{equation}
\prob [ y \in U_0 \cap U_1 \mid e] = \frac{ \prod_{i=1}^m (2\tau +1 - |e_i|) }{ (2 \tau + 1 )^m }.
\end{equation}
Therefore we have the following, in
which we apply \cref{lem:abs}.
\begin{align}
\nonumber
\prob [ y \in  U_0 \cap U_1 ] & =  E \left[ \frac{ \prod_{i=1}^m (2\tau + 1 - | e_i |) }{ (2 \tau + 1 )^m } \Bigm| e \leftarrow G ( \sigma, \tau)^m \right] \\
\nonumber
& =  E \left[ \prod_{i=1}^m \left(  1 - \frac{   |e_i| }{ (2 \tau + 1 )^m } \right) \Bigm| e \leftarrow G ( \sigma, \tau)^m \right] \\
\nonumber
& \geq  E \left[ 1 - \sum_{i=1}^m \frac{   |e_i| }{ (2 \tau + 1 ) } \Bigm| e \leftarrow G ( \sigma, \tau)^m \right] \\
\nonumber
& =  1 - \frac{E \bigl[ \sum_{i=1}^m   |e_i|  \mid e \leftarrow G ( \sigma, \tau)^m \bigr] }{ (2 \tau + 1 ) } 
\\
&\geq  1 - \frac{m \sigma}{2 \tau + 1} \geq  1 - \frac{m \sigma}{2 \tau}.
\label{eq:validbound}
\end{align}

The following lemma upper bounds the probability of failure when $y\in U_0 \cap U_1$.
\begin{lemma}
\label{lem:bobprime}
If Alice and Bob follow the process in \cref{fig:pfprot,fig:prover}, then
\begin{equation}
\label{eq:noabort}
\prob [ \textnormal{failure}  \mid y\in U_0 \cap U_1]  \leq  
\sin^2 \left( \frac{\pi }{8} \right)
+ \frac{5 m \sigma^2}{q^2} + \frac{1}{q}.
\end{equation}
\end{lemma}

\begin{proof}
When $y\in U_0 \cap U_1$, by arguments made earlier in the proof, we have
\begin{equation}
    \ket{\psi} = \frac{1}{\sqrt{2}}(\ket{x_1}\ket{1} + \ket{x_0}\ket{0}),
\end{equation}
where $x_1 = x_0 + s$.

After Bob measures qubits $1, 2, \ldots, nQ$ of $\ket{\psi}$ to
obtain outcomes 
\begin{equation} 
(-1)^{u_1}, (-1)^{u_2}, \ldots, (-1)^{u_{nQ}}, 
\end{equation} 
the remaining qubit $L$ (by \cref{prop:last_qubit}) is in state $\frac{1}{\sqrt{2}} \bigl( \ket{0} + e^{i \theta} \ket{1} \bigr),$
where
\begin{equation}
\begin{aligned}
\label{eq:phase}
\theta & \coloneqq  \frac{2 \pi  a \cdot (x_0 - x_1) }{q}
+ \pi ([x_0] - [x_1]) \cdot (u_1, \ldots, u_{nQ})  \\
& =  - \frac{2 \pi (a\cdot s)}{q}
+ \pi ([x_0] \oplus [x_1]) \cdot (u_1, \ldots, u_{nQ}) \mod{2\pi}.
\end{aligned}
\end{equation}
Bob then rotates this state by the unitary operator $\ket{0} \mapsto \ket{0}, \ket{1}  \mapsto e^{i\, 2 \pi w / q} \ket{1}$, where $w = a\cdot s + f^\top e + b \floor{q/4}$, to obtain the state $\frac{1}{\sqrt{2}} \bigl( \ket{0} +
e^{i \beta} \ket{1} \bigr),$
where $\beta \coloneqq 
  \frac{2 \pi (f^\top e + b \lfloor q/4 \rfloor )}{q}
+ \pi ([x_0] \oplus [x_1]) \cdot (u_1, \ldots, u_{nQ})$.

Measuring this qubit with the observable $(\cos \gamma ) X + (\sin \gamma ) Y$, for any $\gamma \in \mathbb{R}$, yields outcome $+1$ with probability $\cos^2 ( (\gamma - \beta)/2)$ and outcome $-1$ with probability  $\sin^2 ( (\gamma - \beta)/2)$.
We therefore obtain the following formula for the failure probability:
\begin{equation}
\begin{aligned}
&\prob [ \textnormal{failure} \mid y\in U_0 \cap U_1 ] 
\\
&= \frac{1}{4} \expect \Biggl[ \sin^2 \left( \frac{ \pi f^\top e }{q} - \frac{\pi}{8} \right) +
\sin^2 \left( \frac{   \pi f^\top e }{q} + \frac{\pi}{8} \right) \\
&  \quad + \sin^2 \left( \frac{ \pi (f^\top e + \lfloor q/4 \rfloor ) }{q} - \frac{\pi}{8} \right) +
\cos^2 \left( \frac{ \pi  ( f^\top e +  \lfloor q/4 \rfloor) }{q} + \frac{\pi}{8} \right) \Biggr],
\end{aligned}
\end{equation}
where the expectation is over  $e \leftarrow G ( \sigma, \tau )^m$ and $f \leftarrow \{ 0, 1 \}^m$. For the rest of this proof, we consider all expectations to be over the conditions $e \leftarrow G ( \sigma, \tau )^m$ and $f \leftarrow \{ 0, 1 \}^m$.

We can obtain an upper bound on the above expression for $\prob [ \textnormal{failure} ]$ by replacing both instances of $\lfloor q / 4 \rfloor$ with $q/4$, and adding a term at the end of the expression to account for any increase that these replacements may cause.  Since the derivative of $\sin^2 ( \cdot )$ is always between $-1$ and $1$, inserting the term $(1/q)$ suffices. Therefore,
\begin{equation}
\begin{aligned}
\prob [ \textnormal{failure} \mid y\in U_0 \cap U_1]  & \leq 
\frac{1}{4} E \Biggl[ \sin^2 \left( \frac{ \pi f^\top e }{q} - \frac{\pi}{8} \right) +
\sin^2 \left( \frac{ \pi f^\top e }{q} + \frac{\pi}{8} \right) \\
&  + \sin^2 \left( \frac{ \pi f^\top e  }{q} + \frac{\pi}{8} \right) +
\cos^2 \left( \frac{ \pi  f^\top e  }{q} + \frac{3\pi}{8} \right) \Biggr]  + \frac{1}{q} \\
& = 4 \cdot \frac{1}{4} E \Biggl[ \sin^2 \left( \frac{ \pi f^\top e }{q} + \frac{\pi}{8} \right) \Biggr] + \frac{1}{q} . 
\end{aligned}
\end{equation}
Using \cref{lem:sin}, we get that
\begin{equation}
\prob [ \textnormal{failure} \mid y\in U_0 \cap U_1 ] \leq \sin^2 \left( \frac{\pi}{8} \right) + \sin \left( \frac{ \pi }{4 } \right) E \biggl[ \frac{ \pi f^\top e }{q} \biggr] +
 E \Biggl[ \left( \frac{  \pi f^\top e } {q}\right)^2  \Biggr] +  \frac{1}{q}.
\end{equation}
It is clear that  $\expect [f^\top e] = 0 $, while
\begin{equation}
    \expect[ (f^\top e)^2 ] = \sum_{i=1}^m \expect [ (f_i e_i)^2 ] =  \sum_{i=1}^m \frac{1}{2} \expect [ e_i^2 ] \leq m\sigma^2/2,
\end{equation}
where the first equality follows from \cref{lem:sum} and the last inequality follows from \cref{lem:var}. 

Therefore,
\begin{equation}
\prob [ \textnormal{failure} \mid y\in U_0 \cap U_1 ] \leq
\sin^2 \left( \frac{\pi}{8 } \right) +
\frac{\pi^2 \sigma^2 m}{2  q^2 } + \frac{1}{q} \leq
\sin^2 \left( \frac{\pi }{8} \right) +
\frac{5 \sigma^2 m}{  q^2 } + \frac{1}{q},
\end{equation}
as desired.
\end{proof}

Now we conclude the proof of \cref{thm:comp}. When the event $(y \in U_0 \cap U_1)$ does not occur, Bob merely measures a computational basis state on $nQ+1$ qubits using $nQ+1$ observables of the form $(\cos \gamma) X + (\sin \gamma) Y$. Therefore, the bits $u_1, \ldots, u_{nQ}$ and $d'$ that Bob returns to Alice are distributed uniformly at random, and Bob thus succeeds at Protocol $\mathbf{Q}$ with probability $1/2$.  Therefore, by \cref{lem:bobprime} and inequality (\ref{eq:validbound}), the overall probability that Bob fails in the protocol is upper bounded by
\begin{equation}
\sin^2 \left( \frac{\pi }{8} \right) + \frac{5m \sigma^2}{q^2} + \frac{1}{q} + \frac{1}{2} \cdot \frac{m \sigma}{2 \tau}.
\end{equation}
Since $\tau \leq q/4$ and $m \sigma \geq 1$, we can combine the last two summands
above to obtain
\begin{equation}
\prob[ \textnormal{failure} ] \leq
\sin^2 \left( \frac{\pi }{8} \right) + \frac{5m \sigma^2}{q^2} +  \frac{ m \sigma}{ 2 \tau},
\end{equation}
which implies the desired result.
\end{proof}

\subsection{Soundness}
\label{subsec:classical soundness}

The goal of this subsection is to put an upper bound on the probability that a classical prover could succeed at Protocol~\textbf{Q} in \cref{fig:pfprot}.  We model the behavior of a classical prover in \cref{fig:cheat}.  Bob responds to Alice's queries in two rounds using efficient classical algorithms while holding a private register $p$ in memory between the two rounds.

\begin{figure}[ht]
    \centering
    \fbox{\parbox{4.71in}{ 
    \normalsize
\textit{Step 2.}  Bob receives input $(pk, ct)$ and computes
\begin{equation*}
(y, u, p) \leftarrow \FirstResponse(pk, ct),
\end{equation*}
where FirstResponse is a non-uniform probabilistic polynomial-time algorithm.

\vskip0.2in

\textit{Step 4.} Bob receives $b'$ from Alice and computes
\begin{equation*}
d' \leftarrow \SecondResponse(b', p),
\end{equation*}
where SecondResponse is a non-uniform probabilistic polynomial-time algorithm.
}}
    \caption{A model of a classical adversary
    for Protocol~\textbf{Q} in \cref{fig:pfprot}. The register $p$ denotes internal memory held by the adversary between the rounds of communication.
}
    \label{fig:cheat}
\end{figure}

\begin{theorem}
\label{thm:pfsound}
Suppose that the $\LWE_{n,q,G ( \sigma, \tau )}$ problem is hard.  If Alice and Bob follow the process in \cref{fig:pfprot,fig:cheat}, then
\begin{equation}\label{soundexp}
\prob [ \textnormal{success}] \leq \frac{3}{4} + \negl.
\end{equation}
\end{theorem}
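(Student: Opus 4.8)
The plan is to reduce classical soundness directly to the IND-CPA security of the encryption scheme $J$ established in \cref{prop:regev2} (which itself rests on the hardness of $\LWE_{n,q,G(\sigma,\tau)}$), adapting the soundness argument of \cite{kalai2022quantum}. Fix any classical prover given by algorithms $\FirstResponse, \SecondResponse$ as in \cref{fig:cheat}. The starting observation is to unpack the predicate $d \oplus d' = b \wedge b'$ according to Alice's second bit: when $b' = 0$ the prover wins iff $d' = d$, and when $b' = 1$ it wins iff $d' = d \oplus b$. Consequently, if a given execution would win under \emph{both} values of $b'$, then the two possible second-round answers $d'_0 \coloneqq \SecondResponse(0,p)$ and $d'_1 \coloneqq \SecondResponse(1,p)$ satisfy $d'_0 \oplus d'_1 = b$. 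That is, the encrypted bit $b$ is recoverable from the prover's own responses, without ever using the trapdoor $t$ --- which is exactly what lets us turn a good prover into an IND-CPA attacker on $J$.

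To make this quantitative, I would fix an ``execution context'': a choice of $pk = (A,v)$, $s$, $t$, $b$, ciphertext $ct$, and first-round output $(y,u,p)$ of $\FirstResponse(pk,ct)$. Given this, $d = u \cdot ([x_0] \oplus [x_1])$ is a fixed bit, where $x_0 = \Invert(A,y,t)$ and $x_1 = \Invert(A,y+v,t)$. Let $\alpha_0$ (resp.\ $\alpha_1$) be the probability, over $\SecondResponse$'s internal randomness, that $\SecondResponse(0,p) = d$ (resp.\ $\SecondResponse(1,p) = d \oplus b$); these are precisely the conditional winning probabilities given $b' = 0$ and $b' = 1$, so the conditional success probability of Protocol~$\mathbf{Q}$ in this context is $\tfrac{1}{2}(\alpha_0 + \alpha_1)$. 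Now define the reduction $\mathcal{B}(pk, ct)$: it runs $(y,u,p) \leftarrow \FirstResponse(pk,ct)$, then runs $d'_0 \leftarrow \SecondResponse(0,p)$ and $d'_1 \leftarrow \SecondResponse(1,p)$ with fresh independent randomness, and outputs the guess $b^* \coloneqq d'_0 \oplus d'_1$. Checking the four cases of ``$d'_0 = d$?'' and ``$d'_1 = d \oplus b$?'' shows that $b^* = b$ exactly when these two events both hold or both fail, so in this context $\prob[b^* = b] = \alpha_0 \alpha_1 + (1-\alpha_0)(1-\alpha_1)$. Note that $\mathcal{B}$ is a non-uniform probabilistic polynomial-time algorithm and never touches the trapdoor.

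The last step combines the two. From $(1-\alpha_0)(1-\alpha_1) \geq 0$ one gets the elementary inequality $\alpha_0 \alpha_1 + (1-\alpha_0)(1-\alpha_1) \geq (\alpha_0 + \alpha_1) - 1$. Averaging over all contexts --- the pair $(pk, ct)$ received by $\mathcal{B}$ has exactly the distribution appearing in \cref{prop:regev2} --- yields $\prob[b^* = b] \geq \expect[\alpha_0 + \alpha_1] - 1 = 2\,\prob[\textnormal{success}] - 1$. Since $\mathcal{B}$ is an efficient non-uniform classical (hence quantum) algorithm, \cref{prop:regev2} gives $\prob[b^* = b] \leq \tfrac{1}{2} + \negl$, and therefore $\prob[\textnormal{success}] \leq \tfrac{3}{4} + \negl$, which is \cref{soundexp}.

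I expect the main obstacle to be purely conceptual and concentrated in the first paragraph: recognizing that a classical prover who beats the CHSH threshold is implicitly committing to the plaintext $b$ through the XOR of its two candidate second-round replies, and that this commitment can be opened with no trapdoor. Everything downstream --- the four-case check giving $\prob[b^* = b] = \alpha_0\alpha_1 + (1-\alpha_0)(1-\alpha_1)$, the inequality $\alpha_0\alpha_1 + (1-\alpha_0)(1-\alpha_1) \geq \alpha_0 + \alpha_1 - 1$, and verifying that $\mathcal{B}$'s input distribution matches \cref{prop:regev2} --- is routine. The one detail that needs care is that $\mathcal{B}$ must be trapdoor-free: it never computes $d$ itself, only $d'_0 \oplus d'_1$, which is why the argument survives even though the honest verifier's acceptance in Protocol~$\mathbf{Q}$ genuinely depends on the secret trapdoor.
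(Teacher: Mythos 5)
Your proof is correct, but it takes a genuinely different route from the paper. The paper argues by contradiction through a sequence of two-player experiments: it recasts the adversary as Charlie and David playing an encrypted CHSH game (Experiment $\mathbf{C}$), replaces Charlie's trapdoor-based computation of $d$ by a majority vote over $\lambda$ rewound samples $d'_k \leftarrow \SecondResponse(b'_k, p)$ with a Chernoff bound showing only negligible loss (Experiment $\mathbf{C}'$), and then switches the ciphertext to an encryption of $0$ (Experiment $\mathbf{C}''$), where the players face the genuine nonlocal CHSH game and are thus capped at $\tfrac{3}{4}$; a non-negligible gap between $\mathbf{C}'$ and $\mathbf{C}''$ would violate \cref{prop:regev2}. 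You instead build a single direct IND-CPA distinguisher: rewind $\SecondResponse$ exactly twice, on $b'=0$ and $b'=1$, and output $d'_0 \oplus d'_1$ as the guess for $b$, then derive $\prob[b^*=b] \geq 2\prob[\textnormal{success}]-1$ from the identity $\prob[b^*=b \mid \text{context}] = \alpha_0\alpha_1 + (1-\alpha_0)(1-\alpha_1)$ and the elementary inequality $\alpha_0\alpha_1 + (1-\alpha_0)(1-\alpha_1) \geq \alpha_0+\alpha_1-1$, so that \cref{prop:regev2} alone forces $\prob[\textnormal{success}] \leq \tfrac{3}{4}+\negl$. Your argument is shorter, avoids the Chernoff/majority-vote machinery, and never imports the classical CHSH bound of $\tfrac{3}{4}$ as an external fact --- it falls out of the algebra; the key observation that winning on both challenges reveals $b$ via the XOR of the two candidate answers, with no trapdoor needed, is sound, and the analysis legitimately conditions on $d$ even though the reduction never computes it. What the paper's longer route buys is modularity: its experiment structure mirrors the generic nonlocal-game compiler of \cite{kalai2022quantum} and would adapt to other games where one player's answer is estimated by sampling, whereas your XOR trick exploits the specific structure of the CHSH predicate $d \oplus d' = b \wedge b'$.
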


Our proof method comes from Section~3.1 of \cite{kalai2022quantum}.
We first make the following elementary observation.  Suppose that $T ( )$ is a non-uniform probabilistic polynomial-time algorithm that outputs a single bit, and that one wishes to optimally guess the output of $T ( )$.  Clearly, the highest probability with which this can be done is $\kappa \coloneqq \max \left\{ \xi, 1 - \xi \right\},$ where $\xi$ denotes the expected value of $T()$.
Consider the following procedure, which
uses the majority function (see \cref{sec:prelim}).
\begin{enumerate}
    \item Sample $z_1, z_2, \ldots, z_\lambda \leftarrow T ( )$.
    \item Output $\MAJ ( z_1, z_2, \ldots , z_\lambda )$.
\end{enumerate}
The Chernoff bound implies that if $z$ is obtained from this procedure and $z' \leftarrow T( )$ is a new sample, then $P ( z = z' )$ is  within $\exp ( - \Omega ( \lambda^{1/2} ))$ of the optimal guessing probability $\kappa$.

\begin{proof}[Proof of \cref{thm:pfsound}]
Suppose, for the sake of contradiction, that there exists a classical adversary 
\begin{equation*}
\mathcal{A} = (\FirstResponse,
\SecondResponse)
\end{equation*}
that achieves a success probability 
at Protocol~$\mathbf{Q}$ that is non-negligibly higher than $\frac{3}{4}$.  Consider
Experiment $\mathbf{C}$, shown in \cref{fig:chshmod}, in which two parties play a modified version of the CHSH game.
The Referee encrypts the first input bit $b$ using the scheme from Figure~\ref{fig:trapkem}, and transmits the resulting encryption to both Charlie and David while also giving Charlie the trapdoor $t$.  Then, Charlie and David play the CHSH game by doing a simulation of the behavior of the adversary $\mathcal{A}$.  The winning probability in Experiment~$\mathbf{C}$ is the same as the success probability for $\mathcal{A}$ in Protocol~$\mathbf{Q}$.

\begin{figure}[ht]
    \centering
    \fbox{\parbox{4.71in}{ \normalsize
\textbf{Experiment} $\mathbf{C}$: \\

Participants: Referee, Charlie, David

\begin{enumerate}
\item Referee chooses input bits $b, b' \leftarrow \{ 0, 1 \}$.  She computes $(pk, s, t) \leftarrow \Gen_J ()$ and $ct \leftarrow \Encrypt_J (pk, b)$ and sends $(pk, ct)$ to Charlie and David.  The referee also sends the trapdoor $t$ to Charlie.  

\item David computes $(y, u, p ) \leftarrow \FirstResponse( pk, ct )$ and shares $(y, u, p)$ with Charlie.

\item Referee transmits $b$ to Charlie and transmits $b'$ to David.

\item David computes $d' \leftarrow \SecondResponse( b', p )$ and transmits $d'$ back to the Referee.

\item \label{Charlie output} Charlie computes $x_0 \coloneqq \Invert ( A, y, t)$
    and $x_1 \coloneqq \Invert ( A, y-v , t )$ and
    \begin{equation}
    d = u \cdot ([x_0] \oplus [x_1]).
    \end{equation}
Charlie transmits $d$ to the Referee.

\item If $d \oplus d' = b \wedge b'$, then Charlie and David win; if not, they lose.
\end{enumerate}
}}
\caption{Two players (Charlie and David) play a modified version of the CHSH game using procedures FirstResponse and SecondResponse from \cref{fig:cheat}.}
\label{fig:chshmod}
\end{figure}

Note that if at step~\ref{Charlie output} Charlie was able to optimally guess the output of the probabilistic algorithm SecondResponse, then he could compute the output $d$ that maximizes their winning probability. Since this strategy would have the largest winning probability, we know that the winning probability of Experiment~$\mathbf{C}$ must be bounded above by it. Now consider Experiment~$\mathbf{C}'$,
shown in \cref{fig:chshmod2}, which has two changes from Experiment~$\mathbf{C}$.  First, Charlie is not given the trapdoor $t$ at step~\ref{refchoice}.  Also, at step~\ref{sampproc}, rather than attempt to compute the output bit $d$ directly, Charlie performs sampling to estimate the response $d$ that will maximize Charlie and David's winning probability. The winning probability in Experiment~$\mathbf{C}'$ can be at most negligibly lower (specifically, no more than $\exp ( - \Omega ( \lambda^{1/2} ))$ lower) than that where Charlie is able to optimally guess the output of SecondResponse, and thus it is at most negligibly lower than the winning probability in Experiment~$\mathbf{C}$.  Therefore, the winning probability in Experiment $\mathbf{C}'$ is also non-negligibly higher than $\frac{3}{4}$.

\begin{figure}[ht]
    \centering
    \fbox{\parbox{4.71in}{ 
    \normalsize
\textbf{Experiment} $\mathbf{C}'$: \\

Participants: Referee, Charlie, David

\begin{enumerate}
\item \label{refchoice} Referee chooses input bits $b, b' \leftarrow \{ 0, 1 \}$.  She computes $(pk, s, t) \leftarrow \Gen_J ()$ and $ct \leftarrow \Encrypt_J (pk, b)$ and sends $(pk, ct)$ to Charlie and David.

\item David computes $(y, u, p ) \leftarrow \FirstResponse( pk, ct )$ and shares $(y, u, p)$ with Charlie.

\item Referee transmits $b$ to Charlie and transmits $b'$ to David.

\item David computes $d' \leftarrow \SecondResponse( b', p )$ and transmits $d'$ back to the Referee.

\item \label{sampproc} Charlie samples $b'_1, \ldots, b'_\lambda \leftarrow \{ 0, 1 \}$, samples $d'_k \leftarrow \SecondResponse( b'_k, p )$ for each $k \in \{ 1, 2, \ldots, \lambda \}$, and computes
\begin{equation}
d \coloneqq \MAJ \left\{  d'_k \oplus (b \wedge b'_k) \mid k \in \{ 1, 2, \ldots, \lambda \} \right\}.
\end{equation}
Charlie transmits $d$ to the referee.

\item If $d \oplus d' = b \wedge b'$, then Charlie and David win; if not, they lose.
\end{enumerate}
}}
\caption{Experiment $\mathbf{C}'$ is the same as Experiment $\mathbf{C}$, except for steps \ref{refchoice} and \ref{sampproc}.}
\label{fig:chshmod2}
\end{figure}

Lastly, let Experiment $\mathbf{C}''$ denote a modified version of the Experiment $\mathbf{C}'$ in which, at step~\ref{refchoice}, the Referee generates
$ct$ via the procedure $ct \leftarrow \Encrypt_J ( pk , 0 )$ instead of $ct \leftarrow \Encrypt_J ( pk, b )$.  In Experiment $\mathbf{C}''$, Charlie and David are playing the original version of the CHSH game, in which both must compute their own outputs without any information about the other player's inputs.  In this case, we know that Charlie and David cannot win with probability more than $\frac{3}{4}$.  Therefore, the winning probabilities in Protocol $\mathbf{C}'$ and Protocol $\mathbf{C}''$ are non-negligibly different.  But this is a contradiction because it provides an efficient way to distinguish the probability distributions
\begin{equation}
[ (b, ct, pk) \mid 
(pk, s, t ) \leftarrow \Gen_J ( ), b \leftarrow \{ 0, 1 \}, ct \leftarrow \Encrypt_J (pk, b )] 
\end{equation}
and 
\begin{equation}
[ (b, ct, pk) \mid 
(pk, s, t ) \leftarrow \Gen_J ( ),  b \leftarrow \{ 0, 1 \}, ct \leftarrow \Encrypt_J (pk, 0 )] 
\end{equation}
which violates \cref{prop:regev2}.
\end{proof}

\subsection{Parameter Choices}
\label{subsec:parchoices}

Let $c, \epsilon$ be constant positive real numbers, and let the functions $q = q ( \lambda)$ and $\sigma = \sigma ( \lambda )$ be such that $\sigma$ is equal to $n^c$ and $q$ is an odd prime number between $n^{2+\epsilon} \sigma$ and $2 n^{2+\epsilon} \sigma$.
(See \cref{fig:parameters} for the definitions of the other parameters.)
Then,
\begin{equation*}
q = \Theta ( n^{2 + \epsilon + c } ), \quad Q  =  \Theta ( \log n ), \quad  m  =  \Theta ( n \log n ), \quad \tau  =  \Theta ( n^{1 
+ c + \epsilon} / (\log n )^2 ).
\end{equation*}
The lower bound (\ref{compexp}) from the previous completeness theorem for Protocol~$\mathbf{Q}$ then satisfies
\begin{align*}
&\cos^2 \left( \frac{\pi}{8} \right) - \frac{5m \sigma^2}{q^2}  -
\frac{m \sigma}{ 2 \tau } \\
\geq & \cos^2 \left( \frac{\pi}{8} \right) - O \left( \frac{ (n \log n )(n^{2 c})}{n^{4+2c + 2\epsilon}} \right)
- O \left( \frac{ (n \log n )(n^c)}{n^{1+ c + \epsilon} (\log n)^{-2}} \right) \\
= & \cos^2 \left( \frac{\pi}{8} \right) - O \left( \frac{\log n}{n^{3+2 \epsilon }} \right)
- O \left( \frac{ (\log n )^3}{n^\epsilon} \right),
\end{align*}
which tends to $\cos^2 ( \pi / 8)$ as  $n$ tends to infinity.  Meanwhile, assuming that $\LWE_{n, q , G ( \sigma )}$ is hard (noting that $G( \sigma, \tau)$ is negligibly different from $G ( \sigma )$ with these parameters), the upper bound in inequality (\ref{soundexp}) applies and tends to $\frac{3}{4}$ as $n$ tends to infinity.  Therefore, a constant gap is achieved between the best quantum success probability and our upper bound on the classical success probability.

\section{Blind Single-Qubit State Preparation}
\label{sec:blindrsp}

This section constructs Protocol~\textbf{P} in \cref{fig:rsp protocol} which allows a classical client to instruct a quantum server to prepare a single-qubit state $\frac{1}{\sqrt{2}}Z^b(\ket{0} + e^{i\, 2\pi \alpha/q}\ket{1})$, where $\alpha \in \mathbb{Z}_q$ and $b\in \{0,1\}$. Here, $\alpha$ is chosen by the client but kept hidden from the server while $b$ is a random bit that depends on the server's measurement outcomes. The client can compute $b$ after receiving the server's response.

To prepare the desired state in a secure manner, a classical client, Alice, interacts with the quantum server, Bob, in the following way. First, Alice sends an encoding of the (partial) classical description of the quantum state to Bob. As in \cref{sec:pf}, we use ideas from the LWE-based public key encryption scheme introduced by Regev~\cite{regev2009lwe} to design an encoding. Upon receiving the public key and the encoded message, Bob performs a quantum circuit and a series of measurements on the resulting state to steer the final qubit as described in \cref{sec:rotated measurements}. Bob rotates the last qubit, according to the information sent by Alice, to prepare the desired state up to a Pauli-$Z$ pad. The Pauli-$Z$ padding is partially determined by the measurement outcomes on Bob's end. Finally, Bob sends the measurement outcomes to Alice and she computes the full classical description of the prepared state. Namely, Bob's message allows Alice to learn $b$.

Throughout this section, we assume $\tau \geq 2m\sigma$, which is required to bound  the completeness error of Protocol~\textbf{P}. In terms of the parameters $n,q,\sigma$, this means we require 
\begin{eqnarray}
q/\sigma & \geq & 8n^2(2 \ceil{\log q}+1)^2 \ceil{\log q}
\end{eqnarray}
(see \cref{fig:parameters}), which can be satisfied for, e.g., $q = \Omega(n^3)$ and $\sigma = O(1)$. 

\begin{figure}
\centering
  \fbox{\parbox{5.5in}{
\begin{flushleft}
\textbf{\large Protocol \textbf{P}:}
\\
\smallskip
\textbf{Input:} \hspace{1.65mm} Alice: $\alpha \in \mathbb{Z}_q$.
\\
\textbf{Output:} Alice: classical description of $\ket{\alpha,b}$ (\cref{eq:alpha_b}). Bob: $\ket{\beta}$ (\cref{eq:beta}).
\end{flushleft}

\begin{enumerate}
    \item Alice samples $(A, t) \leftarrow \textnormal{GenTrap}()$, $s \leftarrow \mathbb{Z}_q^n, e \leftarrow 
    G ( \sigma, \tau)^m$, and $f \leftarrow \{ 0, 1 \}^m$. Then Alice broadcasts $(A,v) \coloneqq (A, As + e)$ and $(a, w) \coloneqq (f^\top A, f^\top v - \alpha)$ to Bob.
\item  Bob prepares the state
\begin{eqnarray}
\ket{\phi} & = &    
\frac{1}{\sqrt{2}\sqrt{q^n(2\tau+1)^{m}}}
    \sum_{\substack{x \in \mathbb{Z}_q^n}} \hskip0.1in \sum_{c \in \{ 0, 1 \}} \hskip0.1in
    \sum_{\substack{g \in \mathbb{Z}_q^m \\ \left\| g \right\|_\infty 
    \leq \tau }} \ket{x}\ket{c}
    \left| Ax - cv + g \right>,
\label{rspprepstate}
\end{eqnarray}
Bob measures the third register of $\ket{\phi}$ to obtain a state of the form $\ket{\psi}\ket{y}$. Note that $\ket{\psi}$ is an $(nQ+1)$-qubit state.

\vskip0.1in

Bob computes $r_{(i-1)Q+j} \coloneqq 2^j\pi a_i/q$ for all $i\in [n]$, $j\in [Q]$, and $r \coloneqq 2\pi w/q$.

\vskip0.1in

For each $i \in [nQ]$, Bob measures the $i$th qubit of the state $\ket{\psi}$ in the eigenbasis of $(\cos r_i ) X + (\sin r_i) Y$ and obtains outcome $(-1)^{u_i}$. 
\vskip0.2in

Let $\ket{\psi'}$ denote the state of the last  (unmeasured) qubit of $\ket{\psi}$. Bob prepares
\begin{equation} \label{eq:beta}
\ket{\beta} \coloneqq
\begin{bmatrix}
1 & 0 \\ 0 & e^{-ir} 
\end{bmatrix} \ket{\psi'}. 
\end{equation}
Bob broadcasts the vector $y\in \mathbb{Z}_q^m$ and the bit string $u\in \{0,1\}^{nQ}$ to Alice.  
\item  Alice computes whether $y$ belongs to the set
    \begin{equation}
    \{Ax + g \mid x\in \mathbb{Z}_q^n, g \in \mathbb{Z}_q^m, \norm{g}_\infty \leq \tau\} \cap \{Ax - v + g \mid x\in \mathbb{Z}_q^n, g \in \mathbb{Z}_q^m, \norm{g}_\infty \leq \tau\}.
    \nonumber
    \end{equation}
    If not, Alice aborts. Otherwise, Alice computes $x_0 \coloneqq \textnormal{Invert} ( A, y, t ) \in \mathbb{Z}_q^n$,  $x_1 \coloneqq x_0 + s \in \mathbb{Z}_q^n$,  $z \coloneqq [x_0]\oplus [x_1] \in \{0,1\}^{nQ}$, and $b \coloneqq z\cdot u \in \{0,1\}$. Finally, Alice computes the classical description of the single-qubit state
    \begin{equation}
    \label{eq:alpha_b}
        \ket{\alpha, b} \coloneqq Z^b \,  \frac{1}{\sqrt{2}}( \ket{0} + e^{i \, 2\pi \alpha/q } \ket{1}).
    \end{equation}
\end{enumerate}}}
\caption{The $2$-message blind remote state preparation protocol, including the behavior of the client (Alice) and an honest server (Bob).}
    \label{fig:rsp protocol}
\end{figure}

\subsection{Completeness}
\label{subsec:rsp completeness}

The purpose of this subsection is to prove \cref{thm:rsp completeness} which shows that the state $\ket{\beta}$ produced by Protocol~\textbf{P} in \cref{fig:rsp protocol}  is close to the state $\ket{\alpha,b}$, whose classical description is held by Alice. We first prove a technical lemma.

\begin{lemma}\label{lem:no_abort_rsp}
If Alice and Bob follow the process in \cref{fig:rsp protocol}, then
\begin{equation}
    \expect[\norm{e}_1 \mid \textup{no abort}] \leq 2m\sigma,
\end{equation}
where the expectation is over the distribution on $(A,t,s,e,f,y)$ defined by \cref{fig:rsp protocol}.
\end{lemma}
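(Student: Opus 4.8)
The plan is to write the conditional expectation as a ratio, $\expect[\norm{e}_1\mid\textup{no abort}] = \expect[\norm{e}_1\cdot\mathbf{1}[\textup{no abort}]]\big/\prob[\textup{no abort}]$, and to bound numerator and denominator separately, reusing the analysis already carried out for \cref{thm:comp}. I would first record that, for an honest Bob, Protocol~$\mathbf{P}$ produces $(A,v) = (A, As+e)$ and the measurement outcome $y$ by exactly the same process as Protocol~$\mathbf{Q}$: the state $\ket{\phi}$ of \cref{rspprepstate} and the induced distribution of $y$ coincide with those in \cref{fig:prover}, $e$ is drawn from $G(\sigma,\tau)^m$ independently of $(A,t,s,f)$, and the only change is that Alice calls $\GenTrap$ directly rather than through $\Gen_J$. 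Moreover, the set whose membership Alice tests in step~3 of \cref{fig:rsp protocol} is precisely $U_0\cap U_1$ in the notation of the proof of \cref{thm:comp}, with $U_0 = \{Ax+g : x\in\mathbb{Z}_q^n,\ \norm{g}_\infty\le\tau\}$ and $U_1 = \{Ax-v+g : x\in\mathbb{Z}_q^n,\ \norm{g}_\infty\le\tau\}$. Hence ``no abort'' is the event $y\in U_0\cap U_1$, and the joint distribution of $(e,y)$ together with this event is identical to the one analyzed there.

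For the numerator, since $\norm{e}_1 = \sum_{i=1}^m\abs{e_i}$ depends only on $e$, the tower property gives
\begin{equation*}
\expect\bigl[\norm{e}_1\cdot\mathbf{1}[\textup{no abort}]\bigr] = \expect_{e}\bigl[\norm{e}_1\cdot\prob[\textup{no abort}\mid e]\bigr] \le \expect_{e}[\norm{e}_1] \le m\sigma,
\end{equation*}
where the expectations in $e$ alone are over $e\leftarrow G(\sigma,\tau)^m$, the first inequality uses $\prob[\textup{no abort}\mid e]\le 1$, and the second is the bound $\expect_{e\leftarrow G(\sigma,\tau)^m}\bigl[\sum_i\abs{e_i}\bigr]\le m\sigma$ already established within the proof of \cref{thm:comp} (it follows from $\expect[\abs{e_i}]^2\le\expect[e_i^2]\le\sigma^2$, using \cref{lem:var}). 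For the denominator, inequality~(\ref{eq:validbound}) from that same proof applies verbatim and yields $\prob[\textup{no abort}] = \prob[y\in U_0\cap U_1]\ge 1-\frac{m\sigma}{2\tau}$; the standing assumption $\tau\ge 2m\sigma$ then forces $\frac{m\sigma}{2\tau}\le\frac14$, so $\prob[\textup{no abort}]\ge\frac34>0$. Combining the two bounds gives $\expect[\norm{e}_1\mid\textup{no abort}]\le \frac{m\sigma}{3/4} = \frac43 m\sigma \le 2m\sigma$, as claimed.

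I do not expect a genuine obstacle here: the argument is short once one recognizes that the ``no abort'' event of Protocol~$\mathbf{P}$ is literally the event $y\in U_0\cap U_1$ studied for \cref{thm:comp}, so that both the abort-probability lower bound~(\ref{eq:validbound}) and the first-moment bound on $\norm{e}_1$ can be quoted rather than re-derived. The only step needing a little care is confirming that the membership test in step~3 of \cref{fig:rsp protocol} really does carve out $U_0\cap U_1$ — equivalently, that an honestly generated $y$ passes it whenever $e$ was sampled as prescribed — which is exactly the injectivity-and-counting observation behind \cref{prop:veryinj} and relies on $\norm{e}_\infty\le\tau$. Should a fully self-contained argument be preferred, one would instead re-derive $\prob[\textup{no abort}\mid e] = \prod_{i=1}^m\bigl(1-\frac{\abs{e_i}}{2\tau+1}\bigr)$ directly from \cref{prop:trap,prop:veryinj} and then proceed exactly as above.
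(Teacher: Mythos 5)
Your proposal is correct and follows essentially the same route as the paper: both bound $\expect[\norm{e}_1]\leq m\sigma$ via the truncated-Gaussian first-moment bound, lower bound $\prob[\textup{no abort}]\geq 1-\frac{m\sigma}{2\tau}\geq \frac{3}{4}$ by quoting inequality~(\ref{eq:validbound}), and divide to get $\frac{4}{3}m\sigma\leq 2m\sigma$. Writing the conditional expectation as a ratio with an indicator is just a rephrasing of the paper's total-expectation decomposition, so there is no substantive difference.
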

\begin{proof}
Since $e\leftarrow G(\sigma,\tau)^m$, by \cref{lem:abs}, we have that $\expect[\norm{e}_1] = \expect[\sum_{i=1}^m \abs{e_i}] \leq m \sigma$. But,
\begin{equation}
\begin{aligned}
    \expect[\norm{e}_1] &= \expect[\norm{e}_1 \mid \text{abort}]\prob[\textup{abort}] + \expect[\norm{e}_1 \mid \text{no abort}]\prob[\textup{no abort}]
    \\
    &\geq \expect[\norm{e}_1 \mid \text{no abort}]\prob[\textup{no abort}]
    \\
    &\geq \Bigl(1-\frac{m\sigma}{2\tau}\Bigr)\expect[\norm{e}_1 \mid \text{no abort}],
\end{aligned}
\end{equation}
where the last inequality uses
inequality (\ref{eq:validbound}) from \cref{sec:pf}.

Therefore, as $\tau \geq 2m\sigma$, we obtain $\expect[\norm{e}_1 \mid \text{no abort}] \leq m \sigma (1-\frac{1}{4})^{-1} \leq 2 m\sigma$, as required.
\end{proof}

\begin{theorem} \label{thm:rsp completeness}
If Alice and Bob follow the process given in \cref{fig:rsp protocol}, then the expected trace distance between $\ket{\alpha, b}$ and $\ket{\beta}$, conditioned on Alice not aborting, satisfies
\begin{equation}
    \expect[\norm{\ketbra{\alpha, b}-\ketbra{\beta}}_1 \mid \textup{no abort}] \leq \frac{4\pi m \sigma}{q},
\end{equation}
where the expectation is over the distribution on $(A,t,s,e,f,y)$ defined by \cref{fig:rsp protocol}.
\end{theorem}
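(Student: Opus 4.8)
The plan is to condition throughout on the event that Alice does not abort --- equivalently, that the vector $y$ returned by Bob lies in the intersection set displayed in step~3 of \cref{fig:rsp protocol} --- and to show that in that case $\ket{\beta}$ differs from $\ket{\alpha,b}$ only by a phase on $\ket{1}$ of magnitude at most $2\pi\norm{e}_1/q$; \cref{lem:no_abort_rsp} then closes the bound after averaging. The first task is to pin down the claw state, reprising the opening of the proof of \cref{thm:comp}: when $y$ is in the intersection set it has, by \cref{prop:veryinj}, at most one preimage with $c=0$ and at most one with $c=1$ under $(x,c,g)\mapsto Ax-cv+g$; both exist by hypothesis, and by \cref{prop:trap} they equal $(x_0,0,g_0)$ and $(x_1,1,g_1)$ with $x_0=\Invert(A,y,t)$ and $x_1=\Invert(A,y+v,t)=x_0+s$ (here $\norm{g_0}_\infty\leq\tau$, $\norm{g_0+e}_\infty\leq 2\tau$, and $\norm{e}_\infty\leq\tau$ since $e\leftarrow G(\sigma,\tau)^m$). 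Hence after Bob measures the third register of $\ket{\phi}$ the surviving state is exactly $\ket{\psi}=\tfrac{1}{\sqrt2}\bigl(\ket{x_0}\ket{0}+\ket{x_1}\ket{1}\bigr)$, with equal real amplitudes and $x_1=x_0+s$.

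Next I would track the phase of the last qubit. Bob measures qubits $1,\dots,nQ$ of $\ket{\psi}$ in the eigenbases of $(\cos r_i)X+(\sin r_i)Y$ with $r_{(i-1)Q+j}=2^j\pi a_i/q$, so \cref{prop:last_qubit} --- applied with the angles $\theta_i=2\pi a_i/q\in\mathbb{Z}\cdot(2\pi/q)$ and with the roles of $x,y$ there played by $x_1,x_0$ --- yields the remaining qubit $\tfrac{1}{\sqrt2}\bigl(\ket{0}+e^{i\theta}\ket{1}\bigr)$ with
\begin{equation*}
\theta \equiv \tfrac{2\pi}{q}\langle a,\,x_0-x_1\rangle + \pi\bigl(([x_0]-[x_1])\cdot u\bigr) \pmod{2\pi}.
\end{equation*}
Since $x_0-x_1=-s$ in $\mathbb{Z}_q^n$, the first term is $\equiv -\tfrac{2\pi}{q}\langle a,s\rangle$ modulo $2\pi$ (legitimate because the $\theta_i$ are integer multiples of $2\pi/q$), and $\pi$ times $([x_0]-[x_1])\cdot u$ equals $\pi b$ modulo $2\pi$ since $([x_0]-[x_1])\cdot u\equiv([x_0]\oplus[x_1])\cdot u=z\cdot u=b\pmod 2$; hence $\theta\equiv -\tfrac{2\pi}{q}\langle a,s\rangle+\pi b\pmod{2\pi}$. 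Applying Bob's final rotation $\ket{1}\mapsto e^{-ir}\ket{1}$ with $r=2\pi w/q$, and substituting $w=f^\top v-\alpha$ together with $f^\top v=f^\top As+f^\top e=\langle a,s\rangle+f^\top e$ (as $a=f^\top A$), the terms proportional to $\langle a,s\rangle$ are arranged to cancel and one is left with $\ket{\beta}=Z^b\cdot\tfrac{1}{\sqrt2}\bigl(\ket{0}+e^{i(2\pi\alpha/q\,\pm\,2\pi f^\top e/q)}\ket{1}\bigr)$, whereas $\ket{\alpha,b}=Z^b\cdot\tfrac{1}{\sqrt2}\bigl(\ket{0}+e^{i\,2\pi\alpha/q}\ket{1}\bigr)$.

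Finally I would pass to the trace distance. The trace norm is unitarily invariant, so the common $Z^b$ may be stripped; by the identity $\norm{\ketbra{\mu}-\ketbra{\nu}}_1=2(1-\abs{\braket{\mu}{\nu}}^2)^{1/2}$ recorded in the preliminaries, the two single-qubit states $\tfrac{1}{\sqrt2}(\ket{0}+e^{i\mu}\ket{1})$ and $\tfrac{1}{\sqrt2}(\ket{0}+e^{i\nu}\ket{1})$ are at trace distance $2\abs{\sin((\mu-\nu)/2)}$. Here the two phases differ by $\pm 2\pi f^\top e/q$, so
\begin{equation*}
\norm{\ketbra{\alpha,b}-\ketbra{\beta}}_1 = 2\abs{\sin\!\bigl(\pi f^\top e/q\bigr)} \leq \frac{2\pi\abs{f^\top e}}{q} \leq \frac{2\pi\norm{e}_1}{q},
\end{equation*}
using $\abs{f^\top e}\leq\sum_i\abs{e_i}=\norm{e}_1$; note that this bound is independent of $u$ and of $b$. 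Taking the expectation conditioned on no abort and invoking \cref{lem:no_abort_rsp} gives
\begin{equation*}
\expect[\norm{\ketbra{\alpha,b}-\ketbra{\beta}}_1 \mid \textup{no abort}] \leq \frac{2\pi}{q}\expect[\norm{e}_1 \mid \textup{no abort}] \leq \frac{4\pi m\sigma}{q},
\end{equation*}
which is the claim.

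The step I expect to be the main obstacle is the phase bookkeeping in the second paragraph, all of it carried out modulo $2\pi$: that the integer-valued digit differences $[x_0]_i-[x_1]_i$ contribute exactly $\langle a,x_0-x_1\rangle\cdot 2\pi/q$ up to integer multiples of $2\pi$; that $\pi$ times $([x_0]-[x_1])\cdot u$ collapses to $\pi b$; and --- most importantly --- that the sign conventions in the measurement angles $r_{(i-1)Q+j}$, in $w=f^\top v-\alpha$, and in the rotation $e^{-ir}$ fit together so that the $\langle a,s\rangle$ contribution (which is $\Theta(q)$ and would otherwise swamp the bound once divided by $q$) cancels exactly, leaving only the $f^\top e/q$ error. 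By contrast, the claw-state analysis of the first paragraph is essentially verbatim from \cref{thm:comp}, and the $\sin$-estimate together with the invocation of \cref{lem:no_abort_rsp} in the third paragraph is routine.
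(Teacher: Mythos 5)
Your proposal follows the paper's argument step for step: identify the claw state exactly as in the proof of \cref{thm:comp}, apply \cref{prop:last_qubit} with $\theta_i = 2\pi a_i/q$ to get the last-qubit phase $\theta \equiv -\tfrac{2\pi}{q}\langle a,s\rangle + \pi b \pmod{2\pi}$, cancel the $\langle a,s\rangle$ term using Bob's final $Z$-rotation, bound the trace distance by $2\abs{\sin(\pi f^\top e/q)} \leq 2\pi\abs{f^\top e}/q \leq 2\pi\norm{e}_1/q$, and finish with \cref{lem:no_abort_rsp}; this is precisely the paper's proof of \cref{thm:rsp completeness}.

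The one place your writeup does not literally go through is the sign bookkeeping that you yourself flagged as the crux. With the conventions of \cref{fig:rsp protocol} as printed, namely $w = f^\top v - \alpha$ and the rotation $\ket{1}\mapsto e^{-ir}\ket{1}$, the two contributions proportional to $\langle a,s\rangle$ carry the \emph{same} sign: the phase after the rotation is $\theta - r \equiv -\tfrac{4\pi}{q}\langle a,s\rangle - \tfrac{2\pi}{q}f^\top e + \tfrac{2\pi\alpha}{q} + \pi b$, so the $\langle a,s\rangle$ term doubles rather than cancels, and the leftover phase would generically be $\Theta(1)$, killing the bound. The paper's own proof computes with $r = 2\pi(a\cdot s + f^\top e + \alpha)/q$ and the rotation $\mathrm{diag}(1, e^{+ir})$ --- both signs flipped relative to the figure, in agreement with the technical overview in \cref{sec:technical overview} --- and then the cancellation is exact, leaving the phase $\tfrac{2\pi}{q}(f^\top e + \alpha) + \pi b$ and hence your claimed form of $\ket{\beta}$. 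So the discrepancy is a typo-level mismatch between \cref{fig:rsp protocol} and the proof rather than a conceptual gap in your argument: once those two signs are aligned, your derivation coincides with the paper's, and the remaining steps (the identity $\norm{\ketbra{\mu}-\ketbra{\nu}}_1 = 2(1-\abs{\braket{\mu}{\nu}}^2)^{1/2}$, \cref{lem:sin square}, and \cref{lem:no_abort_rsp}) are used exactly as in the paper.
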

\begin{proof}
Throughout this proof, we use the notation defined in \cref{fig:rsp protocol}. In step 3, by using arguments similar to those in our proof of \cref{thm:comp}, we see that
\begin{equation}
    \ket{\psi} = \frac{1}{\sqrt{2}}\bigl(\ket{x_1} \ket{1} + \ket{x_0}\ket{0}\bigr),
\end{equation}
where $x_0,x_1\in \mathbb{Z}_q^n$ and $x_1 = x_0 + s$, when  Alice does not abort.

Then, using \cref{prop:last_qubit}, we see that $\ket{\psi'} = \frac{1}{\sqrt{2}}\bigl(\ket{0} + e^{i\theta}\ket{1}\bigr)$, where 
\begin{equation}
\begin{aligned}
\theta \coloneqq & \frac{2\pi a \cdot (x_0-x_1)}{q}   + \pi ([x_0] - [x_1]) u = - \frac{2\pi a \cdot s}{q}   + \pi b  \mod 2\pi.
\end{aligned}
\end{equation}

Then, since $r \coloneqq 2\pi w/q = 2\pi(a \cdot s + f^\top e + \alpha)/q$, we deduce
\begin{align}
\ket{\beta} \coloneqq \begin{bmatrix}
1 & 0\\ 0 & e^{ir}
\end{bmatrix} \ket{\psi'} =     Z^b \frac{1}{\sqrt{2}}(\ket{0} + e^{i\phi}\ket{1}),
\end{align}
where $\phi \coloneqq\frac{2\pi f^\top e}{q} + \frac{2\pi \alpha}{q}$.

Therefore, the trace distance between $\ket{\alpha, b}$ and $\ket{\beta}$ is
\begin{equation}
    \norm{\ketbra{\alpha, b} - \ketbra{\beta}}_1 = 2(1-\abs{\braket{\alpha,b}{\beta}}^2)^{1/2} =  2 \abs{\sin(\frac{1}{2}\cdot \frac{2\pi f^\top e }{q})} \leq \frac{2\pi \abs{f^\top e}}{q},
\end{equation}
where the last inequality uses \cref{lem:sin square}. Therefore,
\begin{align*}
&\expect[\norm{\ketbra{\alpha, b} - \ketbra{\beta}}_1 \mid \textup{no abort}]
\\
\leq& \expect\Biggl[\frac{2\pi \abs{f^\top e}}{q} \Biggm| \textup{no abort}\Biggr]
\leq \frac{2\pi}{q} \expect \bigl[ \norm{e}_1 \bigm| \textup{no abort}\bigr] \leq \frac{4\pi m \sigma}{q},
\end{align*}
where the last inequality uses \cref{lem:no_abort_rsp}, as desired.
\end{proof}

\subsection{Blindness}
\label{subsec:blindness}

The purpose of this subsection is to prove \cref{thm:rsp_blindness}, which shows that a non-uniform quantum polynomial-time adversary can compute the value of $\alpha$ in Protocol~\textbf{P} in \cref{fig:rsp protocol} with at most negligible advantage over random guessing. In other words, such an adversary is blind to the value of $\alpha$. Intuitively, this is because the value $(a,w) = (f^\top A, f^\top v + \alpha)$ sent by Alice in her first message can be seen as an encryption of $\alpha \in \mathbb{Z}_q$ under the public key $(A,v)$.

To state \cref{thm:rsp_blindness}, we define the following distributions.
For $x\in \mathbb{Z}_q$, we define $\mathcal{D}_x$ to be the distribution on $\mathbb{Z}_q^{m\times n}\times \mathbb{Z}_q^m \times \mathbb{Z}_q^n\times \mathbb{Z}_q$ such that an element $(A,v,a,w)$ is sampled as follows: 
\begin{enumerate}
    \item $(A,t) \leftarrow \GenTrap()$,
    \item  $v \coloneqq As+e$, where $s \leftarrow \mathbb{Z}_q^n$ and $e \leftarrow G(\sigma,\tau)^m$,
    \item $a \coloneqq f^\top A$, where $f \leftarrow \{0,1\}^m$, and
    \item $w \coloneqq f^\top v + x$.
\end{enumerate}
We define $\tilde{\mathcal{D}}_x$ to be the same as $\mathcal{D}_x$ except with the first step replaced by $A\leftarrow \mathbb{Z}_q^{m\times n}$. We define $\mathcal{D}$ to be the distribution on $\mathbb{Z}_q^{m\times n}\times \mathbb{Z}_q^m \times \mathbb{Z}_q^n\times \mathbb{Z}_q$ such that an element $(A,v,a,w)$ is sampled as follows: 
\begin{equation}
    A\leftarrow \mathbb{Z}_q^{m\times n}, \quad v \leftarrow \mathbb{Z}_q^n, \quad a \leftarrow \mathbb{Z}_q^n, \quad \text{and} \quad w \leftarrow \mathbb{Z}_q.
\end{equation}

We can now state and prove \cref{thm:rsp_blindness}. The proof is essentially the same as that of \cref{prop:regev} but we include it for completeness.

\begin{theorem}[Blindness with respect to $\alpha$]\label{thm:rsp_blindness}
Let $\Guess \colon \mathbb{Z}_q^{m\times n} \times \mathbb{Z}_q^m \times \mathbb{Z}_q^n \times \mathbb{Z}_q \to D(\mathbb{Z}_q)$ be a non-uniform quantum polynomial-time algorithm. Suppose that the $\LWE_{n,q,G(\sigma,\tau)}$ problem is hard. Then for all $x,y\in \mathbb{Z}_q$, we have
\begin{equation}\label{eq:blindness}
\begin{aligned}
    &|\prob[\Guess(A,v,a,w) = x \mid (A,v,a,w) \leftarrow \mathcal{D}_y]
    \\
    &\quad - \prob[\Guess(A,v,a,w) = x \mid (A,v,a,w) \leftarrow \mathcal{D}] | \leq \negl.
\end{aligned}
\end{equation}
\end{theorem}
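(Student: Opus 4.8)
The plan is to run the standard IND-CPA hybrid argument used to prove \cref{prop:regev}, viewing the pair $(a,w)=(f^\top A,\,f^\top v+y)$ appearing in $\mathcal{D}_y$ as a Regev ciphertext of the message $y$ under the public key $(A,v)$. It is enough to show that for every $y\in\mathbb{Z}_q$ the distribution $\mathcal{D}_y$ is computationally indistinguishable from the fixed distribution $\mathcal{D}$: since $\mathcal{D}$ does not depend on $y$, applying the triangle inequality to $\prob[\Guess(A,v,a,w)=x\mid\cdot\,]$ then yields \cref{eq:blindness} for all $x,y\in\mathbb{Z}_q$ (and more generally indistinguishability of $\mathcal{D}_y$ from $\mathcal{D}_{y'}$ for any $y,y'$, which is the blindness claim).

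I would establish $\mathcal{D}_y\approx\mathcal{D}$ through three hybrids. \emph{(i) Replacing the trapdoor matrix.} Pass from $\mathcal{D}_y$ to $\tilde{\mathcal{D}}_y$, which differs only in that $A$ is sampled uniformly instead of from $\GenTrap()$; by \cref{prop:trap}(1) these two distributions on $(A,v,a,w)$ are within statistical distance $nQ2^{-n/2}$, which is $\negl$ since $n=\lambda$ and $Q=\lceil\log q\rceil$ is polynomially bounded, so $\Guess$'s output distribution changes only negligibly. \emph{(ii) Invoking LWE.} Pass from $\tilde{\mathcal{D}}_y$ to the distribution $\tilde{\mathcal{E}}_y$ obtained by further replacing $v=As+e$ with a uniform $v\leftarrow\mathbb{Z}_q^m$. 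A distinguisher between these two with non-negligible advantage gives a solver for $\LWE_{n,q,G(\sigma,\tau)}$: given oracle access to $\mathcal{D}_b$, draw $m$ samples to assemble a pair $(A,v)\in\mathbb{Z}_q^{m\times n}\times\mathbb{Z}_q^m$ — which equals $(A,As+e)$ for a single fixed secret $s$ when $b=0$ and is uniform when $b=1$ — then sample $f\leftarrow\{0,1\}^m$, set $a\coloneqq f^\top A$ and $w\coloneqq f^\top v+y$, run $\Guess(A,v,a,w)$, and output a bit according to whether its answer is $x$; this contradicts the hardness assumption.

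\emph{(iii) Leftover hashing.} Finally, pass from $\tilde{\mathcal{E}}_y$ to $\mathcal{D}$. In $\tilde{\mathcal{E}}_y$ the matrix $[A\mid v]\in\mathbb{Z}_q^{m\times(n+1)}$ is uniform, and $f\mapsto f^\top[A\mid v]$ is a $2$-universal family of $\mathbb{Z}_q$-linear hashes evaluated on a uniform $f\in\{0,1\}^m$, so the leftover hash lemma places $(A,v,f^\top A,f^\top v)$ within statistical distance $\tfrac12\sqrt{q^{\,n+1}/2^{\,m}}$ of the uniform distribution on $\mathbb{Z}_q^{m\times n}\times\mathbb{Z}_q^m\times\mathbb{Z}_q^n\times\mathbb{Z}_q$, and translating the last coordinate by the constant $y$ preserves uniformity. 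With the parameters of \cref{fig:parameters} one has $m=n(2Q+1)\ge 2nQ$ while $q^{\,n+1}\le 2^{(n+1)Q}$, hence $q^{\,n+1}/2^{\,m}\le 2^{-(n-1)Q-n}=\negl$ since $n=\lambda$. This is exactly the counting step behind the proof of \cref{prop:regev}, and crucially its error does not depend on $y$. Chaining the three steps gives $|\prob[\Guess(A,v,a,w)=x\mid\mathcal{D}_y]-\prob[\Guess(A,v,a,w)=x\mid\mathcal{D}]|\le\negl$ uniformly in $x$ and $y$, which is the theorem.

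The only step that is more than bookkeeping is (ii): one must verify that the classical pre-processing of the oracle samples reproduces $\tilde{\mathcal{D}}_y$ exactly in the case $b=0$ — in particular that the same secret $s$ is shared across all $m$ assembled rows, which holds because a single invocation of the $\mathcal{D}_0$-oracle fixes $s$ — and reproduces $\tilde{\mathcal{E}}_y$ exactly in the case $b=1$, and that composing $\Guess$ with this pre-processing and the $\{0,1\}$-valued post-processing keeps the resulting adversary within non-uniform quantum polynomial time. Steps (i) and (iii) are purely statistical and reuse estimates already established.
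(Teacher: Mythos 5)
Your proposal is correct and follows essentially the same route as the paper's own proof: the hybrid chain $\mathcal{D}_y \to \tilde{\mathcal{D}}_y$ (statistical closeness of $\GenTrap()$ to uniform via \cref{prop:trap}), then replacing $v$ by a uniform vector under $\LWE_{n,q,G(\sigma,\tau)}$, then the leftover hash lemma to reach $\mathcal{D}$, finished by the triangle inequality. The only difference is that you spell out the LWE reduction (fixed secret across the $m$ oracle samples) and the explicit leftover-hash distance bound $\tfrac12\sqrt{q^{n+1}/2^{m}}$, details the paper leaves implicit by citation; these are consistent with the stated parameters and do not change the argument.
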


\begin{proof}
For two real functions of $\lambda$, $a = a(\lambda)$ and $b = b(\lambda)$, we write $a\simeq b$ to mean $\abs{a-b}\leq \negl$. Then, we have
\begin{align*}
    &\quad \prob[\Guess(A,v,a,w) = x \mid (A,v,a,w) \leftarrow \mathcal{D}_y]
    \\
    &\simeq \prob[\Guess(A,v,a,w) = x  \mid (A,v,a,w) \leftarrow \tilde{\mathcal{D}}_y]
    \\
    &\simeq \prob\Bigl[\Guess(A,v,a,w) = x \Bigm|  \substack{A\leftarrow \mathbb{Z}_q^{m\times n}, \, v \leftarrow \mathbb{Z}_q^n, \, f\leftarrow \{0,1\}^m, \\ a = f^\top A, \, w = f^\top v +y}\Bigr]
    \\
    &\simeq \prob[\Guess(A,v,a,w) = x \mid  A\leftarrow \mathbb{Z}_q^{m\times n}, v \leftarrow \mathbb{Z}_q^n, a \leftarrow \mathbb{Z}_q^n, u\leftarrow \mathbb{Z}_q, w = u + y]
    \\
    &= \prob[\Guess(A,v,a,w) = x \mid (A,v,a,w) \leftarrow \mathcal{D}],
\end{align*}
where the first approximation follows from \cref{prop:trap} (under our parameter settings in \cref{fig:parameters}), the second approximation follows from the LWE hardness assumption and $\Guess$ being a non-uniform quantum polynomial-time algorithm, and the third approximation follows from the leftover hash lemma (see Lemma 2.1 in \cite{alwen2011generating}). The theorem follows by the triangle inequality.
\end{proof}

\section{Future Directions}
\label{sec:future directions}

The technique of applying  rotated measurements to a claw-state seems to yield a higher degree of control in cryptographic protocols, and it invites applications to tasks beyond proofs of quantumness and remote state preparation.  In this section, we sketch initial steps towards three additional applications.  

\paragraph{\textbf{\textup{Verifiable Randomness.}}} Proofs of quantumness are naturally connected to
protocols for verifiable randomness. The paper \cite{brakerski2018cryptographic}
proved that certified randomness can be generated by a classical verifier and a quantum prover using only classical communication.  The subsequent paper \cite{mahadev2022} improved the rate of randomness generation.  
In~\cref{app:ver}, we take a step in the same direction by proving the following one-shot randomness expansion result.

\begin{theorem}
\label{thm:oneshotrand}
Suppose that the $\LWE_{n,q,G ( \sigma, \tau )}$ problem is hard.
Suppose that Bob's behavior in Protocol $\mathbf{Q}$ (\cref{fig:pfprot}) is such that $H_{min} ( d' \mid
pk, ct, b'=0 )  \leq  \delta$,
where $\delta$ is a positive real constant.  Then,
$\Pr[\textnormal{success}]
 \leq  \frac{3}{4} 
+ O ( \sqrt{\delta} ) 
+ \negl$.
\end{theorem}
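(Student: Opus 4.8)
The plan is to use the low-entropy hypothesis to eliminate the verifier's trapdoor from the analysis, and then reduce to the classical CHSH bound of $\tfrac34$ together with the IND-CPA security of the encryption scheme $J$ (\cref{prop:regev2}). I would start from the decomposition $\prob[\textnormal{success}] = \tfrac12\prob[\textnormal{success}\mid b'=0] + \tfrac12\prob[\textnormal{success}\mid b'=1]$ and observe that a win with $b'=0$ forces Alice's bit $d$ (\cref{eq:hadamard string}) to equal Bob's bit $d'$, while a win with $b'=1$ forces $d\oplus d'=b$. So it suffices to bound $\prob[\textnormal{success}\mid b'=1]$ by $\tfrac12 + \bigl(1-\prob[\textnormal{success}\mid b'=0]\bigr) + O(\sqrt\delta) + \negl$: the two $b'=0$ terms then cancel in the decomposition, leaving $\prob[\textnormal{success}] \le \tfrac34 + O(\sqrt\delta) + \negl$. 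The first ingredient is to extract from $H_{min}(d'\mid pk,ct,b'=0)\le\delta$ a function $D=D(pk,ct)$ that predicts $d'$ with average success $\expect_{pk,ct}\bigl[\prob[d'=D(pk,ct)\mid pk,ct,b'=0]\bigr] \ge 2^{-\delta}\ge 1-(\ln2)\delta$ (here the conditional law of $d'$ depends only on Bob's internal randomness, since his view is exactly $(pk,ct)$ followed by $b'$).

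Next I would build a trapdoor-free estimate $\hat d$ of $d$: a non-uniform quantum polynomial-time algorithm, on input $(pk,ct)$, re-runs Bob's first round $\lambda$ times from scratch, follows each run with Bob's round-$2$ behavior on input $b'=0$ to obtain i.i.d.\ samples $d'_1,\dots,d'_\lambda$ of $d'$, and outputs $\hat d := \MAJ(d'_1,\dots,d'_\lambda)$. A Markov bound shows that the $(pk,ct)$ on which $D$ predicts $d'$ with probability below $1-\sqrt{(\ln2)\delta}$ occur with probability at most $\sqrt{(\ln2)\delta}$, and for the rest a Chernoff bound makes $\hat d=D(pk,ct)$ except with probability $\negl$; hence $\prob[\hat d\ne D(pk,ct)]\le O(\sqrt\delta)+\negl$. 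Since a $b'=0$ win requires $d=d'$ while $d'=D(pk,ct)$ except with average probability $(\ln2)\delta$, one also gets $\prob[d\ne D(pk,ct)]\le 1-\prob[\textnormal{success}\mid b'=0]+(\ln2)\delta$, so $\prob[d\ne\hat d]\le 1-\prob[\textnormal{success}\mid b'=0]+O(\sqrt\delta)+\negl$.

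Finally, I would consider the non-uniform quantum polynomial-time algorithm $\mathcal B'(pk,ct)$ that computes $\hat d$ as above, does one more fresh run of Bob's first round followed by his round-$2$ behavior on input $b'=1$ to get $d'$, and outputs $\hat d\oplus d'$. \Cref{prop:regev2} applied to $\mathcal B'$ gives $\prob[\hat d\oplus d'=b]\le\tfrac12+\negl$ when $ct\leftarrow\Encrypt_J(pk,b)$, and coupling the fresh run used for $d'$ with the honest protocol run (whose laws given $(pk,ct)$ coincide) yields $\prob[\textnormal{success}\mid b'=1]=\prob[d\oplus d'=b\mid b'=1]\le \prob[\hat d\oplus d'=b]+\prob[d\ne\hat d]\le\tfrac12+1-\prob[\textnormal{success}\mid b'=0]+O(\sqrt\delta)+\negl$, which is the required bound; combining with the decomposition of $\prob[\textnormal{success}]$ finishes the argument. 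The hard part is the conversion of \emph{average} predictability of $d'$ into a per-instance guarantee strong enough for the majority vote to be reliable — this is where the $\sqrt\delta$ loss appears — together with making sure all the re-runs of Bob are carried out by a single non-uniform quantum polynomial-time algorithm so that \cref{prop:regev2} applies; the rest is routine bookkeeping.
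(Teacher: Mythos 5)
Your argument is correct, but it is a genuinely different route from the paper's. The paper models a quantum Bob as in \cref{fig:cheatquantum} and runs a \emph{rewinding} argument inside a single execution: it introduces experiments in which Charlie applies $U_0$, measures $d'_0$, undoes $U_0$, then applies $U_1$ and measures $d'_1$, and it uses a gentle-measurement lemma (\cref{lem:gentle}) to show that, because $H_{min}(d'\mid pk,ct,b'=0)\le\delta$, the intermediate measurement disturbs Bob's memory by only $O(\sqrt\delta)$ in trace distance; IND-CPA (\cref{prop:regev2}) is then used to equate the agreement statistics of $d'_0,d'_1$ across encryptions of $b=0$ and $b=1$, and the $3/4$ emerges from averaging the four $(b,b')$ branches. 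You instead never touch the in-protocol quantum memory: you re-execute Bob's entire strategy from scratch (legitimate, since the reduction runs his circuits as subroutines, so the samples are i.i.d.\ even for a quantum Bob), majority-vote the $b'=0$ answers, and bridge the actual run's $d$ to the estimate $\hat d$ through the fixed information-theoretic predictor $D(pk,ct)$ — using the $b'=0$ win condition $d=d'$ on one side and Markov--Chernoff on the other, with the $\sqrt\delta$ loss arising in the Markov step rather than in a gentle-measurement bound — and then invoke \cref{prop:regev2} once, exploiting the cancellation of the $b'=0$ term in the decomposition $\prob[\textnormal{success}]=\tfrac12 s_0+\tfrac12 s_1$. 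Your route is more elementary (no unitary rewinding, no access to $U_0^{-1}$, and it treats classical and quantum Bob uniformly) and is in spirit a quantum-safe version of the paper's own classical-soundness argument (\cref{thm:pfsound}); the paper's rewinding proof has the advantage of staying within a single run of Bob's first round, which is the more natural primitive if one wants to push toward accumulation-type randomness-expansion statements. Two small patches you should make explicit: the Chernoff step needs $1-\sqrt{(\ln 2)\delta}>\tfrac12$, so state that you may assume $\delta$ below a small constant (otherwise $\tfrac34+O(\sqrt\delta)\ge 1$ and the claim is vacuous), and note that $D(pk,ct)$ need not be efficiently computable — which is harmless exactly because your reduction outputs $\hat d\oplus d'$ and $D$ appears only in the analysis.
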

The above theorem implies that if Bob achieves a score that is non-negligibly above $\frac{3}{4}$, then his output $d'$ on input $b' = 0$ must contain new randomness (conditioned on all of the information received from Alice).  
The proof of \cref{thm:oneshotrand} is an adapted form of the rewinding argument used to prove \cref{thm:pfsound}.  This result suggests trying to incorporate more techniques from~\cite{brakerski2018cryptographic} to prove a multi-shot randomness expansion result.  A possible next step would be to prove a version of \cref{thm:oneshotrand} expressed in terms of Renyi entropy rather than min-entropy.  We expect that a randomness expansion result could be proved in the same parameter range described in subsection \ref{subsec:parchoices}, thus offering a potential advantage over \cite{brakerski2018cryptographic,mahadev2022} (which use an exponentially large modulus).

\paragraph{\textbf{\textup{Semi-Quantum Money.}}} Quantum money was first introduced in the seminal paper by~\cite{wiesner1983} and has seen many iterations since. More recently, Radian and Sattath have proposed a semi-quantum money scheme \cite{radian2022semi} where the bank is completely classical. They show that any ``strong 1-of-2 puzzle'' can be made into a private semi-quantum money scheme and give a construction of a strong 1-of-2 puzzle. Informally, a 1-of-2 puzzle is a four-tuple of algorithms $(G,O,S,V)$ that can be used in the following interactive protocol between a classical referee R and a quantum player P. 
\begin{enumerate}
    \item R samples $(p,k) \leftarrow G(1^\lambda)$ and sends ``puzzle'' $p$ to P but not its ``key'' $k$.
    \item P samples $(o,\rho) \leftarrow O(p) $, where $o$ is a classical ``obligation'' string and $\rho$ is a quantum state, and sends $o$ to R.
    \item R samples $b'\leftarrow \{0,1\}$ and sends $b'$ to P.
    \item P computes $d'\coloneqq S(p,o,\rho,b')$ and sends $d'$ to R.
    \item R computes $V(p,k,o,b',d') \in \{0,1\}$ and says P succeeds if it equals $1$. 
\end{enumerate}
The completeness of a 1-of-2 puzzle is P's optimal success probability. The soundness error of a 1-of-2 puzzle is the optimal probability of succeeding given both $b'=0$ \emph{and} $b'=1$ under the same obligation. A strong 1-of-2 puzzle is a 1-of-2 puzzle with completeness 1 and soundness error 0.

A formal definition of a 1-of-2 puzzle is given in~\cref{app:money}. Under that definition, we construct a 1-of-2 puzzle $\mathbf{Z}$ in \cref{fig:puzzle} that is based on Protocol~$\mathbf{Q}$ (\cref{fig:pfprot}). We give the soundness and completeness error of $\mathbf{Z}$ in \cref{prop:quantum_money} and defer its proof to~\cref{app:money}. 

\begin{figure}[ht]
    \centering
    \fbox{\parbox{4.71in}{
    \normalsize
    \textbf{1-of-2 puzzle} $\mathbf{Z} = (G, O, S, V)$

    \begin{itemize}
    \item $G$. On receiving security parameter $1^\lambda$ as input, set parameters according to \cref{fig:parameters}. Sample $(pk = (A, v), s, t) \leftarrow \Gen_J()$ and $b \leftarrow \{ 0, 1 \}$. Compute  $ct = (a,w) \leftarrow \Encrypt_J( pk , b)$. Output $(p = (pk, ct), k = (s,t,b))$.
    \item $O$. On receiving $p = (pk = (A,v),ct = (a,w))$ as input, follow step 2 of~\cref{fig:prover} and output $(o = ( y, u_1, \cdots, u_{nQ}), \rho = L )$.
    \item $S$. On receiving $(p,o,L,b')$ as input, follow step 4 of \cref{fig:prover} and output $d'$.
    \item $V$. On receiving $p = (pk = (A,v),ct = (a,w))$, $k=(s,t,b)$, $o = ( y, u_1, \cdots, u_{nQ})$, $b'$, $d'$ as input, follow step 5 of~\cref{fig:pfprot} to obtain $d = d(p,k,o)$. Output 1 if $d \oplus d' = b \wedge b'$, otherwise output 0.
\end{itemize}
}}
    \caption{A 1-of-2 puzzle $\mathbf{Z}$ constructed from Protocol~$\mathbf{Q}$.}
    \label{fig:puzzle}
\end{figure}

\begin{proposition}\label{prop:quantum_money}
    Suppose that the $\LWE_{n,q,G ( \sigma, \tau )}$ problem is hard. Then $\mathbf{Z} = (G,O,S,V)$ defined in \cref{fig:puzzle} is a 1-of-2 puzzle with completeness $\cos^2 \left( \frac{\pi}{8} \right) - \frac{5m \sigma^2}{q^2}  - \frac{m \sigma}{ 2 \tau }$ and soundness error $\frac{1}{2} + \negl$.
\end{proposition}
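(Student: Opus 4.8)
The plan is to handle the completeness and soundness claims separately; the first is essentially immediate from \cref{thm:comp}, and the second reduces to the IND-CPA security of the encryption scheme $J$ (\cref{prop:regev2}).

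For completeness, observe that the algorithms $G$, $O$, $S$, $V$ of \cref{fig:puzzle} are defined so that $G$ reproduces Alice's Step~1 in \cref{fig:pfprot}, $O$ runs Step~2 of \cref{fig:prover}, $S$ runs Step~4 of \cref{fig:prover}, and $V$ runs Steps~5--6 of \cref{fig:pfprot}. Hence, for a uniformly random challenge $b' \leftarrow \{0,1\}$, an honest run of $\mathbf{Z}$ induces exactly the same distribution on $(b, b', d, d')$ as an honest run of Protocol~$\mathbf{Q}$ with the prover of \cref{fig:prover}. Therefore the honest player's success probability equals $\prob[\textnormal{success}]$ in Protocol~$\mathbf{Q}$, which by \cref{thm:comp} is at least $\cos^2(\pi/8) - \frac{5m\sigma^2}{q^2} - \frac{m\sigma}{2\tau}$, giving the claimed completeness.

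For the soundness error, consider any non-uniform QPT adversary in the soundness game for $\mathbf{Z}$ (as defined in \cref{app:money}). It is given only $p = (pk, ct)$, where $pk = (A, v)$ and $ct \leftarrow \Encrypt_J(pk, b)$ with $b \leftarrow \{0,1\}$; it never sees the key $k = (s, t, b)$. It outputs an obligation $o = (y, u)$ and, in the challenge phase, answers $d'_0$ and $d'_1$ to the two challenges $b' = 0$ and $b' = 1$. The verifier computes $d = u \cdot ([x_0]\oplus[x_1])$ from $k$ exactly as in \cref{fig:pfprot}, and accepts $(o, b', d')$ iff $d \oplus d' = b \wedge b'$. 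If the adversary wins on both challenges, then $d'_0 = d$ and $d'_1 = d \oplus b$, so $d'_0 \oplus d'_1 = b$. Consequently the algorithm $\mathcal{B}(pk, ct)$ that runs the adversary, obtains $d'_0$ and $d'_1$, and outputs $d'_0 \oplus d'_1$ is a non-uniform QPT algorithm that recovers $b$ with probability at least the adversary's probability of winning both challenges. By \cref{prop:regev2} this probability is at most $\frac{1}{2} + \negl$, so the soundness error of $\mathbf{Z}$ is at most $\frac{1}{2} + \negl$.

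I do not expect a serious obstacle; the real care is in lining the argument up with the formal definition of a 1-of-2 puzzle in \cref{app:money}. In particular one must check that the challenge-phase answers $d'_0$ and $d'_1$ can be jointly extracted by a single QPT procedure (so that $\mathcal{B}$ is a legitimate adversary against $J$), even when the puzzle's soundness definition allows the adversary to keep an entangled or split state between the obligation and challenge phases, and that ``winning on both challenges'' under that definition is indeed equivalent (up to relabeling the two challenges) to the relation $d'_0 \oplus d'_1 = b$ used above.
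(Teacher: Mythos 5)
Your proposal is correct and takes essentially the same route as the paper's proof: completeness by identifying an honest run of $\mathbf{Z}$ with an honest run of Protocol~$\mathbf{Q}$ and invoking \cref{thm:comp}, and soundness by the reduction that feeds $(pk,ct)$ to the puzzle adversary and outputs the XOR of its two answers to break the scheme $J$, contradicting \cref{prop:regev2}. The caveat in your final paragraph is moot, because the formal soundness definition in \cref{app:money} has the adversary output $(o,a_0,a_1)$ in a single shot, so no two-phase extraction or rewinding issue arises.
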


As shown in \cite[Proof of Theorem 2]{radian2022semi}, to convert our 1-of-2 puzzle~$\mathbf{Z}$ in \cref{fig:puzzle} into a private semi-quantum money scheme, it suffices to convert it into a strong 1-of-2 puzzle. \cite[Corollary 21]{radian2022semi} gives a way of converting a 1-of-2 puzzle with completeness \emph{1} and soundness error equal to a constant less than 1 into a strong 1-of-2 puzzle via parallel repetition. However, their method does not immediately apply in our setting because our completeness is not 1. We conjecture instead that \emph{threshold} parallel repetition can be used to convert $\mathbf{Z}$ (or any 1-of-2 puzzle with a constant gap between its soundness error and completeness) into  a strong 1-of-2 puzzle. We give a precise formulation of our conjecture in~\cref{app:money}.

The main benefit of using our 1-of-2 puzzle construction is the size of the resulting quantum money state. In~\cite{radian2022semi}, the size of the money state in their semi-quantum money scheme is equal to the size of the quantum state $\rho$ produced by the obligation algorithm $O$ of the strong 1-of-2 puzzle. Their construction of a strong 1-of-2 puzzle is based on the parallel repetition of a 1-of-2 puzzle where the $\rho$ produced by $O$ is a \emph{claw state}. On the other hand, in our construction, the strong 1-of-2 puzzle is a (threshold) parallel repetition of a 1-of-2 puzzle where each $\rho$ produced by $O$ is a \emph{single qubit}. It is crucial to minimize the size of quantum money states since they need to be stored and protected against decoherence for long periods of time before transactions.

\paragraph{\textbf{\textup{Hamiltonian simulation with encryption.}}} While the results in this paper have focused on the preparation of single qubits, they invite extensions to other types of quantum operations.  We briefly sketch an approach that would enable a limited class of unitary operations on quantum registers of arbitrary size.\footnote{We thank Honghao Fu for helping us with the details in this subsection.}

Suppose that $w$ is a register, and let $W = \mathbb{C}^w$.  Suppose that $P$ is a Hermitian projection operator on $W$ such that there exists a polynomially sized unitary quantum circuit $U \colon W \to \mathbb{C}^2 \otimes W$ which maps $v \mapsto \left| 0 \right> \otimes v $ for all $v$ in the null space of $P$, and maps $v \mapsto \left| 1 \right> \otimes v$ for all $v$ in the support of $P$.

Suppose that Bob possesses register $w$ in state $\ell \in W$, and Alice wishes to have Bob apply the operation $e^{itP}$ to $w$ for some real value $t$ ($0 \leq t < 2 \pi$), without revealing the value of $t$ to Bob.  Alice chooses $\alpha \in \mathbb{Z}_q$ so that $\alpha / q \approx t$ and executes the first step of Protocol~$\mathbf{P}$ (\cref{fig:rsp protocol}).  Bob applies the quantum circuit $U$ to obtain a joint state $\ell' = \left| 0 \right> \otimes \ell_0 + \left| 1 \right> \otimes \ell_1$ of a pair of registers $(\zeta,w)$, where $\zeta$ is a qubit register and $P \ell_0 = 0$, $P \ell_1 = \ell_1$.  Then, Bob constructs the following state, which is the same as state (\ref{rspprepstate}) in Protocol~$\mathbf{P}$ except for the addition of a fourth register ($w$):
    \begin{equation*}
    \ket{\phi} \coloneqq 
    \frac{1}{\sqrt{2}\sqrt{q^n(2\tau+1)^{m}}}
        \sum_{\substack{x \in \mathbb{Z}_q^n}} \hskip0.1in \sum_{c \in \{ 0, 1 \}} \hskip0.1in
        \sum_{\substack{g \in \mathbb{Z}_q^m \\ \| g \|_\infty 
        \leq \tau }} \ket{x}\ket{c}
        \ket{Ax - cv + g} \ket{\ell_c}.
    \end{equation*}
Alice and Bob then follow the remaining steps of Protocol~$\mathbf{P}$, and Bob approximately obtains either the state
    \begin{equation}
        \frac{1}{\sqrt{2}} \bigl( \ket{0} \ket{\ell_0} + e^{i \, t } \ket{1} \ket{\ell_1} \bigr) 
        \quad 
        \text{or}
        \quad
        \frac{1}{\sqrt{2}} \bigl( \ket{0} \ket{\ell_0} - e^{i \, t } \ket{1} \ket{\ell_1} \bigr).
    \end{equation}
Lastly, Bob applies the reverse quantum circuit $U^{-1}$ to approximately obtain either $e^{itP} \ell$ or $e^{i(t+\pi)P} \ell$.

At this point, Bob has successfully carried out the unitary operator $e^{itP}$ (modulo a possible reflection $(\mathbb{I} - P)/2$) without knowing the value of $t$. This approach is easily generalizable with $P$ replaced by any Hermitian operator $H$ that has only two eigenvalues. A natural next step would be to explore whether a similar procedure could be carried out for arbitrary Hermitian operators $H$.

\section*{Acknowledgements}

We thank Gorjan Alagic, Alexandru Cojocaru, and Yi-Kai Liu for useful discussions and comments on an earlier version of this work. Y.A. acknowledges support from the Crown Prince International Scholarship Program and the Arabian Gulf University. A.M acknowledges support from the U.S. Army Research Office under Grant Number W911NF-20-1-0015 and AFOSR MURI project ``Scalable Certification of
Quantum Computing Devices and Networks'', as well as partial support by faculty startup grant from Virginia Tech and by the Commonwealth Cyber Initiative. D.W. acknowledges support from the Army Research Office (grant W911NF-20-1-0015); the Department of Energy, Office of Science, Office of Advanced Scientific Computing Research, Accelerated Research in Quantum Computing program; and the National Science Foundation (grant DMR-1747426).  This paper is partly a contribution of the National Institute of Standards and Technology.


\bibliography{references}

\newcommand{\etalchar}[1]{$^{#1}$}
\newcommand{\arxiv}[1]{
  \href{https://arxiv.org/abs/#1}{\ttfamily{arXiv:#1}}}\newcommand{\iacr}[1]{
  \href{https://eprint.iacr.org/#1}{\ttfamily{iacr:#1}}}\newcommand{\focs}[1]{Proceedings
  of the #1 {IEEE} Symposium on Foundations of Computer Science
  ({FOCS})}\newcommand{\stoc}[1]{Proceedings of the #1 {ACM} Symposium on
  Theory of Computing ({STOC})}\newcommand{\icalp}[1]{#1 International
  Colloquium on Automata, Languages, and Programming
  (ICALP)}\newcommand{\mfcs}[1]{Proceedings of the #1 International Symposium
  on Mathematical Foundations of Computer Science
  (MFCS)}\newcommand{\sicomp}{SIAM Journal on Computing}
\begin{thebibliography}{KMCVY22}

\bibitem[AGKZ20]{amos2020one}
Ryan Amos, Marios Georgiou, Aggelos Kiayias, and Mark Zhandry.
\newblock One-shot signatures and applications to hybrid quantum/classical
  authentication.
\newblock In {\em \stoc{52nd}}, pages 255--268, 2020.
\newblock \iacr{2020/107}.
\newblock \href {https://doi.org/10.1145/3357713.3384304}
  {\path{doi:10.1145/3357713.3384304}}.

\bibitem[AP11]{alwen2011generating}
Jo{\"e}l Alwen and Chris Peikert.
\newblock Generating shorter bases for hard random lattices.
\newblock {\em Theory of Computing Systems}, 48(3):535--553, 2011.
\newblock \iacr{2008/521}.
\newblock \href {https://doi.org/10.1007/s00224-010-9278-3}
  {\path{doi:10.1007/s00224-010-9278-3}}.

\bibitem[BB14]{bennett1984}
Charles~H. Bennett and Gilles Brassard.
\newblock Quantum cryptography: Public key distribution and coin tossing.
\newblock {\em Theoretical Computer Science}, 560:7--11, 2014.
\newblock \arxiv{2003.06557}.
\newblock \href {https://doi.org/10.1016/j.tcs.2014.05.025}
  {\path{doi:10.1016/j.tcs.2014.05.025}}.

\bibitem[BCC{\etalchar{+}}20]{badertscher2020security}
Christian Badertscher, Alexandru Cojocaru, L{\'e}o Colisson, Elham Kashefi,
  Dominik Leichtle, Atul Mantri, and Petros Wallden.
\newblock Security limitations of classical-client delegated quantum computing.
\newblock In {\em International Conference on the Theory and Application of
  Cryptology and Information Security}, pages 667--696, 2020.
\newblock \arxiv{2007.01668}.
\newblock \href {https://doi.org/10.1007/978-3-030-64834-3_23}
  {\path{doi:10.1007/978-3-030-64834-3_23}}.

\bibitem[BCM{\etalchar{+}}21]{brakerski2018cryptographic}
Zvika Brakerski, Paul Christiano, Urmila Mahadev, Umesh Vazirani, and Thomas
  Vidick.
\newblock {A Cryptographic Test of Quantumness and Certifiable Randomness from
  a Single Quantum Device}.
\newblock {\em J. ACM}, 68(5), August 2021.
\newblock \arxiv{1804.00640}.
\newblock \href {https://doi.org/10.1145/3441309} {\path{doi:10.1145/3441309}}.

\bibitem[BFK09]{broadbent2009universal}
Anne Broadbent, Joseph Fitzsimons, and Elham Kashefi.
\newblock Universal blind quantum computation.
\newblock In {\em \focs{50th}}, pages 517--526, 2009.
\newblock \arxiv{0807.4154}.
\newblock \href {https://doi.org/10.1109/FOCS.2009.36}
  {\path{doi:10.1109/FOCS.2009.36}}.

\bibitem[BKVV20]{brakerski2020simpler}
Zvika Brakerski, Venkata Koppula, Umesh Vazirani, and Thomas Vidick.
\newblock {Simpler Proofs of Quantumness}.
\newblock In {\em 15th Conference on the Theory of Quantum Computation,
  Communication and Cryptography (TQC 2020)}, pages 8:1--8:14, 2020.
\newblock \arxiv{2005.04826}.
\newblock \href {https://doi.org/10.4230/LIPIcs.TQC.2020.8}
  {\path{doi:10.4230/LIPIcs.TQC.2020.8}}.

\bibitem[Bra18]{brakerski2018quantum}
Zvika Brakerski.
\newblock {Quantum {FHE} (almost) as secure as classical}.
\newblock In {\em Annual International Cryptology Conference}, pages 67--95,
  2018.
\newblock \iacr{2018/338}.
\newblock \href {https://doi.org/10.1007/978-3-319-96878-0_3}
  {\path{doi:10.1007/978-3-319-96878-0_3}}.

\bibitem[CCKM20]{ciampi2020secure}
Michele Ciampi, Alexandru Cojocaru, Elham Kashefi, and Atul Mantri.
\newblock {Secure Two-Party Quantum Computation Over Classical Channels}, 2020.
\newblock \href {https://doi.org/10.48550/arXiv.2010.07925}
  {\path{doi:10.48550/arXiv.2010.07925}}.

\bibitem[CCKW19]{cojocaru2019qfactory}
Alexandru Cojocaru, L{\'e}o Colisson, Elham Kashefi, and Petros Wallden.
\newblock {QF}actory: classically-instructed remote secret qubits preparation.
\newblock In {\em International Conference on the Theory and Application of
  Cryptology and Information Security}, pages 615--645, 2019.
\newblock \arxiv{1904.06303}.
\newblock \href {https://doi.org/10.1007/978-3-030-34578-5_22}
  {\path{doi:10.1007/978-3-030-34578-5_22}}.

\bibitem[CCKW21]{cojocaru2021possibility}
Alexandru Cojocaru, L{\'e}o Colisson, Elham Kashefi, and Petros Wallden.
\newblock On the possibility of classical client blind quantum computing.
\newblock {\em Cryptography}, 5(1):3, 2021.
\newblock \arxiv{1802.08759}.
\newblock \href {https://doi.org/10.3390/cryptography5010003}
  {\path{doi:10.3390/cryptography5010003}}.

\bibitem[CHSH69]{clauser1969proposed}
John~F. Clauser, Michael~A. Horne, Abner Shimony, and Richard~A. Holt.
\newblock Proposed experiment to test local hidden-variable theories.
\newblock {\em Physical Review Letters}, 23(15):880, 1969.
\newblock \href {https://doi.org/10.1103/PhysRevLett.23.880}
  {\path{doi:10.1103/PhysRevLett.23.880}}.

\bibitem[CKS20]{canonne2020gaussian}
Cl\'{e}ment~L Canonne, Gautam Kamath, and Thomas Steinke.
\newblock {The Discrete {G}aussian for Differential Privacy}.
\newblock In {\em Advances in Neural Information Processing Systems},
  volume~33, pages 15676--15688, 2020.
\newblock \href {https://doi.org/10.48550/arXiv.2004.00010}
  {\path{doi:10.48550/arXiv.2004.00010}}.

\bibitem[DKL12]{dunjko2012blind}
Vedran Dunjko, Elham Kashefi, and Anthony Leverrier.
\newblock Blind quantum computing with weak coherent pulses.
\newblock {\em Physical Review Letters}, 108(20):200502, 2012.
\newblock \arxiv{1108.5571}.
\newblock \href {https://doi.org/10.1103/PhysRevLett.108.200502}
  {\path{doi:10.1103/PhysRevLett.108.200502}}.

\bibitem[DLF{\etalchar{+}}16]{monroe2016rotation}
S.~Debnath, N.~M. Linke, C.~Figgatt, K.~A. Landsman, K.~Wright, and C.~Monroe.
\newblock Demonstration of a small programmable quantum computer with atomic
  qubits.
\newblock {\em Nature}, 536(7614):63--66, 2016.
\newblock \arxiv{1603.04512}.
\newblock \href {https://doi.org/10.1038/nature18648}
  {\path{doi:10.1038/nature18648}}.

\bibitem[FBS{\etalchar{+}}14]{fisher2014quantum}
Kent A.~G. Fisher, Anne Broadbent, L.~K. Shalm, Z.~Yan, Jonathan Lavoie, Robert
  Prevedel, Thomas Jennewein, and Kevin~J. Resch.
\newblock Quantum computing on encrypted data.
\newblock {\em Nature Communications}, 5(1):1--7, 2014.
\newblock \arxiv{1309.2586}.
\newblock \href {https://doi.org/10.1038/ncomms4074}
  {\path{doi:10.1038/ncomms4074}}.

\bibitem[FWZ23]{fwz2022selftest}
Honghao Fu, Daochen Wang, and Qi~Zhao.
\newblock {Parallel Self-Testing of EPR Pairs Under Computational Assumptions}.
\newblock In {\em \icalp{50th}}, volume 261, pages 64:1--64:19, 2023.
\newblock \arxiv{2201.13430}.
\newblock \href {https://doi.org/10.4230/LIPIcs.ICALP.2023.64}
  {\path{doi:10.4230/LIPIcs.ICALP.2023.64}}.

\bibitem[GHZ89]{ghz2007state}
Daniel~M. Greenberger, Michael~A. Horne, and Anton Zeilinger.
\newblock {\em {Going Beyond Bell's Theorem}}, pages 69--72.
\newblock Springer Netherlands, 1989.
\newblock \arxiv{0712.0921}.
\newblock \href {https://doi.org/10.1007/978-94-017-0849-4_10}
  {\path{doi:10.1007/978-94-017-0849-4_10}}.

\bibitem[GMP23]{gheorghiu2022quantum}
Alexandru Gheorghiu, Tony Metger, and Alexander Poremba.
\newblock {Quantum Cryptography with Classical Communication: Parallel Remote
  State Preparation for Copy-Protection, Verification, and More}.
\newblock In {\em \icalp{50th}}, volume 261, pages 67:1--67:17, 2023.
\newblock \arxiv{2201.13445}.
\newblock \href {https://doi.org/10.4230/LIPIcs.ICALP.2023.67}
  {\path{doi:10.4230/LIPIcs.ICALP.2023.67}}.

\bibitem[GMR84]{goldwasser1984paradoxical}
S.~Goldwasser, S.~Micali, and R.~L. Rivest.
\newblock {A ``Paradoxical'' Solution To The Signature Problem}.
\newblock In {\em \focs{25th}}, pages 441--448, 1984.
\newblock \href {https://doi.org/10.1109/SFCS.1984.715946}
  {\path{doi:10.1109/SFCS.1984.715946}}.

\bibitem[GV19]{gheorghiu2019computationally}
Alexandru Gheorghiu and Thomas Vidick.
\newblock Computationally-secure and composable remote state preparation.
\newblock In {\em \focs{60th}}, pages 1024--1033, 2019.
\newblock \arxiv{1904.06320}.
\newblock \href {https://doi.org/10.1109/FOCS.2019.00066}
  {\path{doi:10.1109/FOCS.2019.00066}}.

\bibitem[HLG21]{hirahara2021test}
Shuichi Hirahara and Fran\c{c}ois Le~Gall.
\newblock Test of quantumness with small-depth quantum circuits.
\newblock In {\em 46th International Symposium on Mathematical Foundations of
  Computer Science (MFCS 2021)}, 2021.
\newblock \arxiv{2105.05500}.
\newblock \href {https://doi.org/10.4230/LIPIcs.MFCS.2021.59}
  {\path{doi:10.4230/LIPIcs.MFCS.2021.59}}.

\bibitem[IJK09]{impagliazzo2009chernoff}
Russell Impagliazzo, Ragesh Jaiswal, and Valentine Kabanets.
\newblock {Chernoff-Type Direct Product Theorems}.
\newblock {\em Journal of Cryptology}, 22(1):75--92, 2009.
\newblock \href {https://doi.org/10.1007/s00145-008-9029-7}
  {\path{doi:10.1007/s00145-008-9029-7}}.

\bibitem[KLVY23]{kalai2022quantum}
Yael Kalai, Alex Lombardi, Vinod Vaikuntanathan, and Lisa Yang.
\newblock {Quantum Advantage from Any Non-local Game}.
\newblock In {\em \stoc{55th}}, page 1617–1628, 2023.
\newblock \arxiv{2203.15877v1}.
\newblock \href {https://doi.org/10.1145/3564246.3585164}
  {\path{doi:10.1145/3564246.3585164}}.

\bibitem[KMCVY22]{kahanamoku2022classically}
Gregory~D. Kahanamoku-Meyer, Soonwon Choi, Umesh~V. Vazirani, and Norman~Y.
  Yao.
\newblock Classically verifiable quantum advantage from a computational {B}ell
  test.
\newblock {\em Nature Physics}, 18(8):918--924, 2022.
\newblock \arxiv{2104.00687}.
\newblock \href {https://doi.org/10.1038/s41567-022-01643-7}
  {\path{doi:10.1038/s41567-022-01643-7}}.

\bibitem[LG22]{liu2021depth}
Zhenning Liu and Alexandru Gheorghiu.
\newblock Depth-efficient proofs of quantumness.
\newblock {\em {Quantum}}, 6:807, 2022.
\newblock \arxiv{2107.02163}.
\newblock \href {https://doi.org/10.22331/q-2022-09-19-807}
  {\path{doi:10.22331/q-2022-09-19-807}}.

\bibitem[Mah18]{mahadev2018classical}
Urmila Mahadev.
\newblock {Classical Homomorphic Encryption for Quantum Circuits}.
\newblock In {\em \focs{59th}}, pages 332--338, 2018.
\newblock \arxiv{1708.02130}.
\newblock \href {https://doi.org/10.1109/FOCS.2018.00039}
  {\path{doi:10.1109/FOCS.2018.00039}}.

\bibitem[Mas17]{maslov2017rotation}
Dmitri Maslov.
\newblock Basic circuit compilation techniques for an ion-trap quantum machine.
\newblock {\em New Journal of Physics}, 19(2):023035, 2017.
\newblock \arxiv{1603.07678}.
\newblock \href {https://doi.org/10.1088/1367-2630/aa5e47}
  {\path{doi:10.1088/1367-2630/aa5e47}}.

\bibitem[MP12]{micciancio-peikert2012trapdoors}
Daniele Micciancio and Chris Peikert.
\newblock {Trapdoors for Lattices: Simpler, Tighter, Faster, Smaller}.
\newblock In {\em Advances in Cryptology -- EUROCRYPT 2012}, pages 700--718,
  2012.
\newblock \iacr{2011/501}.
\newblock \href {https://doi.org/10.1007/978-3-642-29011-4_41}
  {\path{doi:10.1007/978-3-642-29011-4_41}}.

\bibitem[MTH{\etalchar{+}}22]{mizutani2022selftest}
Akihiro Mizutani, Yuki Takeuchi, Ryo Hiromasa, Yusuke Aikawa, and Seiichiro
  Tani.
\newblock Computational self-testing for entangled magic states.
\newblock {\em Physical Review A}, 106:L010601, 2022.
\newblock \arxiv{2111.02700}.
\newblock \href {https://doi.org/10.1103/PhysRevA.106.L010601}
  {\path{doi:10.1103/PhysRevA.106.L010601}}.

\bibitem[MV21]{metger2021selftest}
Tony Metger and Thomas Vidick.
\newblock Self-testing of a single quantum device under computational
  assumptions.
\newblock {\em Quantum}, 5:544, 2021.
\newblock \arxiv{2001.09161}.
\newblock \href {https://doi.org/10.22331/q-2021-09-16-544}
  {\path{doi:10.22331/q-2021-09-16-544}}.

\bibitem[MVV22]{mahadev2022}
Urmila Mahadev, Umesh Vazirani, and Thomas Vidick.
\newblock {Efficient Certifiable Randomness from a Single Quantum Device},
  2022.
\newblock \href {https://doi.org/10.48550/arXiv.2204.11353}
  {\path{doi:10.48550/arXiv.2204.11353}}.

\bibitem[MY23]{morimae2022proofs}
Tomoyuki Morimae and Takashi Yamakawa.
\newblock {Proofs of Quantumness from Trapdoor Permutations}.
\newblock In {\em 14th Innovations in Theoretical Computer Science Conference
  (ITCS)}, volume 251, pages 87:1--87:14, 2023.
\newblock \arxiv{2208.12390}.
\newblock \href {https://doi.org/10.4230/LIPIcs.ITCS.2023.87}
  {\path{doi:10.4230/LIPIcs.ITCS.2023.87}}.

\bibitem[{Nat}19]{national2019quantum}
{National Academies of Sciences, Engineering, and Medicine}.
\newblock {\em {Quantum Computing: Progress and Prospects}}.
\newblock The National Academies Press, 2019.
\newblock \href {https://doi.org/10.17226/25196} {\path{doi:10.17226/25196}}.

\bibitem[NC10]{nielsenchuang2010book}
Michael~A. Nielsen and Isaac~L. Chuang.
\newblock {\em Quantum Computation and Quantum Information: 10th Anniversary
  Edition}.
\newblock Cambridge University Press, 2010.
\newblock \href {https://doi.org/10.1017/CBO9780511976667}
  {\path{doi:10.1017/CBO9780511976667}}.

\bibitem[Pei10]{peikert2010reduction}
Chris Peikert.
\newblock {An Efficient and Parallel Gaussian Sampler for Lattices}.
\newblock In {\em Advances in Cryptology -- CRYPTO 2010}, pages 80--97, 2010.
\newblock \iacr{2010/088}.
\newblock \href {https://doi.org/10.1007/978-3-642-14623-7_5}
  {\path{doi:10.1007/978-3-642-14623-7_5}}.

\bibitem[Reg09]{regev2009lwe}
Oded Regev.
\newblock On lattices, learning with errors, random linear codes, and
  cryptography.
\newblock {\em J. ACM}, 56(6), 2009.
\newblock \href {https://doi.org/10.1145/1568318.1568324}
  {\path{doi:10.1145/1568318.1568324}}.

\bibitem[RS22]{radian2022semi}
Roy Radian and Or~Sattath.
\newblock {Semi-quantum Money}.
\newblock {\em Journal of Cryptology}, 35(2), 2022.
\newblock \arxiv{1908.08889}.
\newblock \href {https://doi.org/10.1007/s00145-021-09418-8}
  {\path{doi:10.1007/s00145-021-09418-8}}.

\bibitem[Sho94]{shor1994algorithms}
Peter~W. Shor.
\newblock Algorithms for quantum computation: discrete logarithms and
  factoring.
\newblock In {\em \focs{35th}}, pages 124--134, 1994.
\newblock \arxiv{quant-ph/9508027}.
\newblock \href {https://doi.org/10.1109/SFCS.1994.365700}
  {\path{doi:10.1109/SFCS.1994.365700}}.

\bibitem[Wie83]{wiesner1983}
Stephen Wiesner.
\newblock {Conjugate Coding}.
\newblock {\em SIGACT News}, 15(1):78--88, 1983.
\newblock \href {https://doi.org/10.1145/1008908.1008920}
  {\path{doi:10.1145/1008908.1008920}}.

\bibitem[Win99]{winter1999coding}
A.~Winter.
\newblock Coding theorem and strong converse for quantum channels.
\newblock {\em IEEE Transactions on Information Theory}, 45(7):2481--2485,
  1999.
\newblock \href {https://doi.org/10.1109/18.796385}
  {\path{doi:10.1109/18.796385}}.

\bibitem[YZ22]{yamakawa2022verifiable}
Takashi Yamakawa and Mark Zhandry.
\newblock Verifiable quantum advantage without structure.
\newblock In {\em \focs{63rd}}, pages 69--74, 2022.
\newblock \arxiv{2204.02063}.
\newblock \href {https://doi.org/10.1109/FOCS54457.2022.00014}
  {\path{doi:10.1109/FOCS54457.2022.00014}}.

\bibitem[Zha22]{zhang2022classical}
Jiayu Zhang.
\newblock Classical verification of quantum computations in linear time.
\newblock In {\em \focs{63rd}}, pages 46--57, 2022.
\newblock \arxiv{2202.13997}.
\newblock \href {https://doi.org/10.1109/FOCS54457.2022.00012}
  {\path{doi:10.1109/FOCS54457.2022.00012}}.

\bibitem[ZKML{\etalchar{+}}23]{zhu2021interactive}
Daiwei Zhu, Gregory~D. Kahanamoku-Meyer, Laura Lewis, Crystal Noel, Or~Katz,
  Bahaa Harraz, Qingfeng Wang, Andrew Risinger, Lei Feng, Debopriyo Biswas,
  Laird Egan, Alexandru Gheorghiu, Yunseong Nam, Thomas Vidick, Umesh Vazirani,
  Norman~Y. Yao, Marko Cetina, and Christopher Monroe.
\newblock Interactive cryptographic proofs of quantumness using mid-circuit
  measurements.
\newblock {\em Nature Physics}, 19(11):1725--1731, 2023.
\newblock \arxiv{2112.05156}.
\newblock \href {https://doi.org/10.1038/s41567-023-02162-9}
  {\path{doi:10.1038/s41567-023-02162-9}}.

\end{thebibliography}
\bibliographystyle{alphaurl}

\newpage

\appendix

\section{Mathematical Lemmas}
\begin{lemma}
\label{lem:var}
For any $\sigma \in \mathbb{N},  \tau > 0$,
\begin{equation}
\expect [ X^2 \mid
X \leftarrow G ( \sigma, \tau ) ]
\leq \sigma^2.
\end{equation}
\end{lemma}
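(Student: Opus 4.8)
The plan is to reduce to the untruncated case via \cref{lem:gaussian_properties}, using the fact that conditioning a mean-zero distribution on a symmetric interval around the origin cannot increase its second moment. Concretely, I would write $p(x) \coloneqq \prob[X=x]$ for $X \leftarrow G(\sigma)$, set $A \coloneqq \{x\in\mathbb{Z} : |x|\leq\tau\}$ and $Z \coloneqq \prob[|X|\leq\tau] = \sum_{x\in A} p(x)$ (so $Z>0$ since $0\in A$), and observe that by the definition of the truncated distribution,
\[
\expect[X^2 \mid X\leftarrow G(\sigma,\tau)] = \frac{1}{Z}\sum_{x\in A} x^2 p(x).
\]

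The core step is the inequality $\frac{1}{Z}\sum_{x\in A} x^2 p(x) \leq \sum_{x\in\mathbb{Z}} x^2 p(x)$. Splitting the right-hand side over $A$ and $\mathbb{Z}\setminus A$, this is equivalent to $(1-Z)\sum_{x\in A} x^2 p(x) \leq Z\sum_{x\notin A} x^2 p(x)$. I would establish it from the two elementary bounds $\sum_{x\in A} x^2 p(x) \leq \tau^2 Z$ (since $x^2 \leq \tau^2$ on $A$) and $\sum_{x\notin A} x^2 p(x) \geq \tau^2(1-Z)$ (since $x^2 \geq \tau^2$ off $A$): multiplying the first by $1-Z$ and the second by $Z$ and comparing gives exactly the desired inequality.

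To finish, \cref{lem:gaussian_properties} gives $\expect[X]=0$ and $\variance[X]\leq\sigma^2$ for $X\leftarrow G(\sigma)$, so $\sum_{x\in\mathbb{Z}} x^2 p(x) = \expect[X^2] = \variance[X] \leq \sigma^2$, and chaining this with the core inequality yields the claim. I do not anticipate a genuine obstacle; the only point worth checking is the degenerate regime $\tau<1$, where $A=\{0\}$ and $G(\sigma,\tau)$ is the point mass at $0$, but there the left-hand side is simply $0$ and all the displayed inequalities hold trivially.
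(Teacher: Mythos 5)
Your proof is correct and follows essentially the same route as the paper's: both arguments compare the truncated second moment to the full one by exploiting that $x^2 \leq \tau^2$ inside the truncation window and $x^2 \geq \tau^2$ outside it, and then conclude via \cref{lem:gaussian_properties} since $\expect[X]=0$ gives $\expect[X^2]=\variance[X]\leq\sigma^2$. The paper packages the same comparison slightly differently (writing $L = L(1-Z)+LZ$ and bounding $L\leq\tau^2$), but the underlying inequality is identical to your rearranged form $(1-Z)\sum_{x\in A}x^2p(x)\leq Z\sum_{x\notin A}x^2p(x)$.
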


\begin{proof}
For notational convenience, for $x\in \mathbb{Z}$, we write
\begin{equation}
    \prob_1[x] \coloneqq \Pr[x \leftarrow G(\sigma)] \quad \text{and} \quad \prob_2[x] \coloneqq \Pr[x \leftarrow G(\sigma,\tau)].
\end{equation}
We also write
\begin{equation}
    \quad Z \coloneqq \sum_{x\in \mathbb{Z} : \abs{x} \leq \tau} \prob_1[x] \quad \text{and} \quad L \coloneqq \expect [ X^2 \mid X \leftarrow G ( \sigma, \tau ) ].
\end{equation}
Note that $L$ is the quantity we need to upper bound and
\begin{equation}\label{eq:gaussian_expected_note}
    L = \sum_{x\in \mathbb{Z} : \abs{x} \leq \tau} x^2 \prob_2[x] = \sum_{x\in \mathbb{Z} : \abs{x} \leq \tau} x^2 \frac{\prob_1[x]}{Z} \leq \tau^2.
\end{equation}

We upper bound $L$ as follows.
\begin{align*}
    L &= L(1-Z) + LZ
    \\
    &\leq \tau^2(1-Z) + \sum_{x\in \mathbb{Z} : \abs{x} \leq \tau} x^2 \prob_1[x] &&\text{(by \cref{eq:gaussian_expected_note})}
    \\
    &= \tau^2 \sum_{x\in \mathbb{Z} : \abs{x} > \tau} \prob_1[x] + \sum_{x\in \mathbb{Z} : \abs{x} \leq \tau} x^2 \prob_1[x]
    \\
    &\leq \sum_{x\in \mathbb{Z} : \abs{x} > \tau} x^2 \prob_1[x] + \sum_{x\in \mathbb{Z} : \abs{x} \leq \tau} x^2 \prob_1[x]
    \\
    &= \sum_{x\in \mathbb{Z}}x^2 \prob_1[x] 
    \\
    &= \expect[X^2 \mid X\leftarrow G(\sigma)],
\end{align*}
but, by \cref{lem:gaussian_properties}, we have
\begin{equation}
\expect[X^2 \mid X\leftarrow G(\sigma)]  = \variance[X \mid X\leftarrow G(\sigma)] \leq \sigma^2.
\end{equation}
Therefore, $L\leq \sigma^2$, as  required.
\end{proof}

\begin{lemma}
\label{lem:abs}
For any $\sigma \in \mathbb{N},  \tau > 0$,
\begin{equation}
\expect [ \, \left| X \right| \mid
X \leftarrow G ( \sigma, \tau ) ]
\leq \sigma.
\end{equation}
\end{lemma}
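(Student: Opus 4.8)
The plan is to mirror the structure of the proof of \cref{lem:var} almost verbatim, reducing to the untruncated case and then handling that case separately. First I would set up the same notation: write $\prob_1[x] \coloneqq \Pr[x \leftarrow G(\sigma)]$ and $\prob_2[x] \coloneqq \Pr[x \leftarrow G(\sigma,\tau)]$ for the untruncated and truncated mass functions, let $Z \coloneqq \sum_{x \in \mathbb{Z} : \abs{x} \leq \tau} \prob_1[x]$ be the normalization constant, and set $L \coloneqq \expect[\abs{X} \mid X \leftarrow G(\sigma,\tau)]$, the quantity to be bounded. Since $\prob_2[x] = \prob_1[x]/Z$ for $\abs{x} \leq \tau$ and $\prob_2[x]=0$ otherwise, we immediately get $L = \sum_{\abs{x}\leq\tau} \abs{x}\,\prob_1[x]/Z \leq \tau$, the analogue of \cref{eq:gaussian_expected_note}.

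Next I would run the same telescoping bound as in \cref{lem:var}: write $L = L(1-Z) + LZ \leq \tau(1-Z) + \sum_{\abs{x}\leq\tau} \abs{x}\,\prob_1[x]$, then observe $\tau(1-Z) = \tau\sum_{\abs{x}>\tau}\prob_1[x] \leq \sum_{\abs{x}>\tau}\abs{x}\,\prob_1[x]$, so that $L \leq \sum_{x\in\mathbb{Z}}\abs{x}\,\prob_1[x] = \expect[\abs{X}\mid X\leftarrow G(\sigma)]$. This reduces the claim to bounding the first absolute moment of the \emph{untruncated} discrete Gaussian.

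For that last step — which I expect to be the only substantive point, since \cref{lem:gaussian_properties} supplies the variance but not the first absolute moment directly — I would invoke the Cauchy--Schwarz (equivalently Jensen's) inequality for the concave function $\sqrt{\cdot}$: $\expect[\abs{X}] = \expect[\sqrt{X^2}] \leq \sqrt{\expect[X^2]}$. Since $\expect[X]=0$ by \cref{lem:gaussian_properties}, we have $\expect[X^2] = \variance[X] \leq \sigma^2$, again by \cref{lem:gaussian_properties}, hence $\expect[\abs{X}\mid X\leftarrow G(\sigma)] \leq \sigma$. Chaining this with the reduction above gives $L \leq \sigma$, as required. The only real obstacle is recognizing that one should not estimate $\expect[\abs{X}]$ directly but instead pass through the second moment; the rest is a line-by-line adaptation of the preceding lemma.
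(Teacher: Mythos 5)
Your proof is correct, but it takes a longer route than the paper's. The paper's proof is a one-liner: it applies the Cauchy--Schwarz/Jensen step \emph{directly to the truncated distribution}, writing $\expect[\,\abs{X}\mid X\leftarrow G(\sigma,\tau)] \leq \sqrt{\expect[X^2 \mid X\leftarrow G(\sigma,\tau)]}$ and then invoking \cref{lem:var}, which already bounds that second moment by $\sigma^2$. You instead re-derive the truncation comparison from scratch (replicating the $L = L(1-Z)+LZ$ telescoping argument from the proof of \cref{lem:var}) to reduce to the untruncated Gaussian, and only then apply Jensen together with \cref{lem:gaussian_properties}; your proof never uses \cref{lem:var} at all. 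Both arguments are valid and rest on the same key observation — pass to the second moment rather than estimating $\expect[\abs{X}]$ directly — but the paper's version buys brevity by reusing \cref{lem:var} as a black box, whereas yours duplicates the work that lemma was set up to encapsulate. Note also that Jensen applies just as well to the truncated distribution, so the reduction to $G(\sigma)$ you treat as necessary is in fact avoidable.
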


\begin{proof}
This lemma follows immediately from \cref{lem:var}.  We have
\begin{equation}
    \expect [\,  \left| X \right| \mid
X \leftarrow G ( \sigma, \tau ) ] \leq \sqrt{\expect [  \abs{X}^2  \mid
X \leftarrow G ( \sigma, \tau ) ]} \leq \sigma,
\end{equation}
as desired.
\end{proof}

\begin{lemma}\label{lem:sin square}
The following inequality holds for any real value t: $\left| \sin  t \right| \leq 
\left| t \right|$.
\end{lemma}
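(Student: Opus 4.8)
The plan is to dispatch the trivial case $t = 0$ separately and then, for $t \neq 0$, invoke the mean value theorem. First I would note that $|\sin 0| = 0 = |0|$, so the inequality holds at $t = 0$. For $t \neq 0$, apply the mean value theorem to the differentiable function $\sin$ on the closed interval with endpoints $0$ and $t$: there exists a point $c$ strictly between $0$ and $t$ such that $\sin t - \sin 0 = (\cos c)(t - 0)$. Taking absolute values gives $|\sin t| = |\cos c| \cdot |t| \leq |t|$, since $|\cos c| \leq 1$ for every real $c$. This completes the argument.

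An alternative, more self-contained route avoids the mean value theorem. One first establishes $\sin t \leq t$ for $t \geq 0$ by observing that $g(t) \coloneqq t - \sin t$ satisfies $g(0) = 0$ and $g'(t) = 1 - \cos t \geq 0$, so $g$ is nondecreasing on $[0,\infty)$ and hence nonnegative there. For $0 \leq t < 1$ this yields $0 \leq \sin t \leq t$, while for $t \geq 1$ one simply has $|\sin t| \leq 1 \leq t$; in either case $|\sin t| \leq t$. Finally, since $\sin$ is odd, for $t < 0$ we get $|\sin t| = |\sin(-t)| \leq -t = |t|$, which finishes the proof.

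There is essentially no obstacle here, as the statement is a standard elementary fact; the only real decision is which of the two routes above to present, and I would favor the mean value theorem version for brevity.
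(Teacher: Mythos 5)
Your proof is correct, but it takes a different route from the paper. The paper works with the square of the quantity: it uses the identity $\sin^2 t = \int_0^t \int_0^s 2\cos(2r)\,\mathrm{d}r\,\mathrm{d}s$ and bounds the integrand $2\cos(2r)$ by $2$, so that the double integral is at most $t^2$; this is a second-derivative (curvature) comparison, the same device used in the neighboring \cref{lem:sin}, which is presumably why the authors phrased it that way. You instead apply the mean value theorem directly to $\sin$ (or, in your alternative, monotonicity of $t-\sin t$ together with oddness), which gives $|\sin t| = |\cos c|\,|t| \le |t|$ in one line. Both arguments are complete and correct; yours is shorter and more standard as a free-standing proof, while the paper's choice buys stylistic uniformity with the adjacent lemma's integral-comparison technique. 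One small note on your second route: the case split at $t=1$ is needed only because $\sin t$ can be negative for large $t$, and your handling of it (nonnegativity of $\sin$ on $[0,1)$, the trivial bound $|\sin t|\le 1\le t$ for $t\ge 1$, then oddness) is sound.
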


\begin{proof}
We have $\sin^2 t = \int_0^t \int_0^s 2 \cos(2r) \diff r \diff s \leq \int_0^t \int_0^s 2 (1) \diff r \diff s = t^2$.
\end{proof}

\begin{lemma}
\label{lem:sin}
The following inequality holds for any real values $t, s$:
\begin{equation}
\sin^2 (t+s) \leq \sin^2 s + t \sin ( 2 s )
+ t^2.
\end{equation}
\end{lemma}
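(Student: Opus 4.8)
\textbf{Proof plan for \cref{lem:sin}.}

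The plan is to prove the inequality $\sin^2(t+s) \leq \sin^2 s + t\sin(2s) + t^2$ by viewing both sides as functions of $t$ (with $s$ held fixed) and comparing them via a second-order Taylor-type argument. First I would define $F(t) \coloneqq \sin^2(t+s)$ and $G(t) \coloneqq \sin^2 s + t\sin(2s) + t^2$, and observe that the two functions agree to first order at $t=0$: indeed $F(0) = \sin^2 s = G(0)$, and using the identity $\tfrac{d}{dt}\sin^2(t+s) = \sin(2(t+s))$ we get $F'(0) = \sin(2s) = G'(0)$. So it suffices to control the second derivatives. We have $F''(t) = 2\cos(2(t+s))$, which is bounded above by $2$ everywhere, while $G''(t) = 2$ identically. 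Hence $G''(t) - F''(t) = 2 - 2\cos(2(t+s)) \geq 0$ for all $t$.

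The key step is then to integrate this inequality twice starting from $t=0$. Writing $H \coloneqq G - F$, we have $H(0) = H'(0) = 0$ and $H''(t) \geq 0$ for all real $t$. For $t \geq 0$, integrating $H'' \geq 0$ on $[0,t]$ gives $H'(t) \geq H'(0) = 0$, and integrating again gives $H(t) \geq H(0) = 0$. For $t < 0$, the same argument run in the reverse direction works: integrating $H'' \geq 0$ on $[t,0]$ gives $H'(0) - H'(t) \geq 0$, so $H'(t) \leq 0$ on that range, and then $H(0) - H(t) = \int_t^0 H'(r)\,\diff r \leq 0$ gives $H(t) \geq 0$ again. In either case $H(t) \geq 0$, which is exactly the claimed inequality. (Alternatively, one can phrase this more slickly using the exact integral remainder: $H(t) = \int_0^t (t-r)\,H''(r)\,\diff r$, and since $H''(r) \geq 0$ and $(t-r)$ has the same sign as $t$ throughout the interval of integration, the integrand is nonnegative and so is the integral; this avoids splitting into cases. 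This mirrors the style of the proof of \cref{lem:sin square} in the paper.)

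I do not expect a genuine obstacle here — the only thing to be slightly careful about is the sign bookkeeping when $t$ is negative, which the integral-remainder formulation handles cleanly. One should also double-check the derivative identity $\tfrac{d}{dt}\sin^2(u) = \sin(2u)$ (it follows from $2\sin u\cos u = \sin 2u$) so that the first-order matching at $t=0$ is correct; everything else is routine calculus.
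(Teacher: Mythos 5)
Your proposal is correct and follows essentially the same route as the paper: match $F$ and $G$ (the paper's $f$ and $g$) in value and first derivative at $t=0$, bound $F''(t)=2\cos(2(t+s))\leq 2=G''(t)$, and conclude by integrating twice. You simply spell out the final double-integration step (including the sign bookkeeping for $t<0$) that the paper leaves implicit.
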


\begin{proof} Let $f ( x ) = \sin^2 ( x + s)$.  We have $f ( 0 ) = \sin^2 s$, $f' ( 0 ) = \sin 2s$, and $f''(x ) = 2 \cos (2x + 2s)$.  If we let
\begin{equation}
g ( x ) = \sin^2 s + x \sin 2s + x^2 ,
\end{equation}
then $g(0) = f(0),$ $g'(0) = f'(0)$, and $g''(x) \geq f'' (x )$ for all $x$, which yields $g(x) \geq f (x )$ for all $x$ as desired. 
\end{proof}

\begin{lemma}\label{lem:sum}
Suppose that $X_1, \ldots, X_\ell$ are independent real-valued random variables and that $\expect[X_1] =\expect[X_2] = \ldots = \expect[X_{\ell-1}] = 0$.  Then,
\begin{equation}
\expect[ (X_1 + \ldots + X_\ell )^2 ]  = 
\expect[ X_1^2] + \expect[ X_2^2 ] +
\ldots + \expect[ X_\ell^2 ].
\end{equation}
\end{lemma}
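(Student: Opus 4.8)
The plan is to expand the square and use independence to kill the cross terms. Writing $S \coloneqq X_1 + \cdots + X_\ell$, we have the algebraic identity
\begin{equation*}
S^2 = \sum_{i=1}^\ell X_i^2 + 2 \sum_{1 \leq i < j \leq \ell} X_i X_j,
\end{equation*}
and applying linearity of expectation gives $\expect[S^2] = \sum_{i=1}^\ell \expect[X_i^2] + 2 \sum_{i<j} \expect[X_i X_j]$. It therefore suffices to show that $\expect[X_i X_j] = 0$ for every pair $i < j$.

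For this, fix $i < j$. Since $X_i$ and $X_j$ are independent, the product rule for expectations of independent random variables yields $\expect[X_i X_j] = \expect[X_i]\,\expect[X_j]$. Now observe that $i < j \leq \ell$ forces $i \leq \ell - 1$, so by hypothesis $\expect[X_i] = 0$; hence $\expect[X_i X_j] = 0$. Substituting back into the displayed expansion gives the claimed identity.

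There is no real obstacle here — the only point requiring any care is noticing that the hypothesis only asserts that the first $\ell - 1$ means vanish (not all $\ell$), but this is exactly enough: in any pair $i < j$ the smaller index $i$ is at most $\ell - 1$, so the corresponding factor $\expect[X_i]$ is zero regardless of $j$. (One should also, for rigor, assume the $X_k$ are square-integrable so that all the expectations appearing are finite; in the applications of this lemma in the paper the variables are bounded, so this is automatic.)
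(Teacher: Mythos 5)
Your proof is correct and follows essentially the same route as the paper: expand the square, use independence to factor the cross terms as $\expect[X_i]\expect[X_j]$, and observe that each such product vanishes because in any pair at least one index is at most $\ell-1$. The paper compresses this last observation into ``all terms in the second summation are clearly zero,'' whereas you spell it out explicitly, which is a harmless (and arguably welcome) elaboration.
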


\begin{proof}
We have
\begin{equation}
\expect[ (X_1 + \ldots + X_\ell )^2 ]  = 
\sum_i \expect[X_i^2 ] + 
\sum_{i \neq j} \expect [X_i] \expect[X_j].
\end{equation}
All terms in the second summation are clearly zero.
\end{proof}

\section{Proof of \texorpdfstring{\cref{prop:regev}}{Theorem 3.2}}
\label{app:regev}

Let $\Gen'()$ denote an algorithm that merely outputs a uniformly random pair $pk = (A, v)$ and does not output a secret key. By the LWE assumption, 
\begin{equation}\label{eq:breakquantity}
    \prob[b'=b \mid pk \leftarrow \Gen_K(), b\leftarrow\{0,1\},ct\leftarrow \Encrypt_K(pk,b), b'\leftarrow \mathcal{B}(pk,ct)]
\end{equation}
is negligibly different from
\begin{equation*}
\prob [ b' = b \mid pk \leftarrow \Gen' (), b \leftarrow \{ 0, 1 \},
ct \leftarrow \Encrypt_K ( pk, b ) ,
b' \leftarrow \mathcal{B} ( pk, ct )]
\end{equation*}
Meanwhile, the leftover hash lemma (see Lemma 2.1 in \cite{alwen2011generating}) implies that the distribution of $(A, v, f^\top A, f^\top v)$, when $f \in \{ 0, 1 \}^m, A \in \mathbb{Z}_q^{m \times n }, v \in \mathbb{Z}_q^m$ are all sampled uniformly, is itself negligibly close to uniform.  Therefore the quantity in \cref{eq:breakquantity} is also negligibly close to 
\begin{equation*}
\prob [ b' = b \mid pk \leftarrow \Gen' (), 
ct \leftarrow \mathbb{Z}_q^{n+1},  b \leftarrow \{ 0, 1 \},
b' \leftarrow \mathcal{B} ( pk, ct )].
\end{equation*}
Since $ct$ and $pk$ are independent of $b$ in this expression, the quantity above is obviously equal to $\frac{1}{2}$. This completes the proof.

\section{Trapdoors for LWE matrices}
\label{app:trap}

\begin{proof}[Proof of \cref{prop:trap}]
Our proof is essentially the same as the proof in \cite{micciancio-peikert2012trapdoors}. The main difference is that we use the infinity norm, rather than the Euclidean norm, to bound error vectors.  We define $\GenTrap()$ as follows.\\

\noindent Algorithm $\GenTrap()$:
\begin{enumerate}
    \item Let
    \begin{equation}
    g = \left[ \begin{array}{c} 1 \\
    2 \\ 4 \\ \vdots \\ 2^{Q-1} \end{array}
    \right]
    \end{equation} and let $G$ be the $nQ \times n$ matrix
    \begin{equation}
    G = \left[ \begin{array}{ccccc}
    g \\
    & g \\
    && g \\
    &&& \ddots \\
    &&&& g
    \end{array} \right].
    \end{equation}
    
    \item Sample a matrix
    $M \in \mathbb{Z}_q^{(Q+1)n \times n}$ with entries chosen uniformly from $\mathbb{Z}_q$, and a matrix
    $N \in \mathbb{Z}_q^{Qn \times (Q+1)n}$ with entries chosen uniformly from $\{ 0, 1 \}$.  Let
    \begin{equation}
    A  =  \left[ \begin{array}{c}
    G + NM \\ \hline M \end{array} \right].
    \end{equation}
    \item Let $t = N$.  Return $(A, t)$.
\end{enumerate}
\end{proof}

By the leftover hash lemma (see Lemma 2.1 in \cite{alwen2011generating}) if $z \in \{ 0, 1 \}^{(Q+1)n}$ is chosen uniformly at random, then the distribution of \[ \left[ \begin{array}{c} M \\ \hline\\[-10pt] z^\top M \end{array} \right] \] is within statistical distance $2^{-n/2}$ from uniformly random.  Iterating this fact, we find that the matrix
\[ \left[ \begin{array}{c} M \\ \hline  N M \end{array} \right] \] is within statistical distance $nQ 2^{-n/2}$ from uniformly random, and the same applies to the matrix $A$.  This proves the first claim of \cref{prop:trap}.

We sketch a method for the $\Invert$ algorithm (see \cite{micciancio-peikert2012trapdoors} for more details).
The algorithm $\Invert$ receives as input the matrix $A$, a vector $v \coloneqq As + e$ where $e$ satisfies $ \left\| e \right\|_\infty \leq 2 \tau$, and the trapdoor matrix $N$.  Let $v_1$ denote the vector consisting of the first $Qn$ entries of $v$, and let $v_2$ denote the vector consisting of the remaining entries of $v$.  We have
\begin{equation}
v_1  =  (G + NM) s + e_1 \quad \text{and} \quad
v_2  =  Ms + e_2,
\end{equation}
where $e_1, e_2$ have infinity norm upper bounded by $2 \tau$.  Letting
\begin{equation}
v' \coloneqq v_1 - Nv_2,
\end{equation}
we find
\begin{equation}
v' = Gs + (e_1 - Ne_2).
\end{equation}
Let $e' \coloneqq e_1 - Ne_2$.  Then, $v' = Gs + e'$, and $\left\| e' \right\|_\infty \leq 2\tau + (Q+1)n(2 \tau) < q/(2Q)$.  
Let
    \begin{equation}
    S \coloneqq \left[ \begin{array}{cccccc}
    2 & -1 &&&& \\
    & 2 & -1 &&&  \\
    && \ddots && &  \\
    &&& 2 & -1   \\
        \left[ q \right]_1 &
    \left[ q \right]_2 &
    \left[ q \right]_3 &
    \cdots &
    \left[ q \right]_Q
   \end{array} \right]
    \end{equation}
and let $Y$ be the $Q^2 \times Q^2$ matrix which consists of $Q$ diagonal blocks, each equal to $S$.  Then, $YG = 0$, and therefore if we compute $w \coloneqq Yv'$, we have
\begin{equation}
w  =  Y(Gs + e') = Ye'.
\end{equation}
Since each entry of $e'$ has absolute value less than $q/(2Q)$ and each row of $Y$ has trace-norm less than or equal to $Q$, the equation $Y e' = w$ can be solved for $e'$ simply by inverting the matrix Y over the real numbers.  Then, we compute $Gs = v' - e'$ and recover $s$. If no solution exists, we assume that Invert returns the vector $0^n$.

\section{Verifiable Randomness}\label{app:ver}

We recall that for any classical state $\gamma$ of a pair of registers $(\chi, \omega)$, the {min-entropy} of $\chi$ conditioned on $\omega$ is defined by
\begin{eqnarray}
H_{min} ( \chi \mid \omega )_\gamma :=  
- \log K,
\end{eqnarray}
where $K$ denotes the optimal probability with which a (computationally unlimited) party who possesses the register $\omega$ can guess the value of $\chi$.

In this section, we will prove \cref{thm:oneshotrand}. We assume, without loss of generality, that Bob's behavior has the form shown in \cref{fig:cheatquantum}.  We first prove the following lemma, which can be thought of as a form of the gentle measurement lemma (see, e.g., Lemma 9 in \cite{winter1999coding}).

\begin{figure}[ht]
    \centering
    \fbox{\parbox{4.71in}{ 
    \normalsize
\textit{Step 2.}  Bob receives input $(pk, ct)$ and computes
\begin{equation*}
(y, u, p) \leftarrow \FirstResponse(pk, ct),
\end{equation*}
where $\FirstResponse()$ is a 
non-uniform polynomially-sized quantum circuit.  

\vskip0.2in

\textit{Step 4.} Bob receives $b'$ from Alice and computes
\begin{equation*}
(x, r) \leftarrow U_{b'} (p),
\end{equation*}
where $U_0$ and $U_1$ are unitary polynomially-sized quantum circuits and $x$ is a qubit register.  Bob measures the register $x$ and stores the result in the classical register $d'$.
}}
    \caption{A model of a quantum adversary
    for Protocol~\textbf{Q} in \cref{fig:pfprot}. The register $p$ may be in a quantum state after Step 2. We require that the circuits $U_0$ and $U_1$ are unitary circuits (i.e., involving no creation of registers or measurements).
}
    \label{fig:cheatquantum}
\end{figure}

\begin{lemma}
\label{lem:gentle}
Let $\zeta$ be a qubit register and let $c$ be a classical register.  Let $\rho$ be a classical-quantum state of $(c,\zeta)$, let $\rho'$ be the state that is obtained from $\rho$ by  measuring $\zeta$ in the computational basis, and suppose that
\begin{eqnarray}
H_{min} ( \zeta \mid c )_{\rho'} & \leq & \epsilon.
\end{eqnarray}
Then,
\begin{eqnarray}
\left\| \rho - \rho' \right\|_1 & \leq & O ( \sqrt{ \epsilon } ).
\end{eqnarray}
\end{lemma}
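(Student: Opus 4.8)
The plan is to use the hypothesis that $\rho$ is classical-quantum to reduce the claim to a statement about single qubits. Write $\rho = \sum_c p_c \, \ketbra{c} \otimes \rho_c$, where each $\rho_c$ is a density operator on the qubit register $\zeta$, say $\rho_c = \begin{pmatrix} a_c & b_c \\ \overline{b_c} & 1-a_c \end{pmatrix}$ in the computational basis. Measuring $\zeta$ in the computational basis erases the off-diagonal entries, so $\rho' = \sum_c p_c \, \ketbra{c} \otimes \begin{pmatrix} a_c & 0 \\ 0 & 1-a_c \end{pmatrix}$, and therefore $\rho - \rho' = \sum_c p_c \, \ketbra{c} \otimes \begin{pmatrix} 0 & b_c \\ \overline{b_c} & 0 \end{pmatrix}$. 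Since the antidiagonal matrix here has eigenvalues $\pm\abs{b_c}$, this gives $\norm{\rho - \rho'}_1 = 2\sum_c p_c \abs{b_c}$.

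Next I would translate the min-entropy hypothesis into these terms. After the measurement, the joint state of $(c,\zeta)$ is a classical probability distribution; a party holding $c$ guesses $\zeta$ optimally by outputting the more likely of the two outcomes conditioned on $c$, succeeding with probability $K = \sum_c p_c \max(a_c, 1-a_c)$. The assumption $H_{min}(\zeta \mid c)_{\rho'} \leq \epsilon$ says precisely $K \geq 2^{-\epsilon}$, hence
\[
\sum_c p_c \min(a_c, 1-a_c) = 1 - K \leq 1 - 2^{-\epsilon} \leq (\ln 2)\,\epsilon .
\]

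The step that links the two computations is positivity of $\rho_c$: it forces $\abs{b_c}^2 \leq a_c(1-a_c) \leq \min(a_c, 1-a_c)$, so $\abs{b_c} \leq \sqrt{\min(a_c,1-a_c)}$. Substituting into the expression for $\norm{\rho-\rho'}_1$ and then applying Jensen's inequality (concavity of $\sqrt{\cdot}$),
\[
\norm{\rho - \rho'}_1 = 2\sum_c p_c \abs{b_c} \leq 2\sum_c p_c \sqrt{\min(a_c, 1-a_c)} \leq 2\sqrt{\sum_c p_c \min(a_c,1-a_c)} \leq 2\sqrt{(\ln 2)\,\epsilon},
\]
which is $O(\sqrt{\epsilon})$, as claimed.

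I do not expect a real obstacle; the argument is short. The only points needing a little care are (i) making sure the decomposition $\rho = \sum_c p_c \ketbra{c}\otimes\rho_c$ is legitimate even when the classical register $c$ has a large or countably infinite alphabet — this is exactly what "classical-quantum state" guarantees, so all sums and the use of Jensen are justified — and (ii) picking a clean elementary bound such as $1 - 2^{-\epsilon} \leq (\ln 2)\,\epsilon$ for $\epsilon \geq 0$ to keep the constant explicit.
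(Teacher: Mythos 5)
Your proof is correct and follows essentially the same route as the paper's: bound the off-diagonal term of each conditional qubit state by positivity ($\abs{b_c}^2 \le a_c(1-a_c)$), translate the min-entropy hypothesis into a bound on $\sum_c p_c \min(a_c,1-a_c)$ via the guessing probability, and handle the classical register by concavity of the square root (your explicit Jensen step is exactly what the paper invokes for the general case). Your write-up is somewhat more explicit about the constants and the averaging over $c$, but it is the same argument.
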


\begin{proof}
In the case in which $c$ is a trivial register, we can write $\rho$ in the form
\begin{eqnarray}
\rho & = & \left[ \begin{array}{cc}
\alpha & \beta \\ \overline{\beta} & \gamma 
\end{array} \right]
\end{eqnarray}
with $\left| \beta \right|^2 \leq \alpha \gamma$.  Since $\max \{ \alpha, \gamma \}$ is lower bounded by $1 - 2^{-\epsilon} = 1 - O ( \epsilon)$ by assumption, we have
\begin{eqnarray}
\left\| \rho - \rho' \right\|_1 & = & \left\| \left[ \begin{array}{cc}
0 & \beta \\ \overline{\beta} & 0
\end{array} \right] \right\|_1
 \\
& = & 2 \left| \beta \right| \\
& \leq & \sqrt{ (1 - O ( \epsilon ) ) O ( \epsilon ) } \\
& \leq & O ( \sqrt{ \epsilon }).
\end{eqnarray}
The general case follows from the concavity of the square root function.
\end{proof}

\begin{proof}[Proof of \cref{thm:oneshotrand}]
We consider the experiments shown in Figures~\ref{fig:t0}--\ref{fig:r1}.  The probability that Bob succeeds in Protocol~$\mathbf{Q}$ is the average of the following four quantities:
\begin{itemize}
    \item The probability that $d=d'$ in Experiment~$\mathbf{T}_0$.
    
    \item The probability that
    $d \neq d'$ in Experiment~$\mathbf{T}_1$.
    
    \item The probability that $d=d'_0$ in Experiment~$\mathbf{R}_0$.
    
    \item The probability that $d=d'_0$ in Experiment~$\mathbf{R}_1$.
\end{itemize}
We are therefore interested in proving bounds on these four quantities.

\begin{figure}[ht]
    \centering
    \fbox{\parbox{4.71in}{ \normalsize
\textbf{Experiment} $\mathbf{T}_0$: \\

\begin{enumerate}
\item Let $b := 0$.

\item Compute $(pk, s, t) \leftarrow \Gen_J ()$ and $ct \leftarrow \Encrypt_J (pk, b)$.  

\item Compute $(y, u, p ) \leftarrow \FirstResponse( pk, ct )$.

\item Compute $x_0 \coloneqq \Invert ( A, y, t)$
    and $x_1 \coloneqq \Invert ( A, y + v , t )$ and
    \begin{equation}
        d \coloneqq u \cdot ([x_0] \oplus [x_1]).
    \end{equation}

\item Compute $(x,r) \leftarrow U_1 ( p )$.

\item Let $b' := 1$.  Measure the qubit $x$ in the computational basis, and store the result in a bit register $d'$.
\end{enumerate}
}}
\caption{A version of the encrypted CHSH game with specific inputs: $b = 0, b'=1$.}
\label{fig:t0}
\end{figure}

\begin{figure}[ht]
    \centering
    \fbox{\parbox{4.71in}{ \normalsize
\textbf{Experiment} $\mathbf{T}_1$: \\

\begin{enumerate}
\item Let $b := 1$.  
\end{enumerate}

Steps $2-6$ are the same as in Experiment~$\mathbf{T}_1$.
}}
\caption{A version of the encrypted CHSH game with specific inputs: $b = 1, b'=1$.}
\label{fig:t1}
\end{figure}

\begin{figure}[ht]
    \centering
    \fbox{\parbox{4.71in}{ \normalsize
\textbf{Experiment} $\mathbf{R}_0$: \\

\begin{enumerate}
\item Let $b := 0$.

\item Compute $(pk, s, t) \leftarrow \Gen_J ()$ and $ct \leftarrow \Encrypt_J (pk, b)$.  

\item Compute $(y, u, p ) \leftarrow \FirstResponse( pk, ct )$.

\item Compute $x_0 \coloneqq \Invert ( A, y, t)$
    and $x_1 \coloneqq \Invert ( A, y + v , t )$ and
    \begin{equation}
        d \coloneqq u \cdot ([x_0] \oplus [x_1]).
    \end{equation}

\item Compute $(x, r) \leftarrow U_0 (p)$.

\item Measure the qubit $x$ in the computational basis, and store the result in a bit register $d'_0$.  

\item Compute $p \leftarrow U_0^{-1} ( x,r)$.

\item Compute $(x,r) \leftarrow U_1 ( p )$.

\item Measure the qubit $x$ in the computational basis, and store the result in a bit register $d'_1$.

\item Compute $b' \leftarrow \{ 0, 1 \}$ and $d' := d'_b$.

\end{enumerate}
}}
\caption{A rewinding experiment using the procedures from \cref{fig:cheatquantum}.}
\label{fig:r0}
\end{figure}

\begin{figure}[ht]
    \centering
    \fbox{\parbox{4.71in}{ \normalsize
\textbf{Experiment} $\mathbf{R}_1$: \\

\begin{enumerate}
\item Let $b:=1$.  
\end{enumerate}

Steps $2-10$ are the same as in Experiment~$\mathbf{R}_0$.
}}
\caption{A modified version of Experiment~$\mathbf{R}_0$.  The only difference is that $ct$ is an encryption of the bit $1$ instead of $0$.}
\label{fig:r1}
\end{figure}

We begin by considering Experiments~$\mathbf{R}_0$ and $\mathbf{R}_1$. Let $C_0$ denote the probability of the event $d'_0 \neq d'_1$ in Experiment~$\mathbf{R}_0$, and let $C_1$ denote the probability of the same event in Experiment~$\mathbf{R}_1$.
Note that the quantities $C_0$ and $C_1$ must be at most negligibly different, since otherwise one could use the procedure in Steps 3-10 of Experiment $\mathbf{R}_0$ to distinguish between an encryption of $b=0$ and an encryption of $b=1$ with non-negligible probability, thus violating Proposition~\ref{prop:regev2}.

For Experiment~$\mathbf{R}_0$, we have
\begin{equation}
\textnormal{Pr}_{\mathbf{R}_0} [ d=d'_0]
+ \textnormal{Pr}_{\mathbf{R}_0} [ d = d'_1 ] 
\leq  2 - \textnormal{Pr}_{\mathbf{R}_0} [ d'_0 \neq d'_1] =  2 - C_0,
\end{equation}

Similarly, for Experiment~$\mathbf{R}_1$, we have
\begin{equation}
\textnormal{Pr}_{\mathbf{R}_1} [ d=d'_0]
+  \textnormal{Pr}_{\mathbf{R}_1} [ d \neq d'_1 ] \leq  2 - \textnormal{Pr}_{\mathbf{R}_1} [ d'_0 = d'_1]
 =  1 + C_1.
\end{equation}
Note that in Experiment~$\mathbf{R}_0$, Lemma~\ref{lem:gentle} implies that the state of $(pk, ct, d, b, p)$ immediately before Step 5 is within trace distance $O ( \sqrt{\delta} )$ of the state of $(pk, ct, d, b, p)$ immediately before Step 8.  Comparing Experiment~$\mathbf{R}_0$ and Experiment~$\mathbf{T_0}$, we find that
\begin{eqnarray}
\left| \textnormal{Pr}_{\mathbf{R}_0}[d = d'_1]
- \textnormal{Pr}_{\mathbf{T}_0}[d = d'] \right|
& \leq & O ( \sqrt{\delta} ).
\end{eqnarray}
By similar reasoning,
\begin{eqnarray}
\left| \textnormal{Pr}_{\mathbf{R}_1}[d \neq d'_1]
- \textnormal{Pr}_{\mathbf{T}_1}[d \neq d'] \right|
& \leq & O ( \sqrt{\delta} ).
\end{eqnarray}
Therefore, the probability of success in Protocol~$\mathbf{Q}$ is upper bounded as follows:
\begin{eqnarray*}
&& \frac{1}{4} \left( \textnormal{Pr}_{\mathbf{T}_0}[d = d'] + 
\textnormal{Pr}_{\mathbf{R}_0}[d = d'_0] + 
\textnormal{Pr}_{\mathbf{R}_1}[d = d'_0] + 
\textnormal{Pr}_{\mathbf{T}_1}[d \neq d'] \right)
\\
& \leq & 
\frac{1}{4} \left( \textnormal{Pr}_{\mathbf{R}_0}[d = d'_1] + 
\textnormal{Pr}_{\mathbf{R}_0}[d = d'_0] + 
\textnormal{Pr}_{\mathbf{R}_1}[d = d'_0] + 
\textnormal{Pr}_{\mathbf{R}_1}[d \neq d'_1] + O ( \sqrt{\delta} ) \right) \\
& \leq & \frac{3}{4} + (C_1 - C_0)/2 + O ( \sqrt {\delta} ) \\
& \leq & \frac{3}{4} + \negl + O ( \sqrt { \delta } ),
\end{eqnarray*}
as desired.
\end{proof}

\section{Semi-quantum money}\label{app:money}

The formal definition of a 1-of-2 puzzle is as follows.
\begin{definition}[{1-of-2 puzzle~\cite[Definition 12]{radian2022semi}}]
A 1-of-2 puzzle is a four-tuple of efficient algorithms $ (G,O,S,V)$: the puzzle generator $G$, an obligation algorithm $O$, a 1-of-2 solver $S$, and a verification algorithm $V$. $G$ is a classical randomized algorithm, $V$ is a classical deterministic algorithm, and $O$ and $S$ are quantum algorithms.
\begin{enumerate}
    \item $G$ receives security parameter $1^\lambda$ as input and outputs a random puzzle $p$ and verification key $v$: $(p,k)\leftarrow G(1^\lambda)$.
    \item $O$ receives a puzzle $p$ as input and outputs a classical string $o$ called the obligation string and a quantum state $\rho$: $(o,\rho)\leftarrow O(p)$.
    \item $S$ receives $p,o,\rho$ and a bit $b\in \{0,1\}$ as input and outputs a classical string $a$: $a\leftarrow S(p,o,\rho,b)$.
    \item $V$ receives $p,k,o,b,a$ as input and outputs $0$ or $1$: $V(p,k,o,b,a)\in \{0,1\}$.
\end{enumerate}

Completeness. Let $\eta\colon \mathbb{N}\to[0,1]$. We say that a 1-of-2 puzzle has completeness $\eta$ if there exists a negligible function $\negl$ such that
\begin{equation}
    \prob[V(p,k,o,b,a) = 1] \geq \eta(\lambda) - \negl,
\end{equation}
where the probability is over $(p,k) \leftarrow G(1^\lambda)$, $(o,\rho) \leftarrow O(p)$, $b\leftarrow \{0,1\}$, and $a \leftarrow S(p,o,\rho,b)$.

Soundness (hardness). Let $h\colon \mathbb{N} \to [0,1]$. We say that the 1-of-2 puzzle $Z$ has soundness error $h$ if for any $\poly$-time quantum algorithm $T$, there exists a negligible function $\negl$ such that
\begin{equation}
    \prob[V(p,k,o,0,a_0) = V(p,k,o,1,a_1)  = 1] \leq h(\lambda) + \negl,
\end{equation}
where the probability is over $(p,k)\leftarrow G(1^\lambda)$ and $(o,a_0,a_1)\leftarrow T(p)$.
\end{definition}

We now prove \cref{prop:quantum_money}, which is restated below for convenience.
\begin{proposition}\label{prop:quantum_money_restated}
    Suppose that the $\LWE_{n,q,G ( \sigma, \tau )}$ problem is hard. Then $\mathbf{Z} = (G,O,S,V)$ defined in \cref{fig:puzzle} is a 1-of-2 puzzle with completeness $\cos^2 \left( \frac{\pi}{8} \right) - \frac{5m \sigma^2}{q^2}  - \frac{m \sigma}{ 2 \tau }$ and soundness error $\frac{1}{2} + \negl$.
\end{proposition}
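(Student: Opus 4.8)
The plan is to prove the two claims — completeness and soundness error — separately, each by reducing to a result already established in the paper, since the puzzle $\mathbf{Z}$ is assembled directly from the pieces of Protocol~$\mathbf{Q}$ and its honest prover.

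For completeness, I would observe that an honest execution of the puzzle is, step for step, an honest execution of Protocol~$\mathbf{Q}$ (\cref{fig:pfprot}) with the honest quantum prover of \cref{fig:prover}: $G(1^\lambda)$ performs Alice's step~1 and records $k=(s,t,b)$; $O(p)$ performs Bob's step~2, producing the obligation $o=(y,u_1,\ldots,u_{nQ})$ and the residual single qubit $\rho=L$; $S(p,o,\rho,b')$ performs Bob's step~4 on challenge bit $b'$; and $V(p,k,o,b',d')$ performs Alice's step~5 followed by the scoring rule of step~6. Hence, with $b'\leftarrow\{0,1\}$, the quantity $\prob[V(p,k,o,b',d')=1]$ is exactly the honest success probability in Protocol~$\mathbf{Q}$, and \cref{thm:comp} yields the stated bound $\cos^2(\pi/8)-\frac{5m\sigma^2}{q^2}-\frac{m\sigma}{2\tau}$ with no negligible slack needed.

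For the soundness error, the key step is the combinatorial observation that the event in the soundness definition forces $a_0\oplus a_1$ to equal the encrypted bit $b$. Writing $d=d(p,k,o)$ for the bit that $V$ computes from the trapdoor, $V(p,k,o,0,a_0)=1$ means $d\oplus a_0=b\wedge 0=0$, i.e. $a_0=d$, while $V(p,k,o,1,a_1)=1$ means $d\oplus a_1=b\wedge 1=b$, i.e. $a_1=d\oplus b$; adding these, both verifications succeeding implies $a_0\oplus a_1=b$. Therefore, for any $\poly$-time quantum $T$ with $(p=(pk,ct),k=(s,t,b))\leftarrow G(1^\lambda)$ and $(o,a_0,a_1)\leftarrow T(p)$,
\[
\prob[V(p,k,o,0,a_0)=V(p,k,o,1,a_1)=1]\;\leq\;\prob[a_0\oplus a_1=b].
\]
To bound the right-hand side I would construct the IND-CPA adversary $\mathcal{B}$ against the scheme $J$ that, on input $(pk,ct)$, runs $(o,a_0,a_1)\leftarrow T(pk,ct)$ and outputs $a_0\oplus a_1$. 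Since $G$ discards $s$ and $t$, the view handed to $T$ is exactly a public key together with an encryption of $b$, so $\mathcal{B}$ is a legitimate $\poly$-time quantum attacker and \cref{prop:regev2} gives $\prob[a_0\oplus a_1=b]=\prob[\mathcal{B}(pk,ct)=b]\le\frac12+\negl$, establishing soundness error $\frac12+\negl$.

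I do not expect a substantive obstacle here: both halves are essentially definitional translations. The one point that warrants care is checking, in the final reduction, that the soundness game reveals to $T$ nothing correlated with $b$ beyond the ciphertext $ct$ — in particular that the trapdoor $t$ produced inside $G$ never leaves the verification key $k$ — so that $\mathcal{B}$'s input distribution coincides with that of \cref{prop:regev2}; granting this, the only quantitative ingredient is the completeness bound imported verbatim from \cref{thm:comp}, and the soundness bound is essentially tight because the CHSH-style scoring makes knowledge of $b$ the sole obstruction to answering both challenges under a single obligation.
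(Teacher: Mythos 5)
Your proposal is correct and follows essentially the same route as the paper: completeness is obtained by identifying an honest puzzle execution with an honest run of Protocol~$\mathbf{Q}$ and invoking \cref{thm:comp}, and soundness by noting that passing both challenges under one obligation forces $a_0\oplus a_1=b$ and then reducing to \cref{prop:regev2} via the adversary that outputs $a_0\oplus a_1$. The paper phrases the soundness step as a proof by contradiction, but the underlying reduction and inequalities are the same as yours.
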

\begin{proof}
    To prove correctness, observe that the probability of success of the player in $\mathbf{Z}$ is the same as the probability of success of the honest prover in the proof of quantumness protocol~$\mathbf{Q}$. This is because $G$ is simply Alice's actions in the first round of~$\mathbf{Q}$, $O$ and $S$ are Bob's actions in the first and second rounds of interaction in Protocol~$\mathbf{Q}$ respectively, and $V$ is the scoring function of Protocol~$\mathbf{Q}$. Additionally the challenge bit $b'$ is sampled uniformly randomly in both settings. Hence, \cref{thm:comp} implies that the completeness of $\mathbf{Z}$ is $\cos^2 \left( \frac{\pi}{8} \right) - \frac{5m \sigma^2}{q^2}  -
    \frac{m \sigma}{ 2 \tau }$.

    We now prove soundness. Let $T$ be a $\poly$-time quantum algorithm which receives a puzzle $p$ as input and outputs $(o, a_0, a_1)$, where $o$ is the obligation string and $a_0$ and $a_1$ are single bits. Showing the soundness claimed in the proposition is equivalent to showing
    \begin{equation}\label{eq:soundness_puzzle}
        \Pr[V(p,k,o,0,a_0) = V(p,k,o,1,a_1) = 1 ] \leq \frac{1}{2} + \negl,
    \end{equation}
    where the probability is over $(p,k) \leftarrow G(1^\lambda)$ and $(o,a_0,a_1) \leftarrow T(p)$.
    
    Assume for the sake of contradiction that
\begin{equation}\label{eq:money_assumption_contradiction}
        \Pr[V(p,k,o,0,a_0) = V(p,k,o,1,a_1) = 1] > \frac{1}{2} + \mu(\lambda),
    \end{equation}
    where the probability is over $(p,k) \leftarrow G(1^\lambda)$ and $(o,a_0,a_1) \leftarrow T(p)$, and $\mu$ is some non-negligible function.

    We can use $T$ to construct a $\poly$-time algorithm $B$ that breaks Regev's encryption scheme as follows.
    \begin{enumerate}
        \item $B$. Input: $(pk,ct)$. 
        \item $(o, a_0, a_1) \leftarrow T(p = (pk,ct))$
        \item Output: $a_0\oplus a_1$.
    \end{enumerate}

    Now, we have
    \begin{equation*}
    \begin{aligned}
        &\prob[B(pk,ct) = b \mid pk \leftarrow \Gen_J (), b \leftarrow \{ 0, 1 \}, ct \leftarrow \Encrypt_J ( pk, b )]
        \\
        =& \prob\Bigl[a_0 \oplus a_1 = b \bigm| \substack{pk \leftarrow \Gen_J (), b \leftarrow \{ 0, 1 \}, ct \leftarrow \Encrypt_J ( pk, b ),
        \\
        (o, a_0, a_1) \leftarrow T(pk,ct)}\Bigr] 
        \\
        =& \prob\Bigl[a_0 \oplus a_1 = b \bigm| (p=(pk,ct),k=(s,t,b)) \leftarrow G(1^\lambda),
        (o, a_0, a_1) \leftarrow T(pk,ct)\Bigr]
        \\
        \geq& \prob\Bigl[a_0 = d(p,k,o) \text{ and } a_1 = d(p,k,o)\oplus b \bigm| (p,k) \leftarrow G(1^\lambda),
        (o, a_0, a_1) \leftarrow T(p)\Bigr]
        \\
        =&  \Pr[V(p,k,o,0,a_0) = V(p,k,o,1,a_1) = 1 \mid (p,k) \leftarrow G(1^\lambda),
        (o, a_0, a_1) \leftarrow T(p)] 
        \\
        >& \frac{1}{2} + \mu(\lambda),
    \end{aligned}
    \end{equation*}
    where the first line follows from the definition of $B$, the second line follows from the definition of $G$, we recall the definition of $d(p,k,o)$ from \cref{fig:puzzle} in the third line, the fourth line follows from the definition of $V$, and the last line is our assumption in \cref{eq:money_assumption_contradiction}. 
    
    But $B$ is a $\poly$-time algorithm because $T$ is a $\poly$-time algorithm. Therefore, the last inequality contradicts \cref{prop:regev2} under the LWE hardness assumption made in the statement of the proposition. Therefore, \cref{eq:soundness_puzzle} must hold and the soundness of $\mathbf{Z}$ follows.
    
\end{proof}

It is straightforward to convert $\mathbf{Z}$ into a strong 1-of-2 puzzle --- and hence a semi-quantum money scheme via \cite{radian2022semi} --- assuming the following technical conjecture which shows that the soundness error decays exponentially with the number of repeats $\ell\in \mathbb{N}$ in threshold parallel repetition. This conjecture is known to hold with the word ``quantum'' (emphasized) replaced by ``randomized'', see \cite[Theorem 1.1]{impagliazzo2009chernoff}.\footnote{In \cite[Theorem 1.1]{impagliazzo2009chernoff}, the result is stated for a weakly verifiable puzzle instead of a 1-of-2 puzzle. However, the two types of puzzles have equivalent hardness, see \cite[Fact 19]{radian2022semi}.}

\begin{conjecture}\label{conjecture:quantum_money}
There exist constants $c_1,c_2>0$ such that the following holds. 

Let $\ell\in \mathbb{N}$, $\gamma>0$, and $\mathbf{Z} = (G,O,S,V)$ be a 1-of-2 puzzle. Suppose that $\mathbf{Z}$ is $\frac{1}{2}$-hard, i.e., for any $\poly$-time \emph{quantum} algorithm $T$, there exists a negligible function $\negl$ such that
\begin{equation}
    \prob[V(p,k,o,0,a_0) = V(p,k,o,1,a_1)  = 1] \leq \frac{1}{2} + \negl,
\end{equation}
where the probability is over $(p,k)\leftarrow G(1^\lambda)$ and $(o,a_0,a_1)\leftarrow T(p)$.

Then, for any $\poly$-time \emph{quantum} algorithm $\bar{T}$, we have
\begin{equation}
\begin{aligned}
    &\prob\bigl[\bigl|\{ i \in [\ell] \mid V(p_i,k_i,o_i,0,a_{0,i})= V(p_i,k_i,o_i,1,a_{1,i}) = 1\}\bigr|\bigr] \geq (1 + \gamma)\ell/2]
    \\
    \leq & \frac{c_1}{\gamma} \exp(-c_2 \,  \gamma^2 \, \ell),
\end{aligned}
\end{equation}
where the probability is over $(p_i, k_i)\leftarrow G(1^\lambda)$ for all $i\in [\ell]$ and $$(o_1,\dots,o_\ell,a_{0,1},\dots,a_{0,\ell},a_{1,1},\dots,a_{1,\ell}) \leftarrow \bar{T}(p_1,\dots,p_\ell, k_1,\dots,k_\ell).$$
\end{conjecture}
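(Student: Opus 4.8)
The plan is to establish \cref{conjecture:quantum_money} by transporting the classical Chernoff-type threshold direct product theorem of \cite[Theorem 1.1]{impagliazzo2009chernoff} to the setting of non-uniform quantum polynomial-time adversaries, using the fact (\cite[Fact 19]{radian2022semi}) that a 1-of-2 puzzle is hardness-equivalent to a weakly verifiable puzzle. Concretely, for the $\ell$-fold repetition one declares the win event of instance $i$ to be $W_i \coloneqq \{ V(p_i,k_i,o_i,0,a_{0,i}) = V(p_i,k_i,o_i,1,a_{1,i}) = 1 \}$, which is a deterministic predicate applied to the classical data $(p_i,k_i,o_i,a_{0,i},a_{1,i})$; this is already in weakly-verifiable form, with single-instance value $\frac{1}{2} + \negl$ by hypothesis. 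One then argues by contradiction: if a quantum algorithm $\bar T$ achieves $\prob[\,|\{i : W_i\}| \geq (1+\gamma)\ell/2\,] > (c_1/\gamma)\exp(-c_2\gamma^2 \ell)$, convert it into a quantum algorithm $T$ that wins a single instance with probability bounded away from $\frac{1}{2}$, contradicting the $\frac{1}{2}$-hardness of $\mathbf Z$.

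The reduction mirrors the classical one. On input a puzzle $p \leftarrow G(1^\lambda)$, the single-instance solver $T$ plants $p$ at a uniformly random position $j \leftarrow [\ell]$, samples the remaining $\ell-1$ pairs $(p_i,k_i) \leftarrow G(1^\lambda)$ itself (so it holds every key except $k_j$), runs $\bar T$ once on $(p_1,\dots,p_\ell)$, and then performs the \cite{impagliazzo2009chernoff} ``trust test'': it picks a random trusted set $S \subseteq [\ell]\setminus\{j\}$ of size $\Theta(1/\gamma^2)$, counts how many instances of $S$ were won (possible, since it holds those keys), and, if that count clears the appropriate threshold, outputs $\bar T$'s answer $(o_j,a_{0,j},a_{1,j})$ at position $j$, aborting otherwise. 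The combinatorial heart of \cite{impagliazzo2009chernoff}---the ``trust lemma''---says that, conditioned on passing the trust test, a uniformly random untrusted position is won with probability noticeably above the single-instance optimum; and the quantitative bookkeeping relating $|S|$, the slack $\gamma$, and the target failure probability $(c_1/\gamma)\exp(-c_2\gamma^2\ell)$ is exactly the martingale/Chernoff computation that fixes the constants $c_1,c_2$. That part I would import essentially verbatim.

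The step I expect to demand the most care is verifying that nothing in this argument silently uses classicality of $\bar T$ in a way a quantum adversary could evade. The reassuring point is that $\bar T$'s \emph{output is classical}: $T$ invokes $\bar T$ in a strictly black-box, single-shot manner, and every subsequent operation---embedding, inspecting the trusted set, deciding whether to abort, reading off position $j$---touches only classical registers, so there is no quantum state to disturb and no rewinding of $\bar T$ is required. What still has to be checked is: (i) $T$ remains non-uniform quantum polynomial-time (it does: $\ell \leq \poly$, $G$ is efficient, and $|S| = \Theta(1/\gamma^2)$ is constant); (ii) the embedding argument is sound against non-uniform quantum advice, which is standard because the $\ell$ instances are i.i.d.; and (iii)---the genuinely fiddly point---that the \cite[Fact 19]{radian2022semi} equivalence between 1-of-2 puzzles and weakly verifiable puzzles threads through the \emph{threshold} parallel repetition, i.e.\ that translating the challenge-bit wrapper does not change which count of won instances is being controlled. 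One can sidestep (iii) by running the \cite{impagliazzo2009chernoff} argument directly on the puzzle $\mathbf Z$ of \cref{fig:puzzle}, since $W_i$ above is already weakly verifiable, at the cost of re-deriving the trust lemma in place. Either way I do not see a fundamental quantum obstruction; the reason to state this only as a conjecture is that a careful quantum-secure treatment of \cite{impagliazzo2009chernoff}-style threshold amplification, wrapped around 1-of-2 puzzles and non-uniform QPT adversaries, does not appear to have been written down.
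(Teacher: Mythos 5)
First, a point of order: the paper does not prove this statement --- it is stated precisely as \cref{conjecture:quantum_money}, with only the remark that the classical analogue follows from \cite[Theorem~1.1]{impagliazzo2009chernoff} together with \cite[Fact~19]{radian2022semi}. So there is no proof in the paper to compare against, and your proposal must stand on its own; as written it does not close the gap. The central problem is your claim that the reduction invokes $\bar{T}$ ``in a strictly black-box, single-shot manner.'' That is not how the classical argument of \cite{impagliazzo2009chernoff} works, and the single-run variant you describe fails even classically. In the contradiction hypothesis the $\ell$-fold threshold-success probability $\varepsilon$ is only assumed to exceed $(c_1/\gamma)e^{-c_2\gamma^2\ell}$, which for $\ell=\lambda$ may be exponentially small; the probability of passing your trust test is of the same order, so a reduction that runs $\bar{T}$ once and aborts (or guesses) when the test fails has unconditional single-instance advantage at most roughly $\varepsilon$, which is perfectly consistent with $\tfrac{1}{2}+\negl$ and yields no contradiction. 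The classical proof converts the \emph{conditional} advantage into an unconditional one precisely by repeatedly rerunning the $\ell$-fold solver on freshly resampled completions until a run passes the trust/threshold check, with a number of repetitions (and hence reduction running time) that grows with $1/\varepsilon$; this is exactly the machinery you removed. And this repetition is where the genuinely quantum issues live: (i) a non-uniform quantum polynomial-time $\bar{T}$ may carry quantum advice that is consumed by a single execution, so ``run it afresh $\mathrm{poly}(1/\varepsilon)$ times'' is not automatic; (ii) because the reduction's running time scales with $1/\varepsilon$ while the hardness hypothesis only controls polynomial-time adversaries up to $\negl$, the quantitative bookkeeping relating $\varepsilon$, $\gamma$, $\ell$ and the reduction's time budget cannot simply be ``imported essentially verbatim'' --- it is the content of the theorem, and it is also why one should expect the provable conclusion to carry an additive $\negl$ slack rather than the bare bound $(c_1/\gamma)e^{-c_2\gamma^2\ell}$ stated in the conjecture.

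Your closing sentence --- that a careful quantum-secure treatment of \cite{impagliazzo2009chernoff}-style threshold amplification has not been written down --- is an accurate self-assessment: what you have is a plausible plan, not a proof, and the steps you defer (the trust lemma against quantum adversaries, advice reuse, the time/advantage accounting) are exactly the ones that keep the statement conjectural. One smaller remark: you silently gave $\bar{T}$ only the puzzles $p_1,\dots,p_\ell$, whereas the conjecture as printed also hands it the keys $k_1,\dots,k_\ell$. That is presumably a typo, and it is worth flagging explicitly rather than fixing tacitly: with the keys, the conclusion is false for the paper's own puzzle $\mathbf{Z}$ of \cref{fig:puzzle}, since an adversary knowing $(s,t,b)$ can compute $d(p,k,o)$ for an obligation of its choice and win every coordinate with certainty.
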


We now show that the following 1-of-2 puzzle $\mathbf{Z}^\ell$ defined from $\mathbf{Z}$ is a strong 1-of-2 puzzle assuming \cref{conjecture:quantum_money}.

\begin{definition}[Threshold parallel repetition of 1-of-2 puzzle $\mathbf{Z}$] Let $\mathbf{Z} = (G,O,S,V)$ be the 1-of-2 puzzle defined in \cref{fig:puzzle}. Let $\ell = \lambda \in \mathbb{N}$ and $\alpha \in (3/4, \cos^2(\pi/8))$. We define the 1-of-2 puzzle $\mathbf{Z}^\ell = (G^\ell, O^\ell,S^\ell,V^\ell)$ as follows.
\begin{enumerate}
\item $G^\ell$ receives security parameter $1^\lambda$ as input and runs $G(1^\lambda)$ $\ell$ times to sample $\ell$ random puzzles and their verification keys: $$((p_1,\dots,p_\ell),(k_1,\dots,k_\ell))\leftarrow G^n(1^\lambda),$$
where $(p_i,v_i)\leftarrow G(1^\lambda)$ for all $i\in [\ell]$.
\item $O^\ell$ receives $\ell$ puzzles as input and outputs $\ell$ obligation strings and $\ell$ quantum states: 
$$((o_1,\dots,o_\ell),(\rho_1\otimes\dots \otimes \rho_\ell))\leftarrow O^\ell(p_1,\dots,p_\ell),$$
where $(o_i,\rho_i)\leftarrow O(p_i)$ for all $i\in [\ell]$.
\item $S^\ell$ receives $\ell$ puzzles, $\ell$ obligation strings, $\ell$ quantum states, and a single bit $b\in \{0,1\}$ as input and outputs $\ell$ classical strings: 
$$(a_1,\dots,a_\ell)\leftarrow S^n((p_1,\dots,p_\ell),(o_1,\dots,o_\ell),\rho_1\otimes \dots \otimes \rho_\ell,b),$$
where $a_i \leftarrow S(p_i,o_i,\rho_i,b)$ for all $i\in [\ell]$.
\item $V^\ell$ receives $\ell$ puzzles, $\ell$ keys, $\ell$ obligation strings, a single bit $b\in \{0,1\}$, and $\ell$ answer strings as input and outputs $0$ or $1$: $$V^\ell((p_1,\dots,p_\ell),(k_1,\dots,k_\ell),(o_1,\dots,o_\ell),b,(a_1,\dots,a_\ell))\in \{0,1\}$$
is equal to 1 if and only if
\begin{equation}\label{eq:verification_threshold}
\bigl|\{ i \in [\ell] \mid V(p_i,k_i,o_i,b,a_i)= 1\}\bigr| \geq \alpha \ell.
\end{equation}
\end{enumerate}
\end{definition}

The definition of $\mathbf{Z}^\ell$ is the same as \cite[Definition 14]{radian2022semi} except the verification algorithm $V^\ell$ checks whether the number of successes exceeds a threshold, see~\cref{eq:verification_threshold}. 

Using $\alpha < \cos^2(\pi/8)$, a simple Chernoff bound implies that $\mathbf{Z}^\ell$ has completeness 1. Using $\alpha > 3/4$, so that $\gamma \coloneqq 4\alpha - 3 > 0$, we can show that $\mathbf{Z}^\ell$ has soundness error $0$ assuming \cref{conjecture:quantum_money} as follows. Let $\bar{T}$ be a $\poly$-time quantum algorithm, then
\begin{equation*}
\begin{aligned}
    &\prob[V^\ell(p,k,o,0,a_{0}) = V^\ell(p,k,o,1,a_1) = 1]
    \\ 
    =&\prob\bigl[\forall b \in \{0,1\}, \bigl|\{ i \in [\ell] \mid V(p_i,k_i,o_i,b,a_i)= 1\}\bigr|  \geq \alpha \ell\bigr]
    \\
    \leq& \prob\bigl[\bigl|\{ i \in [\ell] \mid V(p_i,k_i,o_i,0,a_{0,i})= V(p_i,k_i,o_i,1,a_{1,i}) = 1\}\bigr| \geq (2\alpha-1)\ell\bigr]
    \\
    =& \prob\bigl[\bigl|\{ i \in [\ell] \mid V(p_i,k_i,o_i,0,a_{0,i})= V(p_i,k_i,o_i,1,a_{1,i}) = 1\}\bigr| \geq (1+\gamma)\ell/2 \bigr]
    \\
    \leq& \frac{c_1}{\gamma} \exp(-c_2\, \gamma^2 \ell) 
    \\
    \leq& \negl,
\end{aligned}
\end{equation*}
where all probabilities are over $(p = (p_1,\dots, p_\ell),k = (k_1,\dots, k_\ell))\leftarrow G^\ell(1^\lambda)$ and $(o=(o_1,\dots,o_\ell),a_0 = (a_{0,1},\dots, a_{0,\ell}),a_1 = (a_{1,1},\dots, a_{1,\ell})) \leftarrow \bar{T}(p,k)$, the second-to-last line follows from \cref{conjecture:quantum_money}, and the last line uses $\ell = \lambda$.
\end{document}